\newtheorem{theorem}{Theorem}[section]
\newtheorem{lemma}[theorem]{Lemma}
\newtheorem{definition}{Definition}
\newtheorem{proposition}{Proposition}[theorem]
\newcommand{\RN}[1]{%
  \textup{\uppercase\expandafter{\romannumeral#1}}%
}
\newcommand\blfootnote[1]{%
  \begingroup
  \renewcommand\thefootnote{}\footnote{#1}%
  \addtocounter{footnote}{-1}%
  \endgroup
}
\DeclareMathOperator*{\argmax}{arg\,max}
\title{Playing Coopetitive Polymatrix Games \\ with Small Manipulation Cost
}
\author{
  \textbf{Shivakumar Mahesh} \\
  University of Warwick \\
  United Kingdom\\
  \texttt{shivakumar.mahesh@warwick.ac.uk} \\
   \and 
  \textbf{Nicholas	Bishop} \\
  University of Southampton \\
  United Kingdom\\
  \texttt{nb8g13@soton.ac.uk} \\
  \AND 
  \textbf{Le Cong Dinh} \\
  University of Southampton \\
  United Kingdom\\
  \texttt{lcd1u18@soton.ac.uk} \\
  \and
  \textbf{Long	Tran-Thanh} \\
  University of Warwick \\
  United Kingdom\\
  \texttt{long.tran-thanh@warwick.ac.uk} \\
  
}
\newcommand{\BibTeX}{\rm B\kern-.05em{\sc i\kern-.025em b}\kern-.08em\TeX}
\begin{document}
\maketitle

\begin{abstract}
Iterated coopetitive games capture the situation when one must efficiently balance between cooperation and competition with the other agents over time in order to win the game (e.g., to become the player with highest total utility). Achieving this balance is typically very challenging or even impossible when explicit communication is not feasible (e.g., negotiation or bargaining are not allowed).
In this paper we investigate how an agent can achieve this balance to win in iterated coopetitive polymatrix games without explicit communication. 
In particular, we consider a 3-player repeated game setting in which our agent is allowed to (slightly) manipulate the underlying game matrices of the other agents for which she pays a manipulation cost, while the other agents satisfy weak behavioural assumptions. 
Regarding the objective of the manipulating agent, we first consider three different objectives in the coopetition domain, namely: (i) being the one with the highest total payoff (ii) winning (i.e., being the one with the highest total payoff) with the largest possible margin; (iii) winning with the lowest inefficiency ratio. After these three objectives, we showcase the flexibility of the solution concept introduced in this paper by focusing on an objective for the manipulating agent completely outside the coopeition domain namely: (iv) maximizing social welfare.

In our first contribution, we propose a payoff matrix manipulation scheme and sequence of strategies for our agent that provably guarantees that the utility of any opponent would converge to a value we desire. 
We then use this scheme to design winning policies for our agent. 
We also prove that these winning policies can be found in polynomial running time. 
In the third part of the paper, we turn to demonstrate the efficiency of our framework in several concrete coopetitive polymatrix games, and prove that the manipulation costs needed to win are bounded above by small budgets. For instance, in the social distancing game, a polymatrix version of the lemonade stand coopetitive game, we showcase a policy with an infinitesimally small manipulation cost per round, along with a provable guarantee that, using this policy leads our agent to win in the long-run. Note that our findings can be trivially extended to $n$-player game settings as well (with $n > 3$).

\blfootnote{Preprint. Under review.}
\end{abstract}

\section{Introduction}
\noindent
Coopetition, a portmanteau of cooperation and competition, concerns settings in which strategic agents can stand to gain from coordinating with others, but have the underlying motivation of outperforming their peers~\cite{nalebuff1996co}. For example, consider a scenario in which several businesses are competing to become market leaders. Whilst businesses are locked in direct financial competition with each other, it may be beneficial for certain businesses to collaborate and produce a product better than those currently on the market. In such settings, each agent is tasked with a nuanced problem when it comes who they choose to collaborate with and who they choose to compete against. If an agent chooses to play against everyone, then it will likely be surpassed by opposing agents who decide to team up. On the other hand, if an agent chooses to collaborate with everyone, it cannot expect to out-compete anyone. 

Although originally defined within the context of economics, coopetition now plays a central role in multi-agent learning~\cite{lowe2017multi,wen2021cooperative,ryu2021cooperative,bansal2018emergent}, as well as many other areas of multi-agent systems (MAS)~\cite{panait2005cooperative,phan2020learning,yuan2020cooperative, pant2021strategic}. More specifically, coopetition is often formulated as an iterated game between a number of strategic agents, in which the goal of each agent is not to maximise its own cumulative payoff, but achieve the highest cumulative payoff among all participants. In this sense, the goal of an agent is to out-compete its peers and become the ``winner''.
When communication between agents is explicitly feasible, many MAS based approaches can be used to initiate and maintain these cooperation, ranging from negotiation and bargaining theory, to coalitional game theory and coordination~\cite{panait2005cooperative,tan1993multi}.

However, when explicit communication is not feasible, achieving necessary cooperative behaviours becomes a significantly more difficult problem.
Recently, there has been a line of research investigating whether cooperative behaviours can emerge solely from observing and reacting to the past behaviours of opponents~\cite{dinh2021last,bishop2021guide,phan2020learning}.
The key challenge lies not just in identifying the appropriate opponents to cooperate with, but in knowing when to switch sides as well. 
For example, in the lemonade stand game (LSG)~\cite{zinkevich2011lemonade}, three players simultaneously place their stands on one of the twelve positions uniformly distributed on the shore of a circle-shaped island. The payoff of each player is the sum of the distances between their stand and that of their opponents (a more detailed description of its polymatrix game version, called social distancing game, can be found in Section~\ref{sec:social distancing game}). The goal of each agent is then to win the game, that is, to be the one with the highest total payoff over a finite period of time.
In order to achieve this, an agent must pick one of its two opponents and start cooperating (e.g., by placing their stands at the opposite positions of the circle). By doing so, one can easily prove that the average payoff of the two cooperating players will be significantly higher than that of the third. However, this cooperation alone will not provide a guaranteed win, as the cooperating partner may receive higher payoffs, depending on the strategy employed by the other player. Thus, a key step in this game is to know the right time to switch teams and start cooperating with the other player. By doing so, one might be able to become the one with the highest payoff in the long run. 

Note that although the LSG is a rather simplistic setting, it captures the essence of the abovementioned coopetition dilemma in many real-world applications, ranging from technological battles (e.g., the high-definition optical disc format war between Blu-ray and DVD) and R\&D alliances~\cite{baglieri2012asymmetric}, to environmental politics~\cite{carfi2012global}, and multiplayer video gaming~\cite{samvelyan2019starcraft}, where strategically switching sides and its timing are critical.      



In this work, we seek to address the challenges associated with coopetitive multi-agent settings in the following manner:
We relax the original setting by considering the case where one of the players is keen to sacrifice a (small) portion of their received payoff to modify the other agents' payoff values (e.g., the player donates some of their payoffs to the opponents, or makes some costly effort to reduce the others' payoffs). We refer to this type of action as payoff manipulation, the corresponding cost as the manipulation cost, and the player who performs this as the manipulator. 

%
We also consider different types of opponents for the manipulator. These types differ from each other in term of the assumptions made on their behaviour/power, and for each type we present different policies for the manipulator that exploit their respective behavioural assumptions. 
%
In particular, the three successive behavioural assumptions we make on the opponents, from weakest to strongest are:
\begin{itemize}

\item (\textbf{Consistent agent}) An agent who will eventually learn to play a dominant strategy a large fraction of the time, if one exists. By a dominant strategy, we mean a strategy which is the best response on every time step.
\item (\textbf{Persistent agent}) An agent who will learn (and play) the best fixed action in hindsight when there exists a single fixed action which is highly likely to be the best as the time horizon lengthens. For example, persistent agents perform well on sequences where the best fixed action in hindsight eventually becomes fixed and does not change in future time steps. However, a persistent agent makes no guarantees on sequences where the best fixed action in hindsight may vary erratically irrespective of time. 
\item (\textbf{No-regret agent}) An agent whose expected (average) regret can be made arbitrarily close to zero for a sufficiently large time horizon. Note that the performance discrepancy between a no-regret algorithm and the best fixed action in hindsight vanishes (in expectation) as the time horizon lengthens with no assumptions placed on the structure of payoff sequences encountered by the agent. That is, a no-regret agent always performs well regardless of the actions taken by opponents.
\end{itemize} 
One way to interpret the agent types above is by the assumptions they place on the sequences of payoffs they expect to observe. A consistent agent expects there to be a fixed action which is the best response on \emph{every} time step. Meanwhile, a persistent agent expects sequences where, at some point, the best action on average becomes clear and does not change erratically. On the other hand, a no-regret agent makes no assumptions on the sequences of payoffs it may encounter.
%
In Section~\ref{subsec:batch coordination} we will provide a more detailed discussion on the differences of these opponent types (e.g., an agent who employs the Follow the Leader algorithm \cite{Nicolo06} is persistent but not no-regret).
Note that even the strongest assumption we make is mild and widely used in the game theory and online learning communities. 

To highlight the essence of the coopetition dilemma, this work focuses on the iterated polymatrix game setting, which plays an important role in game theory and multiagent systems~\cite{janovskaja1968equilibrium,howson1972equilibria,eaves1973polymatrix,deligkas2017computing,cai2016zero}. 
In particular, the class of polymatrix games is a natural generalisation of the two-player bimatrix game setting and is efficient in capturing pairwise interactions in multiagent systems due to their succinct representation~\cite{cai2016zero,ortiz2015graphical,deligkas2017computing}. It also has wide interest and applicability, especially in the fields of artificial intelligence and machine learning~\cite{miller1991copositive,erdem2012graph,ma2021polymatrix,skoulakis2021evolutionary,liu2021approximation,irfan2014influence}. 

To better emphasize our key technical findings, we concentrate on the 3-player setting for most of this paper. Note that our techniques can be efficiently extended to the general $n$-player setting as discussed in Section~\ref{sec:conclusions}. 
%
%
Against this background, 
our contributions are as follows:

\begin{itemize}
    \item 
    First, we propose a number of winning policies for the manipulator. In particular, we show that there exist a set of policies, which we call dominance solvable, that can guarantee a win for the manipulator (Theorem \ref{thm1}). Moreover, we show that such policies can be found in polynomial time (Theorem \ref{thm3}). Briefly summarised, our approach relies on the fact that all the opponent types we consider will eventually adopt dominant strategies, if they exist. As a result, the manipulator is always able to manipulate payoffs to ensure that all players adopt dominant strategies that are favourable to the manipulator. That is, the manipulator wins when said dominant strategies are played by the opponents.
    
    \item To address situations where employing dominance solvable policies invokes high manipulation costs, we propose a new set of policies, called batch coordination policies, that provably guarantee low manipulation costs (Theorem \ref{thm4}). Similarly to dominance solvable policies, here we show that batch coordination policies can also be computed in polynomial time (Theorem \ref{thm5}). By design, batch coordination policies naturally leverage the idea of switching sides to beat each opponent separately, one at a time, to ensure smaller manipulation costs over the entire time horizon.
   
    \item Aside from the aforementioned objective to win the game, we also show that variations of our proposed policies can be used to achieve many other objectives such as (i) winning by the largest possible margin; (ii) winning with the lowest inefficiency ratio. After these objectives, we showcase the flexibility of the solution concept introduced in this paper by focusing on an objective for the manipulating agent completely different from winning, namely: (iii) maximizing social welfare. In short, it is possible to achieve many other objectives because the policies we outline correspond to linear feasibility problems, which can be transformed into linear programs in order to consider different objectives (see Section~\ref{sec: additional objectives} for more details).
    \item Finally, we show that our method can be applied to many interesting polymatrix games to ensure that the manipulator wins with low manipulation costs (Sections~\ref{sec:social distancing game}-~\ref{section: battle of budies}).
    In particular, we consider the following games: (i) the social distancing game (a polymatrix version of the LSG (Section~\ref{sec:social distancing game}); (ii) the three-player iterated prisoners' dilemma (Section~\ref{section: prisoner's dilemma}); (iii) the electric vs. petrol futures game (Section~\ref{section: electric petrol futures}); and (iv) the battle of the buddies, a polymatrix generalisation of the battle of the sexes (Section~\ref{section: battle of budies}). 
    We show that for all of these games, that when our policies are employed, the manipulation cost incurred is significantly lower than the total payoff the manipulator achieves. Perhaps most surprisingly, in the case of the social distancing game, we show that the manipulator can make an infinitesimally small manipulation and still win (see Section \ref{sec:social distancing game}).
    
\end{itemize}


\subsection{Related Work}
\label{subsec: related work}
Coopetition has been studied extensively in management science~\cite{nalebuff1997co,wang2013advantage,bigliardi2011successful}, supply chains~\cite{chen2018coopetition,gurnani2007impact,trapp2020maritime}, economics~\cite{eriksson2010partnering,loch2006balancing}, and many other related fields (see, e.g.,~\cite{bengtsson2016systematic} for a more thorough literature review). 
In computer science and artificial intelligence, this problem has only been invetigated recently, but it is gaining rapid popularity (especially in multi-agent systems and multi-agent learning)~\cite{lowe2017multi,wen2021cooperative,ryu2021cooperative,bansal2018emergent}.
However, to the best of our knowledge, neither of the abovementioned work have considered manipulation cost (as we do in our paper), and therefore might not be able to find winning policies with small manipulation costs in our setting.
It is worth noting that the closest to our setting is the work from~\cite{bishop2021guide}, which also considers the problem of payoff matrix manipulation so that the unique Nash equilibrium of the new game is a predefined strategy profile. But it does not consider the coopetitive setting as we do here.

Throughout this paper, the policies we propose consist of the manipulator setting up payoff matrices so that players converge to a winning equilibrium. As a result, in implementing the policies we present, the manipulator is tasked with identifying payoff matrices satisfying a prespecified equilibrium condition. Many similar problems have been studied in the literature, under a number of different names, such as equilibrium design, and inverse game theory.

The design of payoff matrices with prespecified Nash equilibrium was first studied by \cite{bohnenblust1950}, who presented a scheme for constructing zero-sum games with a unique prespecified equilbria. The design of games with unique equilibria has since been investigated for broader classes of games, such as bimatrix games \cite{rag, Kreps1974}, and n-player differentiable concave games \cite{rosen65}.
Within the optimisation commuity, a similar problem of certifying the uniqueness of linear programming solutions has been studied extensively. Appa~\cite{appa2002uniqueness} provides a constructive method for verifying the uniqueness of an LP solution which requires solving an addition linear program. Mangasarian~\cite{mangasarian1979uniqueness} describes a number of conditions which guarantee the uniqueness of an LP solution. In fact, it is one of these conditions that we shall leverage to derive our methods for constructing zero-sum games that have unique solutions. More generally, many other works deal with characterising the optimal solution sets of linear programs \cite{Kantor, Tavas}, but do not typically pay any special attention to uniqueness.

Similar problems have also been studied under the name of inverse game theory, in which one is tasked with identifying utility functions that rationalise the action choices made by agents over a time horizon. In the field of optimal control there is an extensive line of work which considers the task of approximating the underlying objective functions of agents in a control setting, given an open-loop Nash equilibrium \cite{molloy, dynnamiccontrol, zhang}. Meanwhile, \cite{succinct} propose algorithms for identifying utility functions for succinct games which induce a prespecified correlated equilibrium. Similarly, \cite{rationalise}, propose methods for identifying correlated equilibria that rationalise the action history of agents participating in a repeated normal form game. Closely related is the field of inverse reinforcement learning, in which the reward function of an agent is estimated given a sample of trajectories \cite{ng}.

Our work differs from those discussed above in a number of respects. First of all, in all our policies, the equilibrium is explicitly chosen by the manipulator, rather than given as part of a problem definition. Secondly, our work is firmly placed in a coopetitive setting where the payoff matrices chosen have a direct effect on the utility of the manipulator. As a result, the manipulator is faced with not only finding payoff matrices that induce a given equilirbium, but also optimising both the payoff matrices and equilibrium chosen so as to minimise its manipulation cost.

\section{Preliminaries}

To begin, we introduce some basic definitions from game theory through which our problem setting will be formally described. We define a finite normal form three-player general-sum game, $\Gamma$, as a tuple $(\mathcal{N},\mathcal{A}, u)$. We denote the set of players by $\mathcal{N} = \{1,2,3\}$. Each player $i \in \mathcal{N}$ must simultaneously select an action from a finite set $\mathcal{A}_i$. For the sake of brevity, we use $n$, $m$ and $l$ to denote the cardinalities of $\mathcal{A}_{1}$, $\mathcal{A}_{2}$ and $\mathcal{A}_{3}$ respectively. We denote by $\mathcal{A} = \mathcal{A}_1 \times \mathcal{A}_2 \times \mathcal{A}_3$ the set of all possible combinations of actions that may be chosen by the players.  

Furthermore, each player is allowed to randomize their choice of action. In other words, player $i$ can select any probability distribution $s \in \Delta(\mathcal{A}_i)$ over her action set. An action is then selected by randomly sampling according to this distribution. We refer to this set of probability distributions as the set of strategies available to the player. We say that a strategy is pure if it corresponds to the deterministic choice a single action, otherwise we say that a strategy is mixed. 
We denote the strategy chosen by player $1$ by the vector $\textbf{x}$, where $\textbf{x}(i)$ indicates the probability that player $1$ selects action $i$. Similarly, we use $\textbf{y}$ and $\textbf{z}$ to denote the strategies chosen by players 2 and 3 respectively.

After strategies have been selected, player $i$ receives a reward given by her utility function $u_i : \mathcal{A} \rightarrow \mathbb{R}$, which we consider to be a random variable under the probability space $(\mathcal{A},\mathcal{F},\mathbb{P})$ where we define the event space $\mathcal{F}$ to be the power set of $\mathcal{A}$ along with the probability measure $\mathbb{P}$ to be the real-valued function $\mathbb{P} : \mathcal{F} \rightarrow [0,1]$ such that for any $a_1 \in \mathcal{A}_1$, $a_2 \in \mathcal{A}_2$ and $a_3 \in \mathcal{A}_3$,  $\mathbb{P}\big(\{(a_1,a_2,a_3)\}\big) = \textbf{x}(a_1) \cdot\textbf{y}(a_2) \cdot\textbf{z}(a_3)$.

In this paper, we restrict our focus to polymatrix games. That is we assume that the utility function for each agent is of the form $u_{i} = \sum_{i \neq j}u_{ij}$, where $u_{ij}: \mathcal{A}_{i} \times \mathcal{A}_{j} \to \mathbb{R}$ describes the payoff player $i$ receives from its interaction with player $j$. Observe that any three-player polymatrix game can be succinctly represented by six payoff matrices $A^{(i, j)}$, which each correspond to a function $u_{ij}$. Additionally, we let $\; \norm{A}_\infty := \max_{k,l}|A(k,l)|$ denote the infinity norm of a given payoff matrix. 


In what follows, we consider a direct extension of the three-player polymatrix setting, in which player 1 takes the role of a manipulator, and is allowed to alter the payoff matrices $A^{(2,1)}$ and $A^{(3,1)}$. In other words, we assume that player $1$ has control over the payoffs other players receive when interacting with her. Thus, in addition to selecting a strategy, player $1$ is also tasked with specifying the payoff matrices $A^{(2, 1)}$ and $A^{(3,1)}$. We refer to the joint submission of a strategy and payoff matrices as player 1's complete strategy. We denote player 1's complete strategy by the tuple $(\textbf{x},(A^{(i,j)})_{(i,j)\in P})$, where $P$ is the index set $\{(2,1),(3,1)\}$.

We use $A_{0}$ to denote the original payoff matrices of the game before they are altered by player $1$. One can interpret $A_{0}$ as a description of the dynamics of interaction between players, before the manipulator has implemented rules and restrictions. In a realistic setting, player 1 should not be able to modify the original game wherever there is interaction between player 2 and 3 alone. We clearly capture this notion in polymatrix games by specifying that player 1 cannot modify the matrices $A_0^{(2,3)}$ and $A_0^{(3,2)}$.

We assume that there is an associated cost for modifying the payoff matrices, which takes the form $\sum_{(i, j) \in P}\|A^{(i, j)} - A_{0}^{(i, j)}\|_{\infty}$. This cost has a natural interpretation when the manipulator uses monetary incentives in attempt to alter the behaviour of fellow players. More specifically, the cost corresponds to the sum of the maximum monetary payments (or fines) each player can receive, and thus represents, in the worst case, how much the manipulator may need to pay (or charge) in order to implement an altered version of the game.  

With this cost in mind, observe that the expected payoff (or utility) of player 1 is given by the expected payoff it receives when participating in the altered polymatrix game, minus the cost it incurs for altering payoff matrices:
\[\textbf{x}^TA^{(1,2)}\textbf{y} + \textbf{x}^TA^{(1,3)}\textbf{z} - \sum_{(i,j) \in P}{\norm{A^{(i,j)}-A_0^{(i,j)}}_{\infty}}.\] 
In contrast, the expected utility of player $2$ is simply given by the expected payoff it receives from participating in the altered game:
\[\textbf{x}^TA^{(2,1)}\textbf{y} + \textbf{y}^TA^{(2,3)}\textbf{z}.\]
Similarly, the expected payoff of player 3 is given by:
\[\textbf{x}^TA^{(3,1)}\textbf{z} + \textbf{y}^TA^{(3,2)}\textbf{z}.\]


Note that since all three players are employing mixed strategies, the payoff observed by each player may not be the same as the expected payoff. For example, if the players sample actions $(a_1,a_2,a_3)$ from the distributions $(\textbf{x}$, $\textbf{y}$ and $\textbf{z})$, then the utility player 1 observes is
\begin{dmath*}
    u_1(\textbf{x},\textbf{y},\textbf{z}) = A^{(1,2)}(a_1,a_2) + A^{(1,3)}(a_1,a_3) - \sum_{(i,j) \in P}{\norm{A^{(i,j)}-A_0^{(i,j)}}_{\infty}}
\end{dmath*}
Similarly, the utility for player 2 is 
\[u_2(\textbf{x},\textbf{y},\textbf{z}) = A^{(2,1)}(a_1,a_2) + A^{(2,3)}(a_2,a_3) \]
and the observed utility for player 3 follows in a analogous manner. When the strategies used are clear from context we will drop them from notation and use ${u}_{1}$, ${u}_{2}$, ${u}_{3}$. We say player 1 has won the game if her utility is higher than the utilities of other players.  


\section{Problem Setting}
\label{sec:problem setting}

In many cases, a manipulator will engage repeatedly with the same system participants. Additionally, aside from the manipulator, players are often unaware of their own, and others, utility functions and must learn them over time. With these concerns in mind, we consider an iterated (or repeated) version of the setting described above. 

More specifically, we consider a setting in which players engage in the aforementioned polymatrix game repeatedly for $T$ time steps. At each time step $t$, each player is required to commit to a strategy, $\textbf{x}_t$, $\textbf{y}_t$ and $\textbf{z}_t$. In addition, player 1, in her role as manipulator, must select the set of payoff matrices $A_{t}^{(i,j)}$, for $(i, j) \in P$, at each time step. 
%
%
We assume that players 2 and 3 have no initial knowledge of $A_{0}$, but receive feedback, at the end of every time step detailing the payoff they received. More precisely, player $2$ receives feedback $u_{2, t} = u_{2}(\mathbf{x}_{t}, \mathbf{y}_{t}, \mathbf{z}_{t})$, at the end of time step $t$. Player 3 receives feedback in a similar fashion. Therefore, when selecting their strategy in round $t+1$, players $2$ and $3$ have access to a history of feedback (and a history of their own strategy choices) up to time step $t$ to inform their decision. In contrast, whilst also receiving feedback at the end of each time step, we assume that player $1$ has full knowledge of $A_{0}$ prior to the start of play. 

We use $H_{t} = (u_{1, t^{\prime}}, \mathbf{x}_{t^{\prime}})^{t}_{t^{\prime}=1}$ to denote the history observed by player $1$ up to time step $t$. We use the notation $\mathcal{H}_{t}$ to denote the set of all observable histories of length $t$. Given a time horizon $T$, we define a policy $\rho = (\rho_{t})_{t=1}^{T} $ as a sequence of, potentially randomized, mappings $\rho_{t}: \mathcal{H}_{t} \to \Delta(\mathcal{A}_{1}) \times \mathbb{R}^{n \times m} \times \mathbb{R}^{n \times l}$ from feedback histories to complete strategies. In other words, a policy $\rho$ is a specification of which complete strategy to choose given the feedback observed so far.


Generalizing from the single-shot setting, we define the utility of each player in the iterated setting as the time average of their respective utilities in each round. That is:
\begin{equation*}
    U_i(\textbf{x}_t,\textbf{y}_t,\textbf{z}_t)_{t=1}^{T} := \frac{1}{T}\sum_{t=1}^{T}{u_i(\textbf{x}_t,\textbf{y}_t,\textbf{z}_t)} 
\end{equation*}
As before, when the sequence of strategies used by each player is clear from context, we write $U_{1}, U_{2}$ and $U_{3}$ for the sake of brevity.\\


\noindent
\textbf{Winning the game}: In what follows, we consider the following version of manipulator objective (and will discuss the other three versions later in Section~\ref{sec: additional objectives}). We say that player $1$ has won the game, if her utility is the highest. We assume that player 1's goal is to win the game. 
On the other hand, we assume that players $2$ and $3$ are ``consistent" agents in the following sense:

\begin{definition}
(\textbf{Consistent Agent}) Assume that for an agent there exists an action $a^*$ that is the unique best response for her for every round of the game. Suppose that within $T$ rounds of the game, the number of rounds the agent plays action $a^*$ is $T^*$.  We say the agent is consistent if \[\mathbb{P}\Big(\lim_{T\rightarrow\infty}{\frac{T^*}{T}} = 1\Big)=1, \] 
\end{definition}
In other words, if an agent has a single action that performs the best in all rounds, and the proportion of time she plays that action converges to 1 almost surely, then we say she is consistent. If a consistent agent does not have an action that always performs the best, we make no assumption on the behaviour of the player.    
Unless stated otherwise, we restrict our focus to players who are consistent, no matter the strategies submitted by the other players. This assumption is much weaker that the standard assumption of rationality in full information mechanism design settings. In the sections that follow, we will develop a number of policies which guarantee player 1 a winning outcome with high probability under this assumption.


\section{Winning Policies}
\label{sec:policies}
In this section, we present a number of policies which guarantee player 1 a winning outcome with high probability. Before describing these policies in detail, we first present a brief conceptual argument showcasing the underlying idea behind all the policies we present. 
Consider the policy where player $1$ plays the same action $i^{*}$ in every round. Assume that player 2 and player 3 each have strictly dominant actions $j^*$ and $k^*$ respectively against, the action $i^*$ of player 1. That is, $u_{2}(\textbf{e}_{i^{*}},\textbf{e}_{j^{*}},\textbf{e}_{k}) > u_{2}(\textbf{e}_{i^{*}},\textbf{e}_{j},\textbf{e}_{k^{}})$ for all $j \neq j^*$ and $k$ and $u_{3}(\textbf{e}_{i^{*}},\textbf{e}_{j},\textbf{e}_{k^{*}}) > u_{3}(\textbf{e}_{i^{*}},\textbf{e}_{j},\textbf{e}_{k})$ for all $j$ and $k \neq k^*$.
Since both players are consistent, the proportion of time each of them plays their strictly dominant action converges to 1 almost surely. 
Therefore, if $u_{1}(\textbf{e}_{i^{*}}, \textbf{e}_{j^{*}}, \textbf{e}_{k^{*}}) \geq \max\{u_{2}(\textbf{e}_{i^{*}},\textbf{e}_{j^{*}},\textbf{e}_{k^{*}}), u_{3}(\textbf{e}_{i^{*}},\textbf{e}_{j^{*}},\textbf{e}_{k^{*}})\}$, then intuitively, player 1 will eventually win if $T$ is large enough. Unfortunately such an action $i^{*}$, which satisfies the above assumptions, may not exist in the original game. However, player 1 can always guarantee the existence of such an action by altering payoff matrices. If player 1 can find a low cost alteration, then she can win the game with high probability. 

We then present another policy of a similar flavor where player $1$ plays the same action $i^{*}$ in every round. We assume that player 2 has a strictly dominant action $j^*$ against the action $i^*$ of player 1, but player 3 only has a strictly dominant action $k^*$ against the action $i^*$ of player 1 and the action $j^*$ of player 2. Therefore if player 3, is an agent who is willing to wait for some action to eventually become her unique best response, then the manipulator can modify the payoff matrices appropriately to ensure that she wins if $T$ is large enough. Therefore in order for such a policy to work successfully, we must make a slightly stronger behavioural assumption on player 3, which leads us to the definition of a persistent agent.

All of the policies we present here combine games constructed to satisfy assumptions similar to those above, with a simple time-dependent deterministic policy. First in Section \ref{sec:designing-dominance-solvable-games} we show how to construct payoff matrices such that actions $j^{*}$ and $k^{*}$ are strictly dominant for players 2 and 3, under the assumption that player 1 uses action $i^{*}$. In Section \ref{sec:dominance-solvable-policies}, we present the class of dominance solvable policies, which consist of stationary policies leveraging the methodologies developed in the previous section. Lastly, we present the class of batch coordination policies, which spend half of the time horizon cooperating with one player, and the other half cooperating with the other.

\subsection{Designing Dominance Solvable Games}
\label{sec:designing-dominance-solvable-games}
Here, we describe several constructions of three player games which will be used extensively in our definitions for different kinds of policies. In particular, we show how to find a matrix $A^{(2,1)}$ (or $A^{(3,1)}$) such that a particular action for player 2 (or 3) is strictly dominant against all actions of player 3 (or 2) and a particular action of the manipulator. We also show how to find a matrix $A^{(3,1)}$ (or $A^{(2,1)}$) such that a particular action for player 3 (or 2) is strictly dominant against a particular action of player 2 (or 3) and a particular action of the player 1. For the sake of brevity we refer to players 1, 2 and 3 by P1, P2 and P3 respectively. 
%
%
Let $\textbf{x}$ be the fixed strategy of the manipulator. To ensure that P2 has a strictly dominant strategy $\textbf{e}_{j^*}$ against $\textbf{x}$ and all actions of P3, For some $v_{2} \in \mathbb{R}^l$, (with $v_{2,k}$ indicating the $k^{\text{th}}$ element of $v_2$) we must choose a matrix $A^{(2,1)}$ that satisfies the system
\begin{equation}
    \label{eq:1}
    \begin{aligned}
        [\textbf{x}^TA^{(2,1)}\textbf{e}_j+\textbf{e}_j^TA^{(2,3)}\textbf{e}_k] =  v_{2,k}\;\; \forall k\in[l] \text{ and } j=j^*\\
        [\textbf{x}^TA^{(2,1)}\textbf{e}_j+\textbf{e}_j^TA^{(2,3)}\textbf{e}_k] <  v_{2,k}\;\; \forall k\in[l] \text{ and } j \neq j^*
    \end{aligned}
\end{equation}


By symmetry, to ensure that P3 has a strictly dominant strategy $\textbf{e}_{k^*}$ against $\textbf{x}$ and all actions of P2, for some $v_3 \in \mathbb{R}^m$ we must choose a matrix $A^{(3,1)}$ that satisfies the system 
\begin{equation}
    \label{eq:2}
    \begin{aligned}
        [\textbf{x}^TA^{(3,1)}\textbf{e}_k+\textbf{e}_j^TA^{(3,2)}\textbf{e}_k] =  v_{3,j}\;\; \forall j\in[m] \text{ and } k=k^*\\
        [\textbf{x}^TA^{(3,1)}\textbf{e}_k+\textbf{e}_j^TA^{(3,2)}\textbf{e}_k] <  v_{3,j}\;\; \forall j\in[m] \text{ and } k \neq k^*
    \end{aligned}
\end{equation}

Now further suppose that P2 plays the fixed strategy $\textbf{y}$. In order to make $\textbf{e}_{k^*}$ the dominant strategy against the strategies $\textbf{x}$ and $\textbf{y}$ of P1 and P2 respectively, for some $v_0 \in \mathbb{R}$ we must choose a matrix $A^{(3,1)}$ 
that satisfies the system
\begin{equation}
    \label{eq:3}
    \begin{aligned}
    [\textbf{x}^TA^{(3,1)}\textbf{e}_k+\textbf{y}^TA^{(3,2)}\textbf{e}_k] =  v_{0}\;\; k=k^*\\
    [\textbf{x}^TA^{(3,1)}\textbf{e}_k+\textbf{y}^TA^{(3,2)}\textbf{e}_k] <  v_{0}\;\; k \neq k^*
    \end{aligned}
\end{equation}

The following lemma shows that strategy profiles satisfying systems \eqref{eq:1} and \eqref{eq:2} always exist.

\begin{proposition}
\label{prop1}
Fix $i^* \in [n], \; j^* \in [m] \text{ and } k^* \in [l]$ with $\textup{\textbf{x}} = \textbf{e}_{i^*}$. Matrices $A^{(2,1)}$ and $A^{(3,1)}$ that satisfy the systems \eqref{eq:1} and \eqref{eq:2} exist.
\end{proposition}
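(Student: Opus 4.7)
The plan is to prove existence by explicit construction. I will handle system \eqref{eq:1} first, then system \eqref{eq:2} follows by a symmetric argument.

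Since $\mathbf{x} = \mathbf{e}_{i^*}$, the quantities $\mathbf{x}^T A^{(2,1)} \mathbf{e}_j$ reduce to the single entry $A^{(2,1)}(i^*, j)$, so the only entries of $A^{(2,1)}$ that affect system \eqref{eq:1} are those of the $i^*$-th row. The remaining entries can be chosen freely (e.g., copied from $A_0^{(2,1)}$ to minimise cost). Rewriting the constraints in \eqref{eq:1}, the equalities simply \emph{define} $v_{2,k} := A^{(2,1)}(i^*, j^*) + A^{(2,3)}(j^*, k)$, so the nontrivial part is to satisfy the strict inequalities, which are equivalent to
\begin{equation*}
    A^{(2,1)}(i^*, j) - A^{(2,1)}(i^*, j^*) < A^{(2,3)}(j^*, k) - A^{(2,3)}(j, k) \quad \forall\, j \neq j^*,\ \forall\, k \in [l].
\end{equation*}

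For each $j \neq j^*$, define $c_j := \min_{k \in [l]} \big(A^{(2,3)}(j^*, k) - A^{(2,3)}(j, k)\big)$. This minimum is taken over a finite set of reals and is therefore a well-defined real number. Setting $A^{(2,1)}(i^*, j^*)$ to any value (say $0$) and choosing, for each $j \neq j^*$, any real number $A^{(2,1)}(i^*, j) < c_j$ (for instance $A^{(2,1)}(i^*, j) = c_j - 1$) simultaneously satisfies every inequality above. Hence system \eqref{eq:1} is satisfied, and $A^{(2,1)}$ exists.

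The construction of $A^{(3,1)}$ satisfying \eqref{eq:2} is entirely analogous: only the $i^*$-th row of $A^{(3,1)}$ matters, the equalities define $v_{3,j} := A^{(3,1)}(i^*, k^*) + A^{(3,2)}(j, k^*)$, and the strict inequalities become $A^{(3,1)}(i^*, k) - A^{(3,1)}(i^*, k^*) < A^{(3,2)}(j, k^*) - A^{(3,2)}(j, k)$ for all $j$ and all $k \neq k^*$, which is again satisfied by making $A^{(3,1)}(i^*, k)$ sufficiently negative relative to $A^{(3,1)}(i^*, k^*)$. I do not foresee any real obstacle here; the statement is essentially a sanity check that the manipulator has enough freedom in the rows of $A^{(2,1)}$ and $A^{(3,1)}$ corresponding to its own chosen action. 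The only mild care needed is to note that $\mathbf{x} = \mathbf{e}_{i^*}$ decouples the two matrices from one another in these constraints, so they can be constructed independently.
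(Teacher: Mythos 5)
Your construction is correct: with $\mathbf{x}=\mathbf{e}_{i^*}$ only row $i^*$ of each matrix enters the constraints, the equalities in \eqref{eq:1} and \eqref{eq:2} merely \emph{define} $v_{2,k}$ and $v_{3,j}$, pushing each entry $A^{(2,1)}(i^*,j)$ (for $j\neq j^*$) below the finite minimum $c_j$ satisfies every strict inequality at once, and the two systems constrain disjoint matrices and so decouple exactly as you say. The paper states Proposition \ref{prop1} without giving an explicit proof, so there is no argument to diverge from; your row-wise construction is the natural one and is consistent with the reasoning the paper uses implicitly when it encodes these same constraints as linear feasibility problems (as in the proof of Lemma \ref{lem5}) and when it later observes that only row $i^*$ of $A^{(2,1)}$ and $A^{(3,1)}$ need be treated as variable.
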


In addition, this result clearly extends to system \eqref{eq:3}, as any matrix satisfying system \eqref{eq:2} satisfies system \eqref{eq:3}.
\begin{proposition}
\label{cor1}
Fix $i^* \in [n], \; j^* \in [m] \text{ and } k^* \in [l]$ with $\textup{\textbf{x}} = \textbf{e}_{i^*}$ and $\textup{\textbf{y}} = \textbf{e}_{j^*}$. Matrices $A^{(2,1)}$ and $A^{(3,1)}$ that satisfy the systems systems \eqref{eq:1} and \eqref{eq:3} exist.
\end{proposition}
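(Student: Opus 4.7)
The plan is to obtain Proposition~\ref{cor1} as an immediate consequence of Proposition~\ref{prop1}, by observing that system~\eqref{eq:3} is nothing more than a single instantiation of system~\eqref{eq:2}. Concretely, I will first invoke Proposition~\ref{prop1} with the same choices of $i^*$, $j^*$, $k^*$ and $\mathbf{x} = \mathbf{e}_{i^*}$, which supplies matrices $A^{(2,1)}$ and $A^{(3,1)}$ together with a vector $v_3 \in \mathbb{R}^m$ such that system~\eqref{eq:2} holds, i.e.
\begin{equation*}
\mathbf{x}^T A^{(3,1)} \mathbf{e}_{k^*} + \mathbf{e}_j^T A^{(3,2)} \mathbf{e}_{k^*} = v_{3,j}
\qquad\text{and}\qquad
\mathbf{x}^T A^{(3,1)} \mathbf{e}_k + \mathbf{e}_j^T A^{(3,2)} \mathbf{e}_k < v_{3,j}
\end{equation*}
for all $j\in[m]$ and $k\neq k^*$.

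I would then set $\mathbf{y} = \mathbf{e}_{j^*}$ and $v_0 := v_{3,j^*}$. Since $\mathbf{y}^T A^{(3,2)} \mathbf{e}_k = \mathbf{e}_{j^*}^T A^{(3,2)} \mathbf{e}_k$, specialising the $j=j^*$ row of the above display immediately yields $\mathbf{x}^T A^{(3,1)} \mathbf{e}_{k^*} + \mathbf{y}^T A^{(3,2)} \mathbf{e}_{k^*} = v_0$ together with the strict inequality $\mathbf{x}^T A^{(3,1)} \mathbf{e}_k + \mathbf{y}^T A^{(3,2)} \mathbf{e}_k < v_0$ for every $k\neq k^*$. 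That is precisely system~\eqref{eq:3}. The same pair of matrices simultaneously satisfies system~\eqref{eq:1} by construction from Proposition~\ref{prop1}, so the proof is complete.

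There is essentially no obstacle here: the content of the claim is the logical observation that requiring a dominance condition to hold for \emph{every} pure $\mathbf{y} = \mathbf{e}_j$ (which is what~\eqref{eq:2} demands) is strictly stronger than requiring it only for the single pure strategy $\mathbf{y} = \mathbf{e}_{j^*}$ (which is what~\eqref{eq:3} demands). Hence Proposition~\ref{prop1} automatically delivers the weaker conclusion, and the only bookkeeping step is identifying the scalar $v_0$ as the $j^*$-th coordinate of the vector $v_3$ produced by Proposition~\ref{prop1}.
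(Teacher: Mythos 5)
Your proof is correct and follows exactly the paper's own route: the paper notes that any matrix satisfying system~\eqref{eq:2} satisfies system~\eqref{eq:3}, and then appeals to Proposition~\ref{prop1}, which is precisely your specialisation of the $j=j^*$ row with $v_0 = v_{3,j^*}$. No gaps; you have merely made the paper's one-line observation explicit.
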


In what follows, we develop policies based on payoff matrices which satisfy systems \eqref{eq:1}-\eqref{eq:3}.
\subsection{Dominance Solvable Policies}
\label{sec:dominance-solvable-policies}
In this section, we introduce the class of dominance solvable policies. In short, dominance solvable policies consist of player 1 playing a constant complete strategy which satisfies a number of the linear systems outlined above. We first introduce type-\RN{1} dominance solvable policies.

\begin{definition}
(\textbf{Dominance Solvable Type-\RN{1} Policy}) Let $\left(A^{(2, 1)}, A^{(3, 1)}\right)$ satisfy systems \eqref{eq:1} and \eqref{eq:2} for some $i^{*} \in [n]$, $j^{*} \in [m]$, $k^{*} \in [l]$. Then, the policy  $\rho_t(H_t) = \left(e_{i^{*}}, A^{(2, 1)}, A^{(3, 1)}\right)$ for $t \in \mathbb{N}$ 
is a dominance solvable type-\RN{1} policy.
\end{definition}

That is, a dominance solvable type-\RN{1} policy is one in which player $1$ plays a constant complete strategy which satisfies systems \eqref{eq:1} and \eqref{eq:2}. Similarly, we define dominance solvable type-\RN{2} policies as those in which player $1$ plays a constant complete strategy which satisfies systems \eqref{eq:1} and \eqref{eq:3}.

\begin{definition}
(\textbf{Dominance Solvable Type-\RN{2} Policy}) Let $\left(A^{(2, 1)}, A^{(3, 1)}\right)$ satisfy systems \eqref{eq:1} and \eqref{eq:3} for some $i^{*} \in [n]$, $j^{*} \in [m]$, $k^{*} \in [l]$. Then, the policy $\rho_t(H_t) = \left(e_{i^{*}}, A^{(2, 1)}, A^{(3, 1)}\right)$ for $t \in \mathbb{N}$ 
is a dominance solvable type-\RN{2} policy.
\end{definition}


If one uses iterated elimination of strictly dominated strategies and there is only one strategy left for each player, the game is called dominance solvable ~\cite{myerson91gametheorybook}. We name the policies described above dominance solvable since the underlying single-shot game that results from these policies is almost dominance solvable. In the game that results from these policies, if we eliminate all the actions of player 1 except $i^*$ and then implement iterated elimination of strictly dominated strategies, there will be only one strategy left for each player. 

We say that a dominance solvable policy is winning if player $1$ wins the corresponding single-shot game when $(\textbf{e}_{i^{*}}, \textbf{e}_{j^{*}}, \textbf{e}_{k^{*}})$ is played:
\begin{equation}
    \label{eq:4}
    \begin{aligned}
       u_1(\textbf{e}_{i^*}, \textbf{e}_{j^*}, \textbf{e}_{k^*}) \geq u_2(\textbf{e}_{i^*}, \textbf{e}_{j^*}, \textbf{e}_{k^*}),\\
       u_1(\textbf{e}_{i^*}, \textbf{e}_{j^*}, \textbf{e}_{k^*}) \geq u_3(\textbf{e}_{i^*}, \textbf{e}_{j^*}, \textbf{e}_{k^*}) 
    \end{aligned}
\end{equation}



Winning dominance solvable type-\RN{1} policies are highly attractive as they allow the manipulator to win in the long-run against consistent agents. This claim is formalized in the following theorem.

\begin{theorem}
\label{thm1}
If the manipulator uses a winning dominance solvable type-\RN{1} policy against consistent agents in an infinitely repeated game then,  \begin{align*}
& \mathbb{P}\Big(U_1(\textbf{x}_t,\textbf{y}_t,\textbf{z}_t)_{t=1}^{\infty} \geq U_2(\textbf{x}_t,\textbf{y}_t,\textbf{z}_t)_{t=1}^{\infty} \text{ and } \\
& U_1(\textbf{x}_t,\textbf{y}_t,\textbf{z}_t)_{t=1}^{\infty} \geq U_3(\textbf{x}_t,\textbf{y}_t,\textbf{z}_t)_{t=1}^{\infty} \Big) = 1
\end{align*}
\end{theorem}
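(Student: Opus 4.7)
The plan is to exploit the fact that, under a winning type-\RN{1} dominance solvable policy, the payoff matrices $A^{(2,1)}$ and $A^{(3,1)}$ chosen by the manipulator are precisely engineered so that $j^{*}$ and $k^{*}$ are unique best responses for players 2 and 3 in \emph{every} round. I would then invoke the consistency assumption twice, combine the two convergence statements, and show that the time-averaged utilities converge to their values at the pure profile $(\mathbf{e}_{i^{*}},\mathbf{e}_{j^{*}},\mathbf{e}_{k^{*}})$, at which the winning condition~\eqref{eq:4} holds.

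First I would verify the premise of the consistency definition. Since the policy is stationary with $\mathbf{x}_{t}=\mathbf{e}_{i^{*}}$ for every $t$, system~\eqref{eq:1} applied with $\mathbf{x}=\mathbf{e}_{i^{*}}$ shows that for any action $\mathbf{e}_{k}$ of P3,
\[
u_{2}(\mathbf{e}_{i^{*}},\mathbf{e}_{j^{*}},\mathbf{e}_{k}) = v_{2,k} > u_{2}(\mathbf{e}_{i^{*}},\mathbf{e}_{j},\mathbf{e}_{k}) \quad \text{for all } j\neq j^{*},
\]
so $j^{*}$ is the unique best response of P2 in every round. The analogous statement follows for P3 via system~\eqref{eq:2}. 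Hence the consistency hypothesis applies: letting $T_{2}^{*}$ and $T_{3}^{*}$ be the number of rounds in which P2 plays $j^{*}$ and P3 plays $k^{*}$ respectively, we have $T_{2}^{*}/T\to 1$ and $T_{3}^{*}/T\to 1$ almost surely.

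Next I would combine these. Let $B_{T}$ be the number of rounds $t\le T$ in which both P2 and P3 play their dominant actions. Then $B_{T}\ge T_{2}^{*}+T_{3}^{*}-T$, so $B_{T}/T\to 1$ almost surely by a union-bound argument on the almost sure events. On the $B_{T}$ "good" rounds, the instantaneous utilities are exactly $u_{1}(\mathbf{e}_{i^{*}},\mathbf{e}_{j^{*}},\mathbf{e}_{k^{*}})$, $u_{2}(\mathbf{e}_{i^{*}},\mathbf{e}_{j^{*}},\mathbf{e}_{k^{*}})$, $u_{3}(\mathbf{e}_{i^{*}},\mathbf{e}_{j^{*}},\mathbf{e}_{k^{*}})$. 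On the remaining $T-B_{T}$ rounds, all three per-round utilities are uniformly bounded in absolute value by some constant $M$, since the action sets are finite and the payoff matrices used by the manipulator are fixed throughout the horizon. Splitting the time average accordingly gives, for each $i\in\{1,2,3\}$,
\[
\left|\,U_{i} - u_{i}(\mathbf{e}_{i^{*}},\mathbf{e}_{j^{*}},\mathbf{e}_{k^{*}})\,\right| \;\le\; 2M\cdot\frac{T-B_{T}}{T},
\]
and the right-hand side tends to $0$ almost surely. Therefore, on the almost sure event that $B_{T}/T\to 1$, the limits of $U_{1}$, $U_{2}$, $U_{3}$ are $u_{i}(\mathbf{e}_{i^{*}},\mathbf{e}_{j^{*}},\mathbf{e}_{k^{*}})$, and the winning condition~\eqref{eq:4} yields the two desired inequalities.

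The only real obstacle is the step of upgrading the two individual almost sure consistency guarantees into the joint statement $B_{T}/T\to 1$ a.s.; this is handled cleanly by intersecting the two probability-one events and using the elementary bound $B_{T}\ge T_{2}^{*}+T_{3}^{*}-T$. Everything else is bookkeeping: checking that the dominance systems imply a unique per-round best response (so the consistency hypothesis is actually in force), bounding the contribution of the vanishing fraction of "bad" rounds using boundedness of payoffs, and applying~\eqref{eq:4} at the limiting pure profile.
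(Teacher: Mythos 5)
Your proposal is correct and follows essentially the same route as the paper: the paper factors the argument through Lemma~\ref{lem1}, which likewise combines the two almost-sure consistency guarantees into joint convergence of the fraction of rounds where both opponents play their dominant actions, sandwiches the time-averaged utilities between bounds involving $U_p^{\text{max}}$ and $U_p^{\text{min}}$ to show convergence to the utilities at $(\mathbf{e}_{i^*},\mathbf{e}_{j^*},\mathbf{e}_{k^*})$, and then invokes the winning condition~\eqref{eq:4}. Your explicit inclusion--exclusion bound $B_T \geq T_2^* + T_3^* - T$ is a slightly more careful justification of the joint-convergence step than the paper gives, but the substance is identical.
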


At times, for the sake of brevity, we use $U_{1}^{\infty}$, $U_{2}^{\infty}$ and $U_{3}^{\infty}$ to denote the long-run utilities $U_1(\textbf{x}_t,\textbf{y}_t,\textbf{z}_t)_{t=1}^{\infty}$, $U_2(\textbf{x}_t,\textbf{y}_t,\textbf{z}_t)_{t=1}^{\infty}$ and $U_3(\textbf{x}_t,\textbf{y}_t,\textbf{z}_t)_{t=1}^{\infty}$ respectively.

For the analogous guarantee on type-\RN{2} policies we assume a slightly stronger behavioural assumption than being 'consistent' on one of the agents. We assume that player 3 is persistent, i.e. if there is some finite-time cutoff point after which there exists an action that always remains the unique best-response in hindsight then she will play that action a large fraction of time.

\begin{definition}
(\textbf{Persistent Agent}) Assume that for an agent there exists an action $k^*$ for which there exists a finite-time cutoff point $T_h$ after which for every single time-step after $T_h$, $k^*$ is the unique best-response in hindsight for this agent. For player 3 this is the condition that,  \[\exists k^{*} \in [l]: \mathbb{P}\Big(\exists T_h \in \mathbb{N}: \forall T \geq T_h : k^* = \argmax_{k \in [l]}{U_3(\textbf{x}_{t},\textbf{y}_{t},\textbf{e}_k)_{t=1}^{T}}\Big)=1\]  Let $T^*$ denote the number of rounds the agent plays action $k^*$ until $T$. We say that the agent is persistent if \[\mathbb{P}\Big(\lim_{T\rightarrow\infty}{\frac{T^*}{T}} = 1\Big)=1\].
\end{definition}
Note that every persistent agent is consistent. We prove this in Proposition \ref{prop2}. The guarantee of winning when using type-\RN{2} policies is exactly the same as type-\RN{1} policies except that we assume one of the players is persistent.

\begin{theorem}
\label{thm2}
If the manipulator uses a winning dominance solvable type-\RN{2} policy against a consistent player 2 and a persistent player 3 in an infinitely repeated game then, \begin{align*}
& \mathbb{P}\Big(U_1(\textbf{x}_t,\textbf{y}_t,\textbf{z}_t)_{t=1}^{\infty} \geq U_2(\textbf{x}_t,\textbf{y}_t,\textbf{z}_t)_{t=1}^{\infty} \text{ and } \\
& U_1(\textbf{x}_t,\textbf{y}_t,\textbf{z}_t)_{t=1}^{\infty} \geq U_3(\textbf{x}_t,\textbf{y}_t,\textbf{z}_t)_{t=1}^{\infty} \Big) = 1
\end{align*}
\end{theorem}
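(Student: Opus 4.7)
The plan is to adapt the argument underlying Theorem~\ref{thm1} to the weaker strict-dominance structure provided by systems~\eqref{eq:1} and~\eqref{eq:3}, using the persistence of player~3 to bridge the gap. Since the manipulator commits to the constant complete strategy $(\textbf{e}_{i^{*}}, A^{(2,1)}, A^{(3,1)})$, the matrix $A^{(2,1)}$ certified by system~\eqref{eq:1} still makes $j^{*}$ strictly dominant for player~2 against every action of player~3 in every round. Hence the consistency of player~2 immediately implies that the empirical frequency with which she plays $j^{*}$ tends to one almost surely, exactly as in the proof of Theorem~\ref{thm1}. In particular, the time-averaged strategy $\bar{\textbf{y}}_{T} := \frac{1}{T}\sum_{t=1}^{T}\textbf{y}_{t}$ converges almost surely to $\textbf{e}_{j^{*}}$.

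The next step is to verify the trigger condition inside the definition of a persistent agent for player~3 with target action $k^{*}$. Because $\textbf{x}_{t} \equiv \textbf{e}_{i^{*}}$, the hindsight payoff of player~3 for playing $\textbf{e}_{k}$ throughout the first $T$ rounds reduces to
\begin{equation*}
U_{3}(\textbf{x}_{t},\textbf{y}_{t},\textbf{e}_{k})_{t=1}^{T} = \textbf{e}_{i^{*}}^{\top} A^{(3,1)}\textbf{e}_{k} + \bar{\textbf{y}}_{T}^{\top} A^{(3,2)}\textbf{e}_{k},
\end{equation*}
which is linear, and hence continuous, in $\bar{\textbf{y}}_{T}$. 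System~\eqref{eq:3} guarantees that at $\bar{\textbf{y}}_{T} = \textbf{e}_{j^{*}}$ the action $k^{*}$ attains the value $v_{0}$ while every other action falls strictly below $v_{0}$. By continuity and the almost-sure convergence $\bar{\textbf{y}}_{T} \to \textbf{e}_{j^{*}}$, on a probability-one event there exists a finite $T_{h}$ past which the strict gap survives for every $T \geq T_{h}$, making $k^{*}$ the unique hindsight best response from $T_{h}$ onwards. This is exactly the hypothesis invoked in the definition of persistence, so the persistence assumption on player~3 yields that the frequency with which she plays $k^{*}$ tends to one almost surely.

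Combining the two frequency results, the fraction of rounds in which the realized pure profile equals $(i^{*}, j^{*}, k^{*})$ converges almost surely to one. Since $\mathcal{A}$ is finite, per-round payoffs are uniformly bounded, so the contribution of the vanishing set of atypical rounds to the Ces\`aro averages is itself vanishing. Therefore
\begin{equation*}
U_{i}^{\infty} = u_{i}(\textbf{e}_{i^{*}}, \textbf{e}_{j^{*}}, \textbf{e}_{k^{*}}) \quad \text{almost surely, for each } i \in \{1,2,3\},
\end{equation*}
and the winning condition~\eqref{eq:4} of the type-\RN{2} policy delivers $U_{1}^{\infty} \geq U_{2}^{\infty}$ and $U_{1}^{\infty} \geq U_{3}^{\infty}$ almost surely.

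The main obstacle is the trigger verification. Unlike the type-\RN{1} setting, systems~\eqref{eq:1} and~\eqref{eq:3} do not make $k^{*}$ strictly dominant against \emph{every} action of player~2, so round-wise consistency cannot by itself pin down player~3's behaviour. The resolution is to pass to hindsight averages and exploit the fact that once player~2's empirical distribution is close enough to $\textbf{e}_{j^{*}}$, the strict inequality of system~\eqref{eq:3} is preserved by continuity of the linear payoff. This continuity-plus-strict-gap argument is the genuinely new content of the proof compared with Theorem~\ref{thm1}.
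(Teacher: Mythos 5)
Your proposal is correct and follows essentially the same route as the paper: use player 2's consistency to drive the frequency of $j^{*}$ to one, show that this makes $k^{*}$ the unique best response in hindsight after some finite time so that the persistence hypothesis is triggered, and then conclude via the sandwich argument on bounded Ces\`aro averages (the paper's Lemma~\ref{lem1}). The only difference is presentational: where you invoke continuity of the linear hindsight payoff in $\bar{\textbf{y}}_{T}$, the paper makes the same step quantitative by sandwiching the hindsight utilities within $2\epsilon_{0}$ of their limits with $\epsilon_{0}$ set to a quarter of the gap to the second-best response.
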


Observe that any constant complete strategy is dominance solvable as long as it satisfies the linear systems \eqref{eq:1} and \eqref{eq:2} (or \eqref{eq:3}) for a given triple of actions $(i^{*}, j^{*}, k^{*})$. Furthermore, a dominance solvable policy is winning if and only if it satisfies the pair of linear inequalities in system \eqref{eq:4}. As a result, winning dominance solvable policies, if they exist, can be found in polynomial time by solving a sequence of linear feasibility problems, where each linear feasibility problem corresponds to a different triple of actions.

\begin{theorem}
\label{thm3}
If winning dominance solvable policies exist, then there exists an algorithm that can find such policies with running time that is polynomial in the number of actions of the players.
\end{theorem}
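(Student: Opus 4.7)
The plan is to show that for each of the two policy types, searching for a winning dominance solvable policy reduces to solving at most $nml$ linear feasibility programs, each of polynomial size, and any LP-feasibility routine (ellipsoid or interior-point) then yields the claimed polynomial running time.

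First I would observe that a dominance solvable policy of either type is fully determined by the triple $(i^{*}, j^{*}, k^{*})$ together with the two matrices $A^{(2,1)}$ and $A^{(3,1)}$. Since there are only $nml$ candidate triples, the algorithm simply enumerates all of them and, for each, asks whether matrices certifying that triple as winning exist. For a fixed triple, the defining conditions from systems \eqref{eq:1}, \eqref{eq:2} or \eqref{eq:3}, and the winning conditions \eqref{eq:4} are all linear in the entries of $A^{(2,1)}, A^{(3,1)}$: the equality rows in \eqref{eq:1}--\eqref{eq:3} introduce auxiliary free variables $v_{2,k}, v_{3,j}, v_0$ but remain affine, the strict inequalities are affine, and the winning conditions on $u_1$ (expressed through $A^{(1,2)}, A^{(1,3)}$, which are fixed data) impose only linear bounds on the manipulation cost terms $\lVert A^{(i,j)} - A_0^{(i,j)}\rVert_\infty$. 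The infinity norm can be linearised in the standard way by introducing one scalar slack per matrix and writing $-s^{(i,j)} \le A^{(i,j)}(k,l) - A_0^{(i,j)}(k,l) \le s^{(i,j)}$ for every entry. The resulting program has $O(nm + nl)$ variables and $O(nm + nl + ml)$ constraints.

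To handle the strict inequalities in \eqref{eq:1}--\eqref{eq:3}, which is the main technical nuisance, I would introduce a scalar margin $\epsilon \ge 0$, replace each strict $<$ by $\le \cdot - \epsilon$, and solve the LP that maximises $\epsilon$ subject to all other constraints (the feasibility constraints together with \eqref{eq:4}). The original linear feasibility problem has a solution if and only if the optimal value of this LP is strictly positive; this equivalence follows because any strict solution rescales to give a positive margin, and any positive-margin solution trivially satisfies the strict inequalities. Each such LP is of polynomial size in $n, m, l$ and is solved in polynomial time by, e.g., the ellipsoid or interior-point method.

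Finally, the algorithm iterates through the $O(nml)$ triples, solves the corresponding LP for each (once for the type-\RN{1} variant using \eqref{eq:1}, \eqref{eq:2} and once for the type-\RN{2} variant using \eqref{eq:1}, \eqref{eq:3}), and returns the first triple and matrices whose LP is strictly feasible. Proposition \ref{prop1} and Proposition \ref{cor1} guarantee that the underlying systems are never vacuously infeasible from the dominance side, so if a winning dominance solvable policy exists at all, the enumeration will discover one. The overall running time is a polynomial number of polynomial-sized LPs, hence polynomial in $n, m, l$, as required. The only step I expect to require care in the written proof is verifying that the $\epsilon$-margin reformulation correctly captures strict feasibility; the rest reduces to routine LP bookkeeping.
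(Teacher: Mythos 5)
Your proposal is correct and follows essentially the same route as the paper: enumerate the $n\,m\,l$ candidate triples, linearise the infinity-norm cost with a per-matrix slack variable, and solve one polynomial-size linear feasibility problem per triple combining the dominance constraints from systems \eqref{eq:1}--\eqref{eq:3} with the winning conditions \eqref{eq:4} (the paper does this in Lemmas \ref{lem5} and \ref{lem2}). The only divergence is your treatment of the strict inequalities, where you maximise a margin variable $\epsilon$ and test for positivity rather than inserting a fixed slack $\epsilon$ into the constraints as the paper does; this is a slightly more careful handling of the same technical point and does not change the overall argument.
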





If player 1 uses a type-\RN{1} policy, she can make a very weak behavioural assumption on the other players to guarantee winning in the long run. On the other hand, type-\RN{2} policies are guaranteed to be at least as cost-effective as type-\RN{1} policies, as all type-\RN{1} policies are also type-\RN{2} policies.


\subsection{Batch Coordination Policies}
\label{subsec:batch coordination}

Note that, even if a winning dominance solvable policy exists, it may be very costly to alter the payoff matrix. However, the manipulator may be able to beat one player through very cheap alterations whilst losing to the other, and vice versa. In this case it makes sense for player 1 to divide the time horizon, spending half the horizon winning over one player, and spending the other half winning over the other, using cheap alterations to the original payoff matrices in the process. This is the central idea behind batch coordination policies. The following definition makes this idea rigorous.


\begin{definition}
\label{def:batch}
(\textbf{Winning Batch Coordination policy}) Suppose the matrices $\hat{A}^{(2,1)}$ and $\hat{A}^{(3,1)}$ satisfy systems \eqref{eq:1} and \eqref{eq:2} for some $i_2 \in [n]$, $j_2 \in [m]$, $k_2 \in [l]$ and that the matrices $\tilde{A}^{(2,1)}$ and $\tilde{A}^{(3,1)}$ satisfy systems \eqref{eq:1} and \eqref{eq:2} for some $i_3 \in [n]$, $j_3 \in [m]$, $k_3 \in [l]$ such that for $i \neq 1$
\begin{align*}
&\mathbb{E}[u_1(\textbf{e}_{i_2},\textbf{e}_{j_2},\textbf{e}_{k_2})]+ \mathbb{E}[u_1(\textbf{e}_{i_3},\textbf{e}_{j_3},\textbf{e}_{k_3})] >\mathbb{E}[u_i(\textbf{e}_{i_3},\textbf{e}_{j_3},\textbf{e}_{k_3})] + \mathbb{E}[u_i(\textbf{e}_{i_3},\textbf{e}_{j_3},\textbf{e}_{k_3})]
\end{align*}
then we define the winning batch coordination policy as follows:
\begin{equation*}
    \rho_{t} = 
    \begin{cases}
        (\textbf{e}_{i_1}, \hat{A}^{(2,1)}, \hat{A}^{(3,1)}) & \textrm{if } \: 1 \leq t \leq T/2\\
        (\textbf{e}_{i_2}, \tilde{A}^{(2,1)}, \tilde{A}^{(3,1)}) & \textrm{if } \: T/2 < t \leq T
    \end{cases}
\end{equation*}
\end{definition}

Winning batch coordination policies can be interpreted as following different dominance solvable policies for each half of the game. Therefore, winning batch coordination policies are more general than winning dominance solvable policies. Note that the dominance solvable polices played in each half of the time horizon may not be winning by themselves. However, when combined, these sub-policies must form a winning policy for the overall batch coordination policy to be winning.

Before we present the guarantee for player 1 when using winning batch coordination policies, we make a slightly stronger behavioural assumption on both players than the assumption of being 'persistent'. We now assume that both players aim to maximize their expected utility. We use the well-established notion of regret as a metric for measuring the performance of players 2 and 3 with respect to the payoffs they accumulate over time.
\begin{definition}
(\textbf{No-regret agent}) The regret of any sequence of strategies $(\textbf{y}_1,...,\textbf{y}_T)$ chosen by player $i \in\{2,3\}$ with
respect to a fixed strategy $\textbf{y}$ is given by 
\[\mathcal{R}_{T,\textbf{y}} = \sum_{t=1}^{T}{\textbf{x}_t^TA^{(i,1)}_t\textbf{y}_t + \textbf{y}_t^TA^{(i,j)}_0\textbf{z}_t} - \sum_{t=1}^{T}{\textbf{x}_t^TA^{(i,1)}_t\textbf{y} + \textbf{y}^TA^{(i,j)}_0\textbf{z}_t}\]
where $j \in\{2,3\}$ but $i \neq j$.
That is, the regret is the difference between the payoff accumulated by the sequence $(\textbf{y}_1,...,\textbf{y}_T)$ and
the payoff accumulated by the sequence where a given fixed mixed strategy $\textbf{y}$ is chosen at each time
step. 
A player is 'no-regret' if her regret with respect to the sequence of strategies chosen by the other two players is sublinear in T:
\[\lim_{T \rightarrow\infty}{\max_{\textbf{y} \in \Delta_m}{\frac{\mathcal{R}_{T,\textbf{y}}}{T}}} = 0.\]
\end{definition}

We present the following proposition that states all persistent players are consistent, and that all no-regret players are persistent.
\begin{proposition}
\label{prop2}
All persistent players are consistent. Further, all no-regret players are persistent.
\end{proposition}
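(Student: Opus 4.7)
The plan is to handle the two implications separately; the first is a short reduction, while the second is where the real work sits.

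\textbf{Persistent $\Rightarrow$ consistent.} I would argue that the consistency premise is automatically a special case of the persistence premise. Suppose the player is persistent, and assume the premise of consistency: there exists an action $a^*$ that is the strict per-round best response at every time $t$. Writing $r_t(\cdot)$ for the per-round reward received by the agent, the assumption gives $r_t(a^*) > r_t(k)$ for every $k \neq a^*$ and every $t$. Summing over the first $T$ rounds yields $\sum_{t=1}^{T} r_t(a^*) > \sum_{t=1}^{T} r_t(k)$ for every $T \geq 1$, so $a^*$ is the unique argmax in hindsight at every $T$. Hence the persistence hypothesis is satisfied with $k^* = a^*$ and cutoff $T_h = 1$, and its conclusion $T^*/T \to 1$ almost surely is exactly the conclusion demanded by consistency.

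\textbf{No-regret $\Rightarrow$ persistent.} I would proceed by contradiction. Assume the persistence premise holds almost surely, witnessed by some (random) $k^*$ and $T_h$, and suppose that $T^*/T$ fails to converge to $1$ almost surely. Then on an event of positive probability there exist $\epsilon > 0$ and a random subsequence $T_n \to \infty$ along which the agent plays actions other than $k^*$ on at least $\epsilon T_n$ rounds. I would decompose the regret of the player against the fixed action $\mathbf{e}_{k^*}$ as
\[
\mathcal{R}_{T,\mathbf{e}_{k^*}} \;=\; \sum_{t=1}^{T}\sum_{k \neq k^*} p_{t,k}\bigl(r_t(k^*) - r_t(k)\bigr),
\]
where $p_{t,k}$ is the probability the agent plays action $k$ at round $t$. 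I would then combine this decomposition with (i) the persistence premise, which guarantees $\sum_{t \leq T}[r_t(k^*) - r_t(k)] > 0$ for every $T \geq T_h$ and every $k \neq k^*$, and (ii) the boundedness of per-round payoffs inherited from the polymatrix payoff matrices, aiming to lower-bound $\mathcal{R}_{T_n,\mathbf{e}_{k^*}}$ by $\Omega(T_n)$. Such a lower bound would contradict the no-regret assumption $\mathcal{R}_{T,\mathbf{y}} = o(T)$ taken with $\mathbf{y} = \mathbf{e}_{k^*}$, yielding the claim.

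The main obstacle lies squarely in this second part. The persistence premise is purely cumulative: it supplies a positive hindsight gap, which could in principle be arbitrarily small on a per-round basis, whereas a linear regret bound naturally asks for an averaged per-round gap between $r_t(k^*)$ and $r_t(k)$ on the rounds where the agent deviates. Bridging this gap will require a careful accounting of those deviating rounds, leveraging the strict uniqueness of the hindsight maximiser at every $T \geq T_h$ together with the bounded-payoff structure of the game, to argue that the deviations cannot all concentrate on rounds where $r_t(k) \approx r_t(k^*)$ without eventually violating the cumulative strict inequality.
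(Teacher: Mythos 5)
Your first implication (persistent $\Rightarrow$ consistent) is correct and is essentially the paper's own argument: under the consistency premise the action $a^*$ is the unique hindsight maximiser at every horizon $T \geq 1$, so the persistence premise is satisfied with $k^* = a^*$ and $T_h = 1$, and its conclusion is precisely the consistency conclusion.

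The second implication is where the genuine gap lies, and you have in fact named it yourself: your plan requires lower-bounding $\mathcal{R}_{T_n,\mathbf{e}_{k^*}}$ by $\Omega(T_n)$, but the persistence premise only supplies a strictly positive \emph{cumulative} hindsight gap, which may be $o(T)$ or even bounded. Concretely, if $r_t(k^*) - r_t(k) = \epsilon_t > 0$ with $\sum_t \epsilon_t < \infty$, then $k^*$ is the unique best response in hindsight at every $T$, yet an agent who plays $k$ with probability $1/2$ in every round accumulates only $O(1)$ regret against $\mathbf{e}_{k^*}$ (and negative regret against $\mathbf{e}_k$) while having $T^*/T \to 1/2$. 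So the ``careful accounting'' you defer is not a routine technicality: the linear regret lower bound you are aiming for does not follow from the stated hypotheses, and your proof as written stops exactly at the step that carries all the content of the lemma. For comparison, the paper's proof takes a different route: it applies Markov's inequality to the expected average regret and intersects the countable family of events $\bigl\{U_3(\mathbf{x}_t,\mathbf{y}_t,\mathbf{e}_{k^*})_{t=1}^{\infty} - U_3(\mathbf{x}_t,\mathbf{y}_t,\mathbf{z}_t)_{t=1}^{\infty} \leq 1/n\bigr\}$ to conclude that the realised long-run average utility almost surely coincides with that of playing $\mathbf{e}_{k^*}$ throughout, and then asserts that $T_3/T \to 1$ must follow ``because $k^*$ is the unique best action.'' That final assertion is the same quantitative claim you were unable to justify, so the alternative route does not dissolve the difficulty; it relocates it. Either way, your submission is incomplete at the decisive step of the second implication.
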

That is, each assumption on the behaviour of the agents is successively stronger. To emphasize this, we prove that there exists a type of player who is persistent but not no-regret.
\begin{proposition}
\label{prop3}
The agent that uses the Follow the Leader algorithm is persistent but not no-regret.
\end{proposition}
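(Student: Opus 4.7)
The plan is to establish the two parts of the claim independently: first that Follow the Leader (FTL) satisfies the persistence definition, then to exhibit an opponent play sequence on which FTL incurs linear regret.

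For persistence, I would argue directly from the definition of FTL, which at round $t$ selects the action $\argmax_{k \in [l]} \sum_{s=1}^{t-1} u_{3,s}(\mathbf{e}_k)$, i.e.\ the best response in hindsight based on rounds $1, \ldots, t-1$. Suppose the hypothesis of the persistence definition holds: there exists a finite cutoff $T_h$ such that for every $T \geq T_h$ the action $k^*$ is the unique best response in hindsight. Then for every $t > T_h$, the cumulative-payoff argmax over rounds $1, \ldots, t-1$ is uniquely $k^*$, so FTL plays $k^*$ at round $t$. Hence $T^* \geq T - T_h$ for all $T \geq T_h$, giving $T^*/T \to 1$ deterministically (and therefore almost surely). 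This verifies persistence.

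For the second part, I would adapt the classical counterexample that shows FTL has linear regret to the polymatrix setting used here. Consider a two-action setting for player 2 and let the manipulator choose matrices $A_t^{(2,1)}$ (with $A_0^{(2,3)}$ taken to be identically zero for simplicity) such that the observed payoff vector of player 2 for her two actions is $(1/2, 0)$ at $t=1$, and then alternates as $(0, 1), (1, 0), (0, 1), \ldots$ for $t \geq 2$. Such a sequence is realizable because the manipulator has complete freedom in choosing $A_t^{(2,1)}$ at each round. A straightforward induction on $t$ then shows that at every round $t \geq 2$, FTL plays the action that will receive payoff $0$ in that round, so the total payoff accumulated by FTL is exactly $1/2$, while the best fixed action in hindsight accumulates at least $\lfloor T/2 \rfloor$. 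The resulting regret is $\Omega(T)$, contradicting the no-regret condition $\mathcal{R}_{T,\mathbf{y}}/T \to 0$.

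The only delicate point is to make sure that the ``leader'' at each round is uniquely defined and matches the prediction of the induction; the initial payoff $1/2$ in round $1$ (rather than $1$) ensures that after round $1$ action $1$ is the strict leader, and after round $2$ action $2$ is the strict leader by margin $1/2$, and this margin of $1/2$ is preserved by the alternation so that ties never occur. Once this bookkeeping is checked, both claims follow immediately from the definitions and no further machinery is needed.
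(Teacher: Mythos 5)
Your argument for persistence is the same as the paper's: FTL plays the argmax of cumulative payoff over rounds $1,\dots,t-1$, so once the unique best response in hindsight stabilises at $k^*$ after the (possibly random, but almost surely finite) cutoff $T_h$, FTL plays $k^*$ at every subsequent round and $T^*/T \to 1$ on that almost-sure event. Where you diverge is the second half: the paper simply cites the textbook result that FTL is not no-regret, whereas you reconstruct the classical counterexample (initial payoff vector $(1/2,0)$ followed by alternating $(0,1),(1,0),\dots$) and, importantly, check that this adversarial sequence is actually realizable inside the paper's polymatrix model, since the regret definition here involves the time-varying matrices $A_t^{(2,1)}$ chosen by the manipulator plus the fixed $A_0^{(2,3)}$ term. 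Your version is more self-contained and slightly more careful than the paper's: the $1/2$ offset guaranteeing a strict leader at every round, and the observation that the manipulator's freedom over $A_t^{(2,1)}$ suffices to generate the sequence, are exactly the details one would want spelled out to make the cited result apply verbatim in this setting. The cost is a little extra bookkeeping; the benefit is that the proof does not lean on an external reference whose setting (standard online prediction with an oblivious adversary) differs cosmetically from the repeated polymatrix game here. Both proofs are correct; I see no gap in yours.
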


If the manipulator uses a winning batch coordination policy against no-regret players, then the probability that there exists some finite number of rounds in which she wins is 1. This result is formalized as follows:

\begin{theorem}
\label{thm4}
If the manipulator uses a winning batch coordination policy against no-regret players then
$\mathbb{P}\Big(U_1(\textbf{x}_t,\textbf{y}_t,\textbf{z}_t)_{t=1}^{T} \geq U_2(\textbf{x}_t,\textbf{y}_t,\textbf{z}_t)_{t=1}^{T} \text{ and }
U_1(\textbf{x}_t,\textbf{y}_t,\textbf{z}_t)_{t=1}^{T} \geq U_3(\textbf{x}_t,\textbf{y}_t,\textbf{z}_t)_{t=1}^{T} \text{ eventually}\Big) = 1
$.
\end{theorem}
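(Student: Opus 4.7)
The plan is to show that, under a winning batch coordination policy against no-regret players, the time-averaged utilities $U_{1}, U_{2}, U_{3}$ converge almost surely to limits satisfying $\lim U_{1} > \lim U_{j}$ for $j \in \{2, 3\}$; the ``eventually'' conclusion then follows immediately from the strict gap. Concretely, I expect the limit to be
\[\lim_{T \to \infty} U_{i} \;=\; \tfrac{1}{2}\bigl[\mathbb{E}[u_{i}(\textbf{e}_{i_{2}}, \textbf{e}_{j_{2}}, \textbf{e}_{k_{2}})] + \mathbb{E}[u_{i}(\textbf{e}_{i_{3}}, \textbf{e}_{j_{3}}, \textbf{e}_{k_{3}})]\bigr]\]
almost surely for each $i \in \{1,2,3\}$, after which Definition~\ref{def:batch} delivers the desired inequality.

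The first and most delicate step is to apply the no-regret guarantee on each half of the horizon separately. Since a no-regret player satisfies $\max_{\textbf{y}} \mathcal{R}_{T, \textbf{y}}/T \to 0$ at every horizon $T$, in particular $\mathcal{R}_{T/2, \textbf{y}}/(T/2) \to 0$, so the regret accumulated over the first half against any fixed comparator $\textbf{y}$ is $o(T)$. The second-half regret equals $\mathcal{R}_{T, \textbf{y}} - \mathcal{R}_{T/2, \textbf{y}}$, a difference of two $o(T)$ quantities, and is therefore also $o(T)$. This sub-interval decomposition of the no-regret property is the technical crux of the argument.

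Second, I combine these sub-interval bounds with the strict dominance built into $\hat{A}^{(2,1)}, \hat{A}^{(3,1)}$ and $\tilde{A}^{(2,1)}, \tilde{A}^{(3,1)}$ via systems \eqref{eq:1} and \eqref{eq:2}. In the first half player 1 plays $\textbf{e}_{i_{2}}$, so $\textbf{e}_{j_{2}}$ is strictly dominant for player 2 regardless of player 3's action; by finiteness of the action sets, the strict inequalities yield a uniform gap $\epsilon > 0$ with $u_{2}(\textbf{e}_{i_{2}}, \textbf{e}_{j_{2}}, \textbf{e}_{k}) - u_{2}(\textbf{e}_{i_{2}}, \textbf{e}_{j}, \textbf{e}_{k}) \geq \epsilon$ for every $j \neq j_{2}$ and every $k$. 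Taking $\textbf{y} = \textbf{e}_{j_{2}}$ as the regret comparator on the first half then yields $\epsilon \sum_{t=1}^{T/2}\bigl(1 - \textbf{y}_{t}(j_{2})\bigr) \leq \mathcal{R}_{T/2, \textbf{e}_{j_{2}}} = o(T)$, so the empirical frequency with which player 2 plays $j_{2}$ in the first half converges to $1$. The analogous argument for player 3 on the first half (with $\textbf{e}_{k_{2}}$), and for both players on the second half (with $\textbf{e}_{j_{3}}$ and $\textbf{e}_{k_{3}}$), shows that the empirical joint strategy converges to $(\textbf{e}_{i_{2}}, \textbf{e}_{j_{2}}, \textbf{e}_{k_{2}})$ on the first half and to $(\textbf{e}_{i_{3}}, \textbf{e}_{j_{3}}, \textbf{e}_{k_{3}})$ on the second.

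Finally, I would lift this convergence of expected per-round payoffs to almost-sure convergence of realized time-averaged utilities. Observed per-round utilities differ from their conditional expectations by bounded martingale differences, so Azuma--Hoeffding together with Borel--Cantelli produces the a.s. limit displayed in the first paragraph. The winning condition of Definition~\ref{def:batch} is exactly $\lim U_{1} > \lim U_{j}$ for $j \neq 1$, and hence $U_{1} \geq U_{j}$ holds for all sufficiently large $T$ with probability $1$. The main obstacle is the first step: one must recognise that a no-regret guarantee phrased at horizon $T$ implicitly bounds regret on every prefix and therefore on every suffix, and that the strict-dominance gap is what converts this sub-linear regret bound into genuine convergence of the empirical play rather than a mere comparison of cumulative payoffs; the martingale upgrade and the closing arithmetic are then routine.
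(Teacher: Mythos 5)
Your overall architecture mirrors the paper's: analyse each half of the horizon separately, show that play concentrates on $(\textbf{e}_{i_2},\textbf{e}_{j_2},\textbf{e}_{k_2})$ in the first half and $(\textbf{e}_{i_3},\textbf{e}_{j_3},\textbf{e}_{k_3})$ in the second, average, and invoke the strict inequality in Definition~\ref{def:batch} to get the ``eventually'' conclusion. The paper gets the per-half concentration by routing through Proposition~\ref{prop2} (no-regret $\Rightarrow$ consistent) and then reusing the sandwich argument of Lemma~\ref{lem1} with an explicit $\epsilon_0 = \frac{1}{6}(a_1+a_2-\max\{b_1+b_2,c_1+c_2\})$, handling the second half ``by symmetry''; you instead work directly from the regret definition and the uniform dominance gap. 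Your first-half argument --- $\epsilon\sum_{t\le T/2}(1-\textbf{y}_t(j_2)) \le \mathcal{R}_{T/2,\textbf{e}_{j_2}} \le \max_{\textbf{y}}\mathcal{R}_{T/2,\textbf{y}} = o(T)$ --- is sound and, together with the Azuma--Hoeffding upgrade from mixed-strategy weights to realized action counts, is arguably more self-contained than the paper's appeal to consistency.

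The genuine gap is the suffix step, which you yourself flag as the technical crux. The claim that the second-half regret $\mathcal{R}_{T,\textbf{y}}-\mathcal{R}_{T/2,\textbf{y}}$ is a ``difference of two $o(T)$ quantities'' and hence $o(T)$ fails: the no-regret property only bounds $\mathcal{R}_{T/2,\textbf{y}}$ \emph{above} by $\max_{\textbf{y}'}\mathcal{R}_{T/2,\textbf{y}'}=o(T)$, while against a fixed comparator such as $\textbf{y}=\textbf{e}_{j_3}$ it can be as negative as $-\Omega(T)$ (the learner, playing the first-half dominant action $j_2$, beats the comparator $j_3$ by a per-round constant throughout the first half). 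Consequently $\mathcal{R}_{T,\textbf{e}_{j_3}}-\mathcal{R}_{T/2,\textbf{e}_{j_3}}$ can be linear in $T$. Concretely, if the per-round advantage of $j_2$ over $j_3$ in the first half equals the per-round advantage of $j_3$ over $j_2$ in the second half, an agent that plays $j_2$ for the entire horizon has zero full-horizon regret against every fixed comparator, hence is no-regret, yet never switches to $j_3$; your argument would then fail to establish concentration on $(\textbf{e}_{i_3},\textbf{e}_{j_3},\textbf{e}_{k_3})$ in the second half. Interval (suffix) regret is a strictly stronger guarantee than external regret and cannot be deduced from it by subtraction. To be fair, the paper's own proof elides exactly the same difficulty --- it asserts per-half convergence ``since they are consistent'' and dispatches the second half with ``a symmetric argument,'' even though the consistency and persistence definitions quantify over the whole game and the learner's second-half behaviour depends on its first-half history --- but your write-up commits to an explicit justification of the suffix bound, and that justification is incorrect as stated.
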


It is worth noting that this guarantee on the convergence of utilities is stronger than the one given for winning dominance solvable policies in Theorem \ref{thm1}. This is because of the strict inequality on the utilities of the players in Definition \ref{def:batch}. By presenting this guarantee instead of a guarantee on the infinite horizon utilities, we are ensured that the guarantee of winning in a finite number of rounds with probability 1 implies that the manipulator can use one set of game matrices for half the rounds and another set for the second half.

Similarly to winning dominance solvable policies, winning batch coordination policies can be found in polynomial time by solving a number of linear feasibility problems.
\begin{theorem}
\label{thm5}
If winning batch coordination policies exist, then there exists an algorithm that can find such policies with running time that is polynomial in the number of actions of the players.
\end{theorem}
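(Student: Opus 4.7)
The plan is to mirror the structure of the proof of Theorem \ref{thm3}: enumerate the discrete part of the policy (the two target action triples), and for each fixed enumeration solve a single linear program that jointly encodes the system~\eqref{eq:1}--\eqref{eq:2} constraints for each half of the horizon together with the winning inequality from Definition~\ref{def:batch}. Since a winning batch coordination policy is fully specified by the two triples $(i_2,j_2,k_2)$ and $(i_3,j_3,k_3)$ together with the four matrices $\hat A^{(2,1)},\hat A^{(3,1)},\tilde A^{(2,1)},\tilde A^{(3,1)}$, and the matrices enter every relevant quantity linearly (strategies are pure indicator vectors once the triples are fixed), solving the discrete part by brute force and the continuous part by LP will give the desired polynomial-time algorithm.

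First I would observe that the number of candidate triples is $O((nml)^2)$, which is polynomial in the action-set sizes. For each fixed pair of triples, the systems~\eqref{eq:1} and~\eqref{eq:2} applied to $(\hat A^{(2,1)},\hat A^{(3,1)})$ at $(i_2,j_2,k_2)$ and separately to $(\tilde A^{(2,1)},\tilde A^{(3,1)})$ at $(i_3,j_3,k_3)$ are linear equalities and strict linear inequalities in the entries of these matrices (with the auxiliary vectors $v_2,v_3$ treated as extra free variables); this was already exploited for Theorem~\ref{thm3}. To turn the strict inequalities into something an LP solver can handle, I would introduce a single slack $\varepsilon \geq 0$, replace every $<$ by $\leq$ with an $-\varepsilon$ term, and maximise $\varepsilon$. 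Feasibility of the original strict system is equivalent to the optimal value being strictly positive.

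Next I would encode the winning condition. The only nonlinearity is the manipulation cost $\sum_{(i,j)\in P}\|A^{(i,j)}-A_0^{(i,j)}\|_\infty$ appearing in $u_1$. I would linearise it in the standard way by introducing auxiliary scalars $\hat c_{(2,1)},\hat c_{(3,1)},\tilde c_{(2,1)},\tilde c_{(3,1)}$ together with entrywise constraints $-\hat c_{(i,j)} \leq \hat A^{(i,j)}(k,l)-A_0^{(i,j)}(k,l) \leq \hat c_{(i,j)}$ (and analogously for $\tilde c$), which are linear. The winning inequality from Definition~\ref{def:batch} then becomes a strict linear inequality in the matrix entries and the $c$-variables, which I fold into the same $\varepsilon$-maximisation. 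Entries of $A^{(1,2)}$ and $A^{(1,3)}$ are constants (player~1 cannot modify these), so the only decision variables are the manipulable matrices, the $c$-variables, $v_2,v_3,\varepsilon$, all of dimension polynomial in $n,m,l$.

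Finally I would run the algorithm: loop over all $O((nml)^2)$ triple pairs; for each, build and solve the corresponding LP using any polynomial-time LP algorithm (e.g.\ interior point); return the first LP whose optimum is strictly positive, reconstructing $(\hat A^{(2,1)},\hat A^{(3,1)},\tilde A^{(2,1)},\tilde A^{(3,1)})$ from its solution. Correctness follows because every winning batch coordination policy corresponds, by definition, to some triple pair and some feasible point of the associated strict system, and conversely any feasible point yields matrices satisfying all the conditions of Definition~\ref{def:batch}. The main obstacle is purely bookkeeping: verifying that every constraint in~\eqref{eq:1}, \eqref{eq:2} and~\eqref{eq:4} remains linear in the chosen decision variables once the triples are pinned down and that the $\varepsilon$-slack reformulation preserves strict feasibility; neither step introduces any super-polynomial blow-up, so the overall running time is polynomial in $n,m,l$ as required.
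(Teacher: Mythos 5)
Your proposal is correct and follows essentially the same route as the paper's proof: enumerate the $O((nml)^2)$ pairs of target action triples, and for each pair solve a polynomially sized linear program whose variables are the four manipulable matrices, the linearised infinity-norm cost bounds, and the value variables, with the dominance systems~\eqref{eq:1}--\eqref{eq:2} imposed for each half and the winning inequality of Definition~\ref{def:batch} appended. The only (immaterial) difference is that you handle the strict inequalities by maximising a slack $\varepsilon$ and checking positivity, whereas the paper fixes a small $\epsilon$ in the constraints and solves a feasibility problem.
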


For the remainder of the paper we use the weakest assumption, that players are consistent but not necessarily persistent.

\section{Additional Objectives}
\label{sec: additional objectives}

The manipulator may have additional goals and objectives aside from simply winning the game. For example, the manipulator may want to win by a large margin, or win by making the smallest alterations to the payoff matrices possible, or even have a goal completely different to winning, such as maximizing the egalitarian social welfare. For each of the policy classes from Section \ref{sec:policies}, the manipulator can solve a sequence of linear feasibility problems in order to find a winning policy if one exists. As long as the linear constraints of one of these problems are satisfied, the manipulator is guaranteed to win (i.e. she has found a winning policy). Therefore, the manipulator can specify any additional objectives she may have as a linear function to optimize with respect to the linear constraints imposed by the policy class. In other words, the manipulator may choose a linear objective function which captures her additional goals, and solve a sequence of linear programs (LPs), instead of a sequence of linear feasibility problems.  


For example let $d_2$ and $d_3$ be the cost of altering matrices $A_0^{(2,1)}$ and $A_0^{(3,1)}$ respectively. If we consider a minimization problem with objective $d_{2} + d_{3}$, then this amounts to finding a winning policy which makes the least cost modification possible. Similarly, let $v_2$ be the payoff for player 2 and $v_3$ be the payoff for player 3 in the strategy profile of consideration. Setting $v_{2}$ as a maximization objective amounts to winning whilst ensuring player 2 does as well as possible. We could also act adversarially against player 2, by instead minimizing $v_{2}$. Meanwhile, setting $v_{2} + v_{3}$ as a maximization objective corresponds to winning whilst maximizing the utilitarian welfare of the other players. 

In what follows, we investigate additional objectives and goals of wider interest. In Section \ref{sec:winning-by-the-largest-margin} we investigate how player 1 can maximize her margin of victory, in Section \ref{winning-with-the-lowest-inefficiency-ratio} we investigate how the manipulator may win in the most cost efficient way possible. Meanwhile, in Section \ref{sec:maximizing-the-egalitarian-social-welfare} we take a brief step outside of the coopetitive setting and investigate how the manipulator may maximize the egalitarian social welfare.




\subsection{Winning by the Largest Margin}
\label{sec:winning-by-the-largest-margin}
In strictly competitive settings, it is often desirable for players to win, whilst ensuring that their long run utility is much higher than the other players. This motivates the following definition:

\begin{definition}
The \textbf{margin} of a policy $\rho_t(H_t) = (\textup{\textbf{x}}_t,(A^{(i,j)}_t)_{(i,j)\in P})$ for $t \in \mathbb{N}$ when playing against player 2 and player 3's no-regret sequence of strategies $(\textbf{y}_t)_{t=1}^{\infty}$ and $(\textbf{z}_t)_{t=1}^{\infty}$ is defined to be \[\min\Big\{\mathbb{E}\left[U^{\infty}_1 - U^{\infty}_2\right],\mathbb{E}\left[U^{\infty}_1 - U^{\infty}_3\right]\Big\}\]
\end{definition}

That is, the margin is the minimum difference between the long run expected utility of player 1 and another player. Any winning dominance solvable policy will have a margin of at least zero. Additionally, for any of the policy classes discussed above, if a winning policy exists, then a winning policy with the largest margin can be found efficiently via the addition of a linear objective and a small number of linear constraints and variables.
\begin{theorem}
\label{thm6}
If winning dominance solvable policies exist, then there exists an algorithm that can find the largest margin dominance solvable policy, with running time that is polynomial in the number of actions of the players.
\end{theorem}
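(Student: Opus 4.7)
\textbf{Proof proposal for Theorem \ref{thm6}.} The plan is to reduce the search for the largest-margin dominance solvable type-\RN{1} policy to solving a family of $nml$ linear programs, one per candidate action triple $(i^{*}, j^{*}, k^{*}) \in [n]\times[m]\times[l]$, and then returning the best solution encountered. Since the number of triples is polynomial and each LP has polynomially many variables and constraints, the overall procedure runs in polynomial time. The bulk of the work is showing that, for a fixed triple, both the winning dominance solvable constraints and the margin objective admit a linear formulation in the entries of $A^{(2,1)}$ and $A^{(3,1)}$ together with a handful of auxiliary scalars.

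First I would fix a triple $(i^{*}, j^{*}, k^{*})$ and set $\mathbf{x} = \mathbf{e}_{i^{*}}$. Under this substitution, systems \eqref{eq:1} and \eqref{eq:2} become linear equalities and strict linear inequalities in the unknown entries of $A^{(2,1)}$ and $A^{(3,1)}$, the auxiliary vectors $v_{2} \in \mathbb{R}^{l}$, $v_{3} \in \mathbb{R}^{m}$, and the fixed original matrices $A_{0}^{(2,3)}, A_{0}^{(3,2)}$. To keep everything within standard LP, I would replace each strict inequality of the form $\mathrm{LHS} < v_{2,k}$ by a weak inequality $\mathrm{LHS} \leq v_{2,k} - \delta$ with a single shared slack variable $\delta \geq 0$ which is later constrained to be strictly positive (standard argument: the original strict system is feasible iff the maximum of $\delta$ over the relaxed system is strictly positive).

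Next I would linearise the margin. Because players~2 and~3 are consistent and $(i^{*},j^{*},k^{*})$ induces strict dominance, their strategies converge to $\mathbf{e}_{j^{*}}, \mathbf{e}_{k^{*}}$ almost surely (as in the proof of Theorem~\ref{thm1}); bounded convergence then yields $\mathbb{E}[U_{i}^{\infty}] = u_{i}(\mathbf{e}_{i^{*}},\mathbf{e}_{j^{*}},\mathbf{e}_{k^{*}})$ for each $i$. Expanding these payoffs, the differences
\[\mathbb{E}[U_{1}^{\infty} - U_{2}^{\infty}] \quad\text{and}\quad \mathbb{E}[U_{1}^{\infty} - U_{3}^{\infty}]\]
are affine in the entries of $A^{(2,1)}, A^{(3,1)}$ except for the two infinity-norm manipulation costs sitting inside $U_{1}^{\infty}$. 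These I would linearise in the usual way by introducing scalar variables $d_{2}, d_{3} \geq 0$ together with the box constraints $-d_{2} \leq A^{(2,1)}(k,l) - A_{0}^{(2,1)}(k,l) \leq d_{2}$ for every $(k,l)$, and analogously for $d_{3}$; then $d_{2}$ and $d_{3}$ upper-bound the respective norms, and at any optimum they are tight because they enter $U_{1}^{\infty}$ with a negative sign.

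Finally I would introduce a single scalar $\mu$ and solve
\[\max \; \mu \quad \text{s.t.} \quad \mu \leq \mathbb{E}[U_{1}^{\infty} - U_{2}^{\infty}], \;\; \mu \leq \mathbb{E}[U_{1}^{\infty} - U_{3}^{\infty}], \;\; \mu \geq 0,\]
together with the linearised dominance constraints, the norm linearisations, and the positivity of the slack $\delta$. By construction any feasible point is a winning dominance solvable type-\RN{1} policy (system \eqref{eq:4} follows from $\mu \geq 0$), and $\mu$ equals its margin. Iterating over all $nml$ triples and returning the policy whose LP achieves the largest $\mu$ gives the claimed algorithm; existence of a winning dominance solvable policy guarantees that at least one of the LPs is feasible. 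The only delicate step I expect is the strict-inequality handling for systems \eqref{eq:1}--\eqref{eq:2}; the shared slack variable $\delta$ above is the cleanest workaround, with mild boundedness on $v_{2}, v_{3}$ needed to keep the LP bounded (which can be imposed without loss since rescaling $v_{2}, v_{3}$ does not affect which actions are strictly dominant).
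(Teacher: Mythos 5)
Your proposal is correct and matches the paper's proof in all essentials: both iterate over the $nml$ action triples, linearise the infinity-norm costs via scalars $d_2, d_3$ with box constraints, encode systems \eqref{eq:1}--\eqref{eq:2} as linear constraints (the paper uses a fixed $\epsilon$-gap where you use a slack variable $\delta$, a cosmetic difference), and maximise a scalar lower bound on the margin ($v_0$ in the paper, $\mu$ in yours) subject to it being nonnegative and dominated by both utility gaps. Your added remarks on why $d_2, d_3$ are tight at optimality and why the long-run utilities reduce to single-shot payoffs are details the paper leaves implicit, not a different route.
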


\subsection{Winning with the Lowest Inefficiency Ratio}
\label{winning-with-the-lowest-inefficiency-ratio}
In many scenarios, it is only sensible to make changes to payoff matrices if one would see a large relative improvement compared to the cost of alteration. We characterize the notion of relative improvement using the following definition.     

\begin{definition}
The \textbf{Inefficiency Ratio} of a policy $\rho_t(H_t) = (\textup{\textbf{x}}_t,(A^{(i,j)}_t)_{(i,j)\in P})$ for $t \in \mathbb{N}$ when playing against player 2 and player 3's no-regret sequence of strategies $(\textbf{y}_t)_{t=1}^{\infty}$ and $(\textbf{z}_t)_{t=1}^{\infty}$ is  
\begin{equation*}
\frac{\lim_{T \rightarrow\infty}\frac{1}{T}\sum_{t=1}^{T}\sum_{(i,j) \in P}{\|A^{(i,j)}_t-A_0^{(i,j)}\|_{\infty}}}{
\mathbb{E}\Big[\lim_{T \rightarrow\infty}\frac{1}{T}{\sum_{t=1}^{T}{\left(\textup{\textbf{x}}_t^TA^{(1,2)}_t\textup{\textbf{y}}_t + \textup{\textbf{x}}_t^TA^{(1,3)}_t\textup{\textbf{z}}_t\right)}}\Big] - 
K}
\end{equation*}
where $K = \min_{i,j,k}{\big(A^{(1,2)}(i,j) + A^{(1,3)}(j,k)\big)}$ is the minimum revenue for player 1. 
\end{definition}
In other words, the inefficiency ratio is the ratio between the cost for modifying the payoff matrices and the expected increase in long run payoffs from the worst case payoff. Note that this fraction must converge for the definition to be meaningful. In a similar fashion to maximizing the margin of victory, policies which minimize the inefficiency ratio can be found in polynomial time.
\begin{theorem}
\label{thm7}
If winning dominance solvable policies exist, then there exists an algorithm that can find the winning dominance solvable policy with the lowest inefficiency ratio, with running time that is polynomial in the number of actions of the players.
\end{theorem}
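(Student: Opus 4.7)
The plan is to reduce the problem to solving $O(nml)$ linear programs, one for each candidate target triple $(i^*, j^*, k^*) \in [n] \times [m] \times [l]$, exploiting the observation that for any winning dominance solvable policy with a fixed target triple, the denominator of the inefficiency ratio is a constant determined entirely by $(i^*, j^*, k^*)$ and the original game, while the numerator is a piecewise-linear function of the manipulated matrices that can be linearised with a pair of slack variables.

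First I would argue that, because a dominance solvable policy is stationary and player~1 does not manipulate $A^{(1,2)}$ or $A^{(1,3)}$, the per-round manipulation cost is a time-independent constant $d := \|A^{(2,1)} - A_0^{(2,1)}\|_\infty + \|A^{(3,1)} - A_0^{(3,1)}\|_\infty$, so the limit in the numerator is trivially $d$. By Theorem~\ref{thm1}, when played against consistent opponents the strategies of players~2 and~3 converge almost surely to $\textbf{e}_{j^*}$ and $\textbf{e}_{k^*}$ respectively; since payoffs are bounded, dominated convergence lets us exchange expectation and limit, so the denominator reduces to the constant
\[
D(i^*,j^*,k^*) \;:=\; A_0^{(1,2)}(i^*,j^*) + A_0^{(1,3)}(i^*,k^*) - K,
\]
independent of the manipulated matrices. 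Hence, once the triple is fixed, minimising the inefficiency ratio reduces to minimising the numerator $d$, a linearisable objective.

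Next, for each triple $(i^*,j^*,k^*)$ with $D(i^*,j^*,k^*) > 0$, I would set up an LP whose variables are the entries of $A^{(2,1)}$ and $A^{(3,1)}$, together with nonnegative scalars $d_2, d_3$, and the auxiliary values $v_2, v_3$ from systems~\eqref{eq:1}--\eqref{eq:2}. The constraints are: (i) the linear dominance constraints in systems~\eqref{eq:1} and~\eqref{eq:2} for the chosen triple, handling the strict inequalities with the same small-slack trick used in the proof of Theorem~\ref{thm3}; (ii) the pair of winning inequalities~\eqref{eq:4}; and (iii) the standard infinity-norm linearisation $-d_2 \leq A^{(2,1)}(i,j) - A_0^{(2,1)}(i,j) \leq d_2$ for all entries, and likewise for $d_3$. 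The objective is to minimise $d_2 + d_3$. Let $\mathrm{OPT}(i^*,j^*,k^*)$ denote the resulting optimum; the inefficiency ratio of the induced policy is then $\mathrm{OPT}(i^*,j^*,k^*)/D(i^*,j^*,k^*)$. Enumerating the $nml$ triples, discarding those with $D \leq 0$ (for which the ratio is undefined or not finite), and returning the triple with the smallest ratio yields the desired policy.

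Each LP has $O(nm + nl)$ variables and $O(nm + nl + ml)$ constraints, and is solvable in polynomial time by the ellipsoid or an interior-point method, so the overall running time is polynomial in $n$, $m$, and $l$. The main obstacle here is conceptual rather than computational: the ratio objective is nonlinear, but it dissolves once one separates the optimisation over the matrices from the enumeration over target triples, since the denominator depends only on the latter. The piecewise linearity of $\|\cdot\|_\infty$ is standard to handle via auxiliary variables, and the strict-inequality issue is inherited verbatim from the construction in Theorem~\ref{thm3}.
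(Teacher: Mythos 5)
Your proposal is correct and follows essentially the same route as the paper: enumerate the $nml$ candidate triples, solve one LP per triple with the type-\RN{1} winning constraints and objective $d_2+d_3$ (valid because the denominator of the ratio is fixed once the triple is fixed, since the manipulator cannot alter $A^{(1,2)}$ or $A^{(1,3)}$), and then select the triple minimising the resulting ratio. Your write-up is in fact somewhat more careful than the paper's, in that it explicitly justifies the constancy of the denominator via Theorem~\ref{thm1} and flags the degenerate case of a zero denominator.
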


\subsection{Maximizing the Egalitarian Social Welfare}
\label{sec:maximizing-the-egalitarian-social-welfare}
We now consider an altruistic goal for the manipulator that is completely different from the original goal of winning. Here, we develop a policy for the manipulator that ensures the utility of all players are as large as possible. To further this notion, we define the quantity we call the egalitarian social welfare, which we aim to maximize.

\begin{definition}
The \textbf{Egalitarian Social Welfare} of a strategy profile $(\textbf{x},\textbf{y},\textbf{z})$ is defined to be
\begin{equation*}
    \mathcal{S}(\textbf{x},\textbf{y},\textbf{z}) := \min\Big\{U_1(\textbf{x},\textbf{y},\textbf{z}),U_2(\textbf{x},\textbf{y},\textbf{z}),U_3(\textbf{x},\textbf{y},\textbf{z})\Big\}
\end{equation*}
\end{definition}

We can find the dominance solvable policy that maximizes egalitarian social welfare in polynomial running time. Note that such a policy will always exist.  
\begin{theorem}
\label{thm8}
There exists an algorithm that can find the dominance solvable policy that maximizes egalitarian social welfare with running time that is polynomial in the number of actions of the players.
\end{theorem}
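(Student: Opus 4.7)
The plan is to reduce the search over dominance solvable policies to a polynomial family of linear programs, one per choice of target action triple. Every dominance solvable (type-\RN{1}) policy is completely specified by a triple $(i^*, j^*, k^*) \in [n] \times [m] \times [l]$ together with matrices $A^{(2,1)}$ and $A^{(3,1)}$ satisfying systems \eqref{eq:1} and \eqref{eq:2}. Since consistent players will eventually play $\mathbf{e}_{j^*}$ and $\mathbf{e}_{k^*}$ almost surely (the argument underpinning Theorem~\ref{thm1}), the long-run per-round utilities converge to
\begin{align*}
U_1 &= A_0^{(1,2)}(i^*,j^*) + A_0^{(1,3)}(i^*,k^*) - d_2 - d_3, \\
U_2 &= A^{(2,1)}(i^*,j^*) + A_0^{(2,3)}(j^*,k^*), \\
U_3 &= A^{(3,1)}(i^*,k^*) + A_0^{(3,2)}(j^*,k^*),
\end{align*}
where $d_2 = \|A^{(2,1)} - A_0^{(2,1)}\|_\infty$ and $d_3 = \|A^{(3,1)} - A_0^{(3,1)}\|_\infty$. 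All three quantities are affine in the entries of $A^{(2,1)}$ and $A^{(3,1)}$ once $d_2, d_3$ are treated as auxiliary variables.

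For a fixed triple, I would build an LP whose variables are the $nm + nl$ entries of $A^{(2,1)}$ and $A^{(3,1)}$, the auxiliaries $d_2, d_3$, and an epigraph variable $w$. The $\ell_\infty$ manipulation costs linearize into $2(nm + nl)$ constraints of the form $d_2 \geq \pm(A^{(2,1)}(i,j) - A_0^{(2,1)}(i,j))$, and similarly for $d_3$; since $-d_2$ and $-d_3$ appear in $U_1$ and $U_1$ is driven up by the objective, these inequalities are tight at optimum. The constraints from systems \eqref{eq:1} and \eqref{eq:2} are already linear in the unknowns. To encode the min-objective, I add $w \leq U_1$, $w \leq U_2$, $w \leq U_3$, and maximize $w$; this is the standard epigraph reformulation, so the optimum value of $w$ equals the egalitarian social welfare achievable at this triple. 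Solve one LP per triple (there are $nml$ of them, polynomial in the action-set sizes), then return the triple and associated matrices attaining the largest value of $w$. Feasibility is never an issue because Proposition~\ref{prop1} guarantees a feasible $(A^{(2,1)}, A^{(3,1)})$ for every triple, which is why such a policy always exists.

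The main technical obstacle is the strict inequalities in systems \eqref{eq:1} and \eqref{eq:2}, which LPs cannot express directly. The standard workaround is to replace each strict inequality by a non-strict one with a margin variable $\delta > 0$ and require $\delta \geq \delta_0$ for a small fixed $\delta_0$; equivalently, one maximizes $w$ over the closure and then argues that the optimizer can be perturbed infinitesimally to strictly satisfy the dominance conditions while changing the objective by an arbitrarily small amount (so the supremum is attained up to $\epsilon$, which is sufficient because egalitarian welfare varies continuously in the matrix entries). Modulo this routine subtlety, each LP has size polynomial in $n, m, l$ and is solvable in polynomial time by, e.g., interior-point methods, yielding the claimed overall polynomial running time.
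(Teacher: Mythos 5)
Your proposal is correct and follows essentially the same route as the paper: one LP per action triple $(i^*,j^*,k^*)$ (so $nml$ LPs total), with linearized $\ell_\infty$ cost variables $d_2,d_3$, the dominance constraints of systems \eqref{eq:1}--\eqref{eq:2} written with an $\epsilon$ slack in place of strict inequalities, and an epigraph variable (the paper's $v_0$, your $w$) bounded above by all three utilities and maximized. The only cosmetic difference is that the paper keeps the auxiliary utility variables $v_{2,k}, v_{3,j}$ explicit, while you substitute them out; the content is identical.
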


\section{Social Distancing Game}
\label{sec:social distancing game}

Next, we consider a more practical application of the theory above. In particular, we present a number of examples of the theory we have developed. Each example highlights a different aspect of the theory. 
In the first application,  we consider the social distancing game which is a small variation of the lemonade stand game introduced by ~\cite{zinkevich2011lemonade} (due to space limitations, we defer the analysis of other games to the appendix).

\begin{quotation}
It is summer on a remote island, and you need to survive. You
decide to set up camp on the beach (which you may shift anywhere around the island), as do two others. There are twelve places to set up around the island like the numbers on a clock. The game is repeated. Every night, everyone moves under cover of darkness (simultaneously). There is no cost to move. The pandemic is eternal, so the game is infinitely repeated. The utility
of the repeated game is the time-averaged utility of the single-shot games. The only person that survives is the one with the highest total utility at the end of the game.
\end{quotation}

\begin{figure}[h]
    \centering
    \includegraphics[width=25mm,scale=0.5]{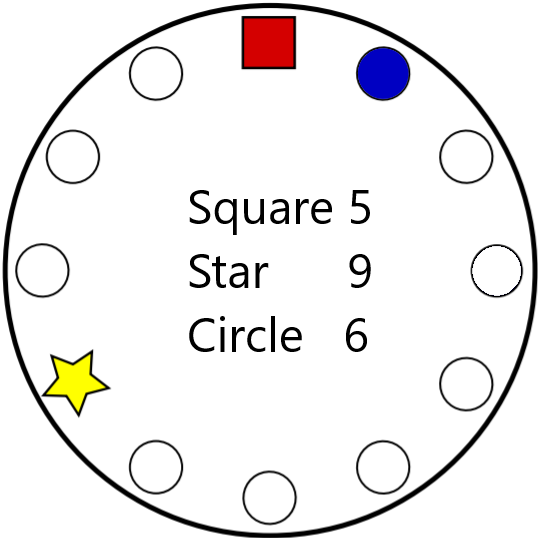}
    \caption{Example Social Distancing Game}
    \label{fig:1sdg}
\end{figure}

\begin{figure}[h]
    \centering
    \includegraphics[width=70mm,scale=0.5]{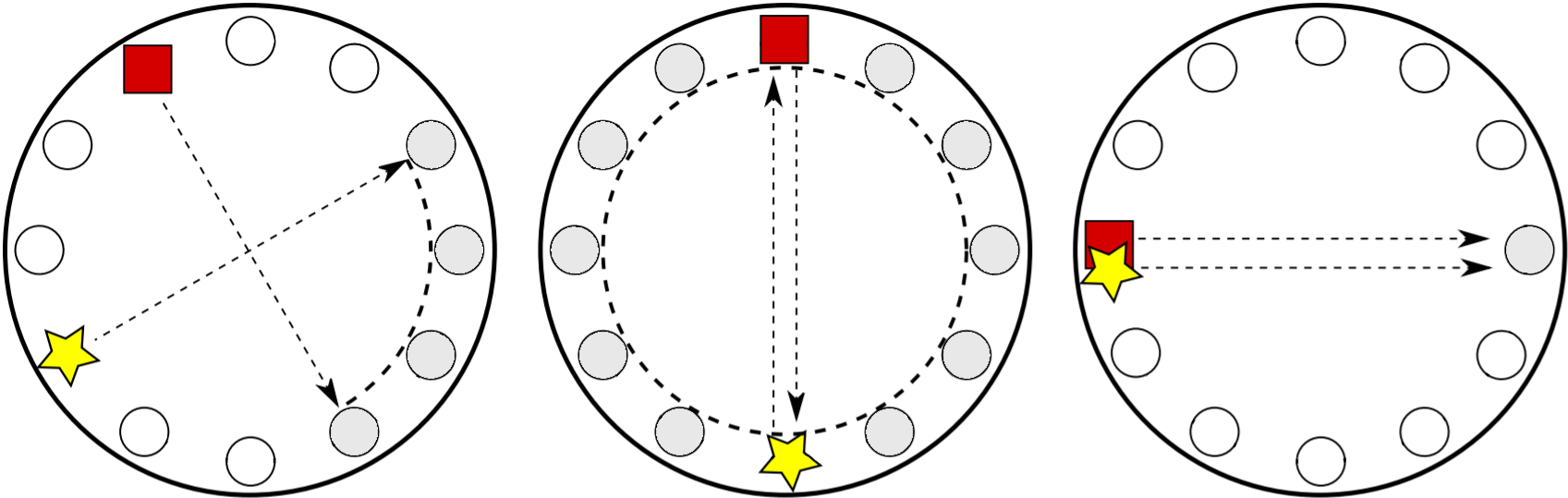}
    \caption{Best-responses for different opponent configurations: The dashed and shaded segment indicates the third player's best-response actions, and arrows point to the action opposite each opponent. (Figures reworked from~\cite{sykulski10})}
    \label{fig:2sdg}
\end{figure}

The utility of a player in a single round of the social distancing game is the sum of its distances from the other two players. The distance between two players is the length of the shortest path between them along the circumference of the clock. More formally, the distance between two positions is defined as follows:
 \[ d(i,j) = \begin{cases} 
          |i-j| &  |i-j| \leq 6\\
          12 - |i-j| & otherwise \\
          
       \end{cases}
\]

For example, if Alice sets up at the 3 o'clock location, Bob sets up at 10 o’clock, and Candy sets up at 6 o’clock, then the utility of Alice is $d(3,10) + d(3,6) = 5 + 3 = 8$, the utility of Bob is $d(10,3) + d(10,6) = 5 + 4 = 9$, and the utility of Candy is $d(6,3) + d(6,10) = 3+ 4 = 7$. If all the camps are set up in the same spot, everyone gets 0. If exactly two camps are located at the same spot, the two collocated camps get the distance to the non-collated camp as their utility and the non-collated camp gets twice the same distance as her utility. In contrast to the lemonade stand game, the social distancing game is not constant-sum. However, it is a polymatrix game, and thus the techniques developed above can be applied.

In what follows, we consider a three-player, infinitely repeated version of the social distancing game . Each player $i$ has 12 actions, $\mathcal{A}_{i} = \{1, \dots, 12\}$, each corresponding to a number on the clock. The payoff matrix for each pair of players $(i, j)$ is derived directly from the distance function $d$. That is, $A_{0}^{(i, j)}(k, l) = d(k, l) $ for all $k, l \in \mathcal{A}_{i}$.



\subsection{Winning Strategy for a Manipulator}
We now present a winning dominance solvable type-\RN{1} policy for the social distancing game. By definition, for any pair of positions $(k, l)$ on the clock, $d(k, l) \leq 6$. This implies that the maximum utility achievable by any player is $12$. In addition, a player $i$ only achieves their maximum payoff when both remaining players place themselves directly opposite of player $i$. Thus, there are only 12 combinations of pure strategies which maximize the utility of player 1, each corresponding to a single number on the clock. In particular, we choose to work with one such strategy profile, $(\mathbf{e}_{12}, \mathbf{e}_{6}, \mathbf{e}_{6})$. 
%
%
Consider the following dominance solvable policy. Set
\begin{equation*}
    \hat{A}(k, l) = \begin{cases} 
                   d(k,l) - \epsilon &   \text{ if } d(k,l) < 6\\
                   d(k,l) + \epsilon &   \text{ if } d(k,l) = 6 \\
                  \end{cases}
\end{equation*}
and let player $1$ adopt the policy $\rho_{t} = \left(\mathbf{e}_{12}, \hat{A}, \hat{A} \right)$. First, observe that, under policy $\rho$, $\mathbf{e}_{6}$ is a dominant strategy for player 2 against the fixed strategy $\mathbf{e}_{12}$ of player $1$. Additionally, by symmetry, $\mathbf{e}_{6}$ is also a dominant strategy for player 3 against the fixed strategy of player 1. Moreover, note that player 1's utility under the strategy profile $(\mathbf{e}_{12}, \mathbf{e}_{6}, \mathbf{e}_{6})$ is $12 - 2\epsilon$. Meanwhile, the utilities of both players 2 and 3 is $6 + \epsilon$. Thus, by Theorem \ref{thm1}, for sufficiently small $\epsilon$ we have 
${\mathbb{P}\left(U_{1}^{\infty} \geq U_{2}^{\infty} \text{ and } U_{1}^{\infty} \geq U_{3}^{\infty}\right)} = 1$. 
Note that such a result implies that player $1$ can guarantee her maximum payoff in long run expectation by only making an infinitesimal change to the payoff matrices!
\subsection{Maximizing Egalitarian Social Welfare}

We now present a socially good solution a manipulator can guide the players to converge to by using a winning dominance solvable policy. In the standard version of the game without a manipulator, one of the "socially optimal" strategy profiles is $(\textbf{e}_{12},\textbf{e}_4,\textbf{e}_8)$, since in this profile, all the players are spread out evenly around the clock. It is possible for a manipulator to guide the players to an approximately optimal solution, in the sense that she can enable convergence to the strategy profile $(\textbf{e}_{12},\textbf{e}_5,\textbf{e}_7)$.
Consider the following dominance solvable policy. Set
\begin{equation*}
    \hat{A}(k, l) = \begin{cases} 
                   d(k,l)                 &   \text{ if } k \neq 12\\
                   d(k,l) - 1 - 2\epsilon &   \text{ if } k = 12 \text{ and } l \neq 5\\
                   d(k,l) + 1 - \epsilon  &   \text{ if } k = 12 \text{ and } l = 5 \\
                  \end{cases}
\end{equation*}
and
\begin{equation*}
    \tilde{A}(k, l) = \begin{cases} 
                   d(k,l)                 &   \text{ if } k \neq 12\\
                   d(k,l) - 1 + \epsilon  &   \text{ if } k = 12 \text{ and } l \neq 7\\
                   d(k,l) + 1 - \epsilon  &   \text{ if } k = 12 \text{ and } l = 7 \\
                  \end{cases}
\end{equation*}

and let player $1$ adopt the policy $\rho_{t} = \left(\mathbf{e}_{12}, \hat{A}, \tilde{A} \right)$. First, observe that, under policy $\rho$, $\mathbf{e}_{5}$ is a dominant strategy for player 2 against the fixed strategy $\mathbf{e}_{12}$ of player $1$. Additionally, $\mathbf{e}_{7}$ is a dominant strategy for player 3 against the fixed strategy of player 1. Moreover, note that player 1's utility under the strategy profile $(\mathbf{e}_{12}, \mathbf{e}_{5}, \mathbf{e}_{6})$ is $10 - (2+\epsilon) = 8-\epsilon$. Meanwhile, the utilities of both players 2 and 3 is also $8 - \epsilon$. Thus, by Theorem \ref{thm1}, for sufficiently small $\epsilon>0$ we have 
${\mathbb{P}\left(U_{1}^{\infty} \geq U_{2}^{\infty} \text{ and } U_{1}^{\infty} \geq U_{3}^{\infty}\right)} = 1$.
Note that such a result implies that player $1$ can guarantee that the game converges to an approximately socially optimal solution whilst ensuring that she still wins the game!

\section{Three-Player Iterated Prisoner's Dilemma}
\label{section: prisoner's dilemma}

The next example we consider is a three-player version of the iterated prisoner's dilemma. As in the two-player version, each player must choose from a set of two actions $\mathcal{A} = \{\text{C, D}\}$ which stand for cooperate and defect respectively. The payoff matrices for each player are defined as follows:
\[A^{(i,j)}_0 = 
\begin{bmatrix}
    3 & 0\\
    5 & 1\\
\end{bmatrix} \text{ if } i < j \;\; \text{and} \;\;
A^{(i,j)}_0 = \begin{bmatrix}
    3 & 5\\
    0 & 1\\
\end{bmatrix} \text{ if } i > j
\]
\subsection{Winning Strategy for a Manipulator}
Note that defection is a strictly dominant strategy for each player. Moreover, the payoff awarded to each player is the same when everyone defects. As a result, by Theorem \ref{thm1}, player 1 can win the game with high probability by repeatedly defecting, and never altering payoff matrices. Note that this policy is zero cost in the sense that the manipulator never needs to alter any payoff matrices. However, the margin is also zero. We now illustrate how alterations to the payoff matrices can result in a winning policy for the manipulator, which has positive margin, and encourages cooperation between players.  
In particular, we outline a policy which the manipulator may use to converge to the strategy profile $(D,C,C)$. For $0 \leq \epsilon \leq 7/6$ we set
\begin{equation*}
     \hat{A} = 
        \begin{bmatrix}
            3 & 5\\
            3/2 + \epsilon & -1/2\\
        \end{bmatrix}.
\end{equation*}


  
Let player $1$ adopt the policy $\rho_{t} = \left(\textbf{e}_2, \hat{A}, \hat{A}\right)$. Note that the mixed strategy of any player is characterized by the probability  that they cooperate. If player $3$ cooperates with probability $\lambda$ then the expected utility player $2$ receives from cooperating is $3/2 + \epsilon + 3\lambda$. Meanwhile, the expected utility player $2$ receives by defecting is $1/2 + 4\lambda$. Since, $\lambda \in [0, 1]$, this implies that cooperation is a strictly dominant strategy for player $2$. By symmetry, cooperation is also a strictly dominant strategy for player $3$. 

The single shot utility under the profile $(D,C,C)$ for player 1 is $7 -2\epsilon$. Meanwhile the utilities of players 2 and 3 are both $4.5 + \epsilon$. By Theorem 4.2, this implies 
$\mathbb{P}\left(U^{\infty}_{1} \geq U^{\infty}_{2} = U^{\infty}_{3}\right) = 1$ 
since $\epsilon \leq 7/6$.
Observe that the policy $\rho$ has a much improved margin relative to the trivial policy of repeated defection we first considered. In fact, the margin of policy $\rho$ is $2.5 - \epsilon$, which is the maximum margin achievable by a dominance solvable policy as $\epsilon \to 0$.



\section{Electric vs Petrol Futures}
\label{section: electric petrol futures}

Electric vs Petrol Futures is a three-player repeated game in which each player has to decide on investing in electric or petrol for the upcoming year. The manipulator is an energy company and the other two players are automobile manufacturers who can be considered as payoff maximizing enterprises. The manipulator wants to make as much profit as the other two companies. The actions are $\mathcal{A}_1=\mathcal{A}_2=\mathcal{A}_3 = \{P,E\}$. The manipulator decides whether to invest in batteries for electric cars or petrol cars, while the other two players decide whether the new automobile model they release for the upcoming year will be electric or petrol. The game matrices are:

\begin{align*}
& A^{(1,2)}_0  = 
\begin{bmatrix}
    0.87 & 0\\
    0.8 & 2\\
\end{bmatrix}, \; \;
A^{(1,3)}_0 = 
\begin{bmatrix}
    0.87 & 0\\
    0.8 & 2\\
\end{bmatrix}, \; \;
A^{(2,1)}_0 = \begin{bmatrix}
    2 & 1.5\\
    1.75 & 1.25\\
\end{bmatrix} \\
& A^{(3,1)}_0 = \begin{bmatrix}
    2 & 1.5\\
    1.75 & 1.25\\
\end{bmatrix} \; \;
A^{(2,3)}_0 = \begin{bmatrix}
    0 & 1\\
    0.49 & 0\\
\end{bmatrix} \; \;
A^{(3,2)}_0 = \begin{bmatrix}
    0 & 0.49\\
    1 & 0\\
\end{bmatrix}
\end{align*}

It can be shown that $P$ is the strictly dominant strategy for both P2 and P3. If player 2 and 3 are consistent, then with probability 1, each of their long-run utilities will be at least the utility if they had just played the best constant strategy. Since the best constant strategy is their strictly dominant strategy, they will make utility at least $1.75$. The manipulator can make at most $1.74$ in utility in the long run. No matter what player 1 does, she will lose in the long-run against player 2 and player 3 if they are consistent.\\

\subsection{Winning Strategy for a Manipulator}
It is possible for a manipulator to win this game by following a strategy that is a conventional winning dominance solvable type-\RN{1} policy by enabling convergence to the strategy profile $(\textbf{e}_1,\textbf{e}_1,\textbf{e}_1)$ which stands for $(E,E,E)$.
For some $3/12 < \epsilon < 11/12$,  set  
\begin{equation*}
    \hat{A} = 
    \begin{bmatrix}
        2 - \epsilon & 1.5 + \epsilon \\
        1.75 - \epsilon & 1.25 + \epsilon
    \end{bmatrix}
\end{equation*}

Let player $1$ adopt the policy $\rho_{t} = \left(\textbf{e}_1, \hat{A}, \hat{A}\right)$.
For the profile $(\textbf{e}_1,\textbf{e}_1,\textbf{e}_1)$ the single shot utility for P1 is the the payoff from P2 and P3 which is $2 + 2 = 4$ minus the cost for changing the matrices, which is $2\epsilon$ for a total of $4 - 2\epsilon$. The single-shot utility for P2 is the payoff she gets from P1 which is $1.25 + \epsilon$ and the payoff she gets from P3 which is $0$ for a total of $1.25 + \epsilon$. By symmetry, the payoff to P3 is also $1.25 + \epsilon$. 
By Theorem \ref{thm1} this implies ${\mathbb{P}\left(U_1^{\infty} \geq U_2^{\infty} \text{ and } U_1^{\infty} \geq U_3^{\infty}\right)} = 1$ since $\epsilon < 11/12$.
\section{Battle of the Buddies}
\label{section: battle of budies}

Battle of the Buddies (BoB) is a three-player coordination game which is a generalization of Battle of the Sexes. There are three events, and player $i$ strictly prefers going to event $i$ with both her buddies, over all other outcomes. The game has three actions $\mathcal{A} = \{1,2,3\}$ each of which correspond to going to a particular event. The game matrices are defined as follows:

\[A_0^{(1,2)} = \begin{bmatrix}
    3 & 0 & 0\\
    0 & 2 & 0\\
    0 & 0 & 1
   
\end{bmatrix} \; \; 
A_0^{(1,3)} = \begin{bmatrix}
    3 & 0 & 0\\
    0 & 1 & 0\\
    0 & 0 & 2
   
\end{bmatrix} \; \; 
A_0^{(2,1)} = \begin{bmatrix}
    2 & 0 & 0\\
    0 & 3 & 0\\
    0 & 0 & 1
   
\end{bmatrix} \]
\[ 
A_0^{(2,3)} = \begin{bmatrix}
    1 & 0 & 0\\
    0 & 3 & 0\\
    0 & 0 & 2
   
\end{bmatrix} \; \;
A_0^{(3,1)} = \begin{bmatrix}
    2 & 0 & 0\\
    0 & 1 & 0\\
    0 & 0 & 3
   
\end{bmatrix} \; \;
A_0^{(3,2)} = \begin{bmatrix}
    1 & 0 & 0\\
    0 & 2 & 0\\
    0 & 0 & 3
   
\end{bmatrix}\]

The game is repeated for $T$ rounds. In the following we consider a game in which the original matrices of the game are the game matrices of Battle of the Buddies, but with a manipulator that may change the matrices of P2 and P3. 

\subsection{Winning Strategy for a Manipulator}

It is possible for a manipulator to win BoB by following a winning dominance solvable type-\RN{2} policy. We assume that player 2 is consistent and that player 3 is persistent.
Note that the most advantageous position to any agent in the game is, a position in which the other two agents co-ordinate with her by going with her to her favourite event.
For player one the most advantageous strategy profile is $(\textbf{e}_1,\textbf{e}_1,\textbf{e}_1)$ where she obtains her maximum possible payoff of $\|A_0^{(1,2)}\|_\infty + \|A_0^{(1,3)}\|_\infty = 3+3 = 6$.
For some $0 < \epsilon < 1$ set
\begin{equation*}
    \hat{A} = 
        \begin{bmatrix}
        2.5 + \epsilon & -0.5 & 0\\
        0 & 3 & 0\\
        0 & 0 & 1\\
        \end{bmatrix}    
\end{equation*}

Let player $1$ adopt the policy $\rho_{t} = \left(\textbf{e}_1, \hat{A}, A^{(3,2)}_0\right)$.
It can be shown that this renders $\textbf{e}_1$ as the strictly dominant strategy against the strategy $\textbf{e}_1$ of P1 and any strategy of P3. Further it renders $\textbf{e}_1$ as the strictly dominant strategy for player 3 against the strategy $\textbf{e}_1$ of player 1 and the strategy $\textbf{e}_1$ of player 2.
The single shot utility under the profile $(\textbf{e}_1,\textbf{e}_1,\textbf{e}_1)$ for P1 is the revenue of $6$, minus the cost for changing $A^{(2,1)}_0$ which is $0.5 + \epsilon$. The total is $6 - (0.5 + \epsilon) = 5.5 - \epsilon$. The single-shot utility of P2 is the payoff she gets from P1 which is $2.5 + \epsilon$, plus her payoff from P3 which is $1$ for a total of $3.5 + \epsilon$. The single-shot utility of P3 is $2+1 = 3$.
By Theorem \ref{thm1} this implies ${\mathbb{P}\left(U_1^{\infty} \geq U_2^{\infty} \text{ and } U_1^{\infty} \geq U_3^{\infty}\right)} = 1$ since $\epsilon < 1$.
The fraction of the cost it takes to converge to $(\textbf{e}_{1},\textbf{e}_1,\textbf{e}_1)$ over the revenue for converging to this profile is $\frac{\sum_{(i,j) \in P}{\|A^{(i,j)}_1-A_0^{(i,j)}\|_{\infty}}}{\textbf{e}_{1}^TA^{(1,2)}_t\textbf{e}_{1} + \textbf{e}_{1}^TA^{(1,3)}_t\textbf{e}_{1}} =  \frac{0.5 + \epsilon}{6} \approx 8.33\%$.

The above solution for BoB is optimal in both minimizing the change in cost for modification of the matrices, and maximizing the margin of winning. But if we consider the objective of acting adversarially against player 2 and 3 by minimizing the objective function $f(V) = v_{2} + v_{3}$ (sum of the single-shot utilities of the other two players), then this solution has a sub-optimal objective value of $v_{2} + v_{3} = 4 + \epsilon$.
We now highlight a different solution that is optimal in the sense of maximally reducing the payoff of player 2 and 3 whilst winning the game. 
The strategy profile to converge to is $(\textbf{e}_1,\textbf{e}_1,\textbf{e}_1)$. 
To enable profitable convergence to this strategy profile the manipulator can play the following procedure using a winning dominance solvable type-\RN{2} policy:
For some $0 < \epsilon < 1$ set

\begin{equation*}
    \hat{A} = \begin{bmatrix}
    \epsilon & -3 & -3\\
    0 & 3 & 0\\
    0 & 0 & 1\\
\end{bmatrix} \;\;\;\;\;
    \tilde{A} = 
    \begin{bmatrix}
        \epsilon & -2+\epsilon & -2+\epsilon\\
        0 & 1 & 0\\
        0 & 0 & 3\\
\end{bmatrix}
\end{equation*}


%
%

Let player $1$ adopt the policy $\rho_{t} = \left(\textbf{e}_1, \hat{A}, \tilde{A}\right)$.
It can be shown that this renders $\textbf{e}_1$ as the strictly dominant strategy for player 2 against the strategy $\textbf{e}_1$ of P1 and any strategy of P3, and it renders $\textbf{e}_1$ as the dominant strategy for P3 against the strategy $\textbf{e}_1$ of P1 and $\textbf{e}_1$ of P2.
The single shot utility of P1 is the revenue $6$, minus the cost for modifying $A^{(2,1)}_0$ which is $3$ and the cost for modifying $A^{(3,1)}_0$ which is $2-\epsilon$. The total is $6 - (5 - \epsilon) = 1 + \epsilon$. The single-shot utility of P2 is the payoff she gets from P1 which is $\epsilon$, plus her payoff from P3 which is $1$ for a total of $1 + \epsilon$. The single-shot utility of P3 is $\epsilon+1 = 3$ which she gets from playing her dominant strategy.
Since all three players have the same single-shot utility under this profile, ${\mathbb{P}\left(U_1^{\infty} \geq U_2^{\infty} \text{ and } U_1^{\infty} \geq U_3^{\infty}\right)} = 1$. 
This solution has an objective value of $v_2+v_3 = 2 + 2\epsilon$. We see that the cost for changing the matrices is $5 - \epsilon$ for this solution versus $0.5 + \epsilon$ for the previous solution. We note that by changing the objective from winning with the least cost of modification to winning while maximally reducing the payoff for other players, we end up with completely different solutions. 
\section{Numerical Results}

In our empirical experiments we have tested winning dominance solvable policies and winning batch coordination policies against players that use standard No-Regret Algorithms. This is because all no-regret players are consistent, by Proposition \ref{prop2}. The algorithms we consider for player 2 and 3 are Multiplicative weights update method, Follow-the-Regularized-Leader and Linear multiplicative weights update. For all the games discussed in the paper, the empirical results match the theory. We have set the parameters of MWU and FTRL to ensure that they have fast convergence rates and the parameters of LMWU to ensure that is has a slow convergence rate.\\

We test each policy in at least $200$ game simulations and each game simulation is run for exactly $100$ rounds. We see that if we run the experiments for a large number of rounds the experiments match the theory exactly. This is because of our strong theoretical guarantees. Therefore we have chosen to run each game for only $100$ rounds. For example in the Electric vs Petrol Futures example, if player 2 uses MWU and player 3 uses FTRL then we see that if the manipulator uses the best constant strategy, her Win-Rate is 0\%. However we see that if she uses a winning dominance solvable policy instead her Win-Rate jumps to 100\%.\\

The efficiency of the algorithms we use to find the policies can be increased quite a bit with a simple trick. It is possible to run the sequence of linear feasibility problems (or linear programs) to solve for a winning dominance solvable policy for three players in $\mathcal{O}(nml(m^3+l^3))$ time which is $\mathcal{O}(n^6)$ when $n=m=l$.\footnote{Linear feasibility problems with $k$ variables can be solved using an algorithm with running time that is $\mathcal{O}(k^3)$ by ~\cite{bertsimas97}} This is done by considering only the entries from a single row as the variables from the matrices $A^{(2,1)}$ and $A^{(3,1)}$ instead of considering all of the entries as variables. That is, we change the matrices $A^{(2,1)}$ and $A^{(3,1)}$ only in row $i^*$ where $\textbf{e}_{i^*}$ is the constant strategy of the manipulator. \\

Each of the experiments where at least one player uses Follow-the-Regularized-Leader is run for $N = 200$ simulations and each of those 200 simulations contain $T= 100$ rounds of play.
If no player uses Follow-the-Regularized-Leader then the experiment is run for $N = 2000$ simulations and each of those simulations contain $T = 100$ rounds of play.\\

We calculate the Win-Rate as \[\frac{\text{Number of Games won by Player 1}}{\text{Total number of Games}}\]

We calculate the margin of a single game where \[U_1(\textbf{x}_t)_{t=1}^{100} > U_2(\textbf{y}_t)_{t=1}^{100} \text{ and } U_1(\textbf{x}_t)_{t=1}^{100} > U_2(\textbf{z}_t)_{t=1}^{100}\] as 
\[\min\Big\{U_1(\textbf{x}_t)_{t=1}^{100} - U_2(\textbf{y}_t)_{t=1}^{100}, U_1(\textbf{x}_t)_{t=1}^{100} - U_2(\textbf{z}_t)_{t=1}^{100}\Big\}\] We calculate the Margin of $N$ games (simulations) as the average of the margin of the games where player 1 won. \\

The following subsections display the results of the manipulator playing against different No-Regret Algorithms used by player 2 and 3 while using either the best constant strategy or a winning dominance solvable policy.\\

\subsection{Three-Player Iterated Prisoner's Dilemma}
The following two tables compare the Win-Rate and margin of the best constant strategy vs a winning dominance solvable type-\RN{1} policy in the Three-Player Iterated Prisoner's Dilemma.
\begin{table}[H]
\centering
  \caption{Win-Rate and Margin of Player 1 when she uses the best constant strategy against No-regret Algorithms}
  \label{tab:locations}
  \begin{tabular}{rlll}\toprule
    \textit{Player 2} & \textit{Player 3} & \textit{Win-Rate} & \textit{Margin}\\ \midrule
    MWU & FTRL & 100\% & 0.019\\
    MWU & LMWU & 100\% & 0.03\\
    LMWU & LMWU & 100\% & 0.06\\ \bottomrule
  \end{tabular}
\end{table}

\begin{table}[H]
\centering
  \caption{Win-Rate and Margin of Player 1 when she uses a type-\RN{1} policy against No-regret Algorithms}
  \label{tab:locations}
  \begin{tabular}{rlll}\toprule
    \textit{Player 2} & \textit{Player 3} & \textit{Win-Rate} & \textit{Margin}\\ \midrule
    MWU & FTRL & 100\% & 2.007\\
    MWU & LMWU & 100\% & 1.320\\
    LMWU & LMWU & 100\% & 1.500\\ \bottomrule
  \end{tabular}
\end{table}

We see comparing the Win-Rate of both policies that both of these solutions are equally good. However if the manipulator has an additional objective of winning by a large margin, the type-\RN{1} policy is clearly better than the best constant strategy. We see that within 100 rounds the margin under the type-\RN{1} policy is close to the infinite-horizon margin which is $2.5 - 3\epsilon$ (provided the other two players use No-Regret algorithms with fast convergence rates such as MWU and FTRL in this case). We have chosen $\epsilon = 0.1$ for our experiments, and so the infinite horizon margin is $2.20$ and we see that if we run each simulation for $1000$ rounds, the margin is within $0.01$ of $2.20$. 
\subsection{Social Distancing Game}
The following two tables compare the Win-Rate and margin of the best constant strategy vs a winning dominance solvable type-\RN{1} policy in the Social Distancing Game.
\begin{table}[H]
\centering
  \caption{Win-Rate and Margin of Player 1 when she uses the best constant strategy against No-regret Algorithms}
  \label{tab:locations}
  \begin{tabular}{rlll}\toprule
    \textit{Player 2} & \textit{Player 3} & \textit{Win-Rate} & \textit{Margin}\\ \midrule
    MWU & FTRL & 98.75\% & 2.59\\
    MWU & LMWU & 77.7\% & 1.01\\
    LMWU & LMWU & 99.9\% & 2.72\\ \bottomrule
  \end{tabular}
\end{table}

\begin{table}[H]
\centering
  \caption{Win-Rate and Margin of Player 1 when she uses a type-\RN{1} policy against No-regret Algorithms}
  \label{tab:locations}
  \begin{tabular}{rlll}\toprule
    \textit{Player 2} & \textit{Player 3} & \textit{Win-Rate} & \textit{Margin}\\ \midrule
    MWU & FTRL & 100\% & 5.35\\
    MWU & LMWU & 79\% & 1.09\\
    LMWU & LMWU & 99.8\% & 2.90\\ \bottomrule
  \end{tabular}
\end{table}

We see comparing the Win-Rate of both policies that both of these solutions are quite similar although the type-\RN{1} policy does have a 100\% Win-Rate when playing against MWU and FTRL. We see that the margin under the type-\RN{1} policy is clearly better than the best constant strategy. We see that within 100 rounds the margin under the type-\RN{1} policy is close to the infinite-horizon margin which is $6 - 2\epsilon$ (provided the other two players use No-Regret algorithms with fast convergence rates such as MWU and FTRL in this case). We have chosen $\epsilon = 0.1$ for our experiments, and so the infinite horizon margin is $5.80$ and we see that if we run each simulation for $1000$ rounds, the margin is within $0.01$ of $5.80$.\\

It is important to note that the Win-Rates under the best constant strategy and the type-\RN{1} policy are close purely because of the no-regret algorithms we have chosen. It is easy to construct examples of adversarial no-regret algorithms against which playing a constant strategy (without manipulating the matrices) will never lead to a win. For example, if player 2 and 3 coordinate on playing at 3 o'clock and 9 o'clock respectively for all time, then these are no-regret algorithms against which player 1 can never win by simply playing a constant strategy. 

\subsection{Electric vs Petrol Futures}
The following two tables compare the Win-Rate and margin of the best constant strategy vs a winning dominance solvable type-\RN{1} policy in the Electric vs Petrol Futures Game.
\begin{table}[H]
\centering
  \caption{Win-Rate and Margin of Player 1 when she uses the best constant strategy against No-regret Algorithms}
  \label{tab:locations}
  \begin{tabular}{rlll}\toprule
    \textit{Player 2} & \textit{Player 3} & \textit{Win-Rate} & \textit{Margin}\\ \midrule
    MWU & FTRL & 0\% & 0.0\\
    MWU & LMWU & 0\% & 0.0\\
    LMWU & LMWU & 14\% & 0.0143\\ \bottomrule
  \end{tabular}
\end{table}

\begin{table}[H]
\centering
  \caption{Win-Rate and Margin of Player 1 when she uses a type-\RN{1} policy against No-regret Algorithms}
  \label{tab:locations}
  \begin{tabular}{rlll}\toprule
    \textit{Player 2} & \textit{Player 3} & \textit{Win-Rate} & \textit{Margin}\\ \midrule
    MWU & FTRL & 100\% & 0.264\\
    MWU & LMWU & 88.5\% & 0.1319\\
    LMWU & LMWU & 68\% & 0.114\\ \bottomrule
  \end{tabular}
\end{table}

We had constructed this example to showcase the existence of a game in which manipulation of the game matrices is essential to even stand a chance at winning in the long-run against consistent players. Without this ability, no matter what player 1 does, she will lose in the long-run against player 2 and player 3 if they use no-regret algorithms. This is exactly what we see in our experimental results. Comparing the Win-Rate of both policies, we see that the type-\RN{1} policy is substantially better than the best constant strategy. In particular if player 2 and 3 use MWU and FTRL respectively, then the best constant strategy has a Win-Rate of 0\%, while the type-\RN{1} policy has a Win-Rate of 100\%. When the Win-Rate is 0\%, the margin is also zero. However when player 2 and 3 use LWMU, we see that the margin of the best constant strategy is positive, but the margin of the type-\RN{1} policy is 8 times that of the former policy.

\subsection{Battle of the Buddies}
The following two tables compare the Win-Rate and margin of the best constant strategy vs a winning dominance solvable type-\RN{2} policy in the Battle of the Buddies.
\begin{table}[H]
\centering
  \caption{Win-Rate and Margin of Player 1 when she uses the best constant strategy against No-regret Algorithms}
  \label{tab:locations}
  \begin{tabular}{rlll}\toprule
    \textit{Player 2} & \textit{Player 3} & \textit{Win-Rate} & \textit{Margin}\\ \midrule
    MWU & FTRL & 100\% & 2.949\\
    MWU & LMWU & 100\% & 2.923\\
    LMWU & LMWU & 100\% & 2.89\\ \bottomrule
  \end{tabular}
\end{table}

\begin{table}[H]
\centering
  \caption{Win-Rate and Margin of Player 1 when she uses a type-\RN{2} policy against No-regret Algorithms}
  \label{tab:locations}
  \begin{tabular}{rlll}\toprule
    \textit{Player 2} & \textit{Player 3} & \textit{Win-Rate} & \textit{Margin}\\ \midrule
    MWU & FTRL & 100\% & 1.76\\
    MWU & LMWU & 100\% & 1.74\\
    LMWU & LMWU & 100\% & 1.73\\ \bottomrule
  \end{tabular}
\end{table}

In this example we see that the Win-Rate of the best constant strategy and the type-\RN{2} policy are equally good. However, the margin of the best constant strategy is twice as good. 
This is purely because of the No-Regret algorithms we have chosen. It is easy to construct examples of adversarial No-Regret algorithms against which playing a constant strategy (without manipulating the matrices) will never lead to a win. For example, if player 2 and 3 coordinate on going to event 3 for all time, then these are No-Regret algorithms against which player 1 can never win. But, the type-\RN{2} policy will always win in the long-run. We believe this example showcases the robustness of our solution.

\section{Conclusions and Further Work}
\label{sec:conclusions}

In this paper, we considered a 3-player iterated polymatrix game setting in which our agent is allowed to (slightly) manipulate the underlying game matrices of the other agents for which she pays a manipulation cost, while the other agents are 'consistent'. In our framework, two examples of consistent agents are those that use follow-the-leader or any no-regret algorithm to play the game.
We first proposed a payoff matrix manipulation scheme and policies for our agent that provably guarantees that the utility of any consistent opponent would converge to a value we desire.

Using this theory we developed winning dominance solvable policies and winning batch coordination policies, both of which have strong theoretical guarantees such as tractability and the ability to win in a finite number of rounds almost surely. In addition, we showed that these policies can be found efficiently by solving a sequence of linear feasibility problems. We then considered additional objectives the manipulator may have, such as winning by the largest margin or whilst seeing a large improvement relative to the cost of modifying the payoff matrices. We then considered a socially good objective different from winning, namely maximization of the egalitarian social welfare. We showed that our framework could be extended to capture such objectives via linear objective functions. After this, we considered a social distancing game and showed that, by making only infinitesimal changes to the payoff matrices, the manipulator can maximize her payoff i.e. maximize her distance from the other players. The manipulator can also guide the utilities of all players to converge to a socially optimal solution.

One can observe that the results presented in this paper extend trivially to generic $n$-player games where the manipulator can change payoff tensors of the other players for an infinity-norm cost. A manipulator can achieve the same theoretical guarantees we have presented, in the more general setting, by solving a sequence of linear feasibility problems that is similar to the one presented in this paper. 
In a generic $n$-player setting, where each player has $k$ actions, the generalization of the our proposed winning policies can be found using an algorithm with running time that is $\mathcal{O}(k^{4n-3})$.

In many practical settings, a manipulator may have less control over the underlying system than what has been assumed in our setting. With this in mind, a potential direction for future work would be to consider a setting in which the manipulator is further constrained in the way she can manipulate the underlying game. Similarly, there are many (context-dependent) goals and objectives that the manipulator may have that were not tackled in this paper, which form interesting topics for future work. Lastly, in many real world scenarios, the manipulator may only have partial knowledge of the underlying game. It would be interesting to see if this uncertainty could be integrated into the framework we have devised. For example in the Electric vs Petrol Futures example, if player 2 uses MWU and player 3 uses FTRL then we see that if the manipulator uses the best constant strategy, her Win-Rate is 0\%. However we see that if she uses a winning dominance solvable policy instead her Win-Rate jumps to 100\%.

\bibliographystyle{unsrt}  
\bibliography{references}  

\begin{thebibliography}{10}

\bibitem{nalebuff1996co}
Barry~J Nalebuff, Adam Brandenburger, and Agus Maulana.
\newblock {\em Co-opetition}.
\newblock HarperCollinsBusiness London, 1996.

\bibitem{lowe2017multi}
Ryan Lowe, YI~WU, Aviv Tamar, Jean Harb, OpenAI Pieter~Abbeel, and Igor
  Mordatch.
\newblock Multi-agent actor-critic for mixed cooperative-competitive
  environments.
\newblock {\em Advances in Neural Information Processing Systems},
  30:6379--6390, 2017.

\bibitem{wen2021cooperative}
Chao Wen, Miao Xu, Zhilin Zhang, Zhenzhe Zheng, Yuhui Wang, Xiangyu Liu,
  Yu~Rong, Dong Xie, Xiaoyang Tan, Chuan Yu, et~al.
\newblock A cooperative-competitive multi-agent framework for auto-bidding in
  online advertising.
\newblock {\em arXiv e-prints}, pages arXiv--2106, 2021.

\bibitem{ryu2021cooperative}
Heechang Ryu, Hayong Shin, and Jinkyoo Park.
\newblock Cooperative and competitive biases for multi-agent reinforcement
  learning.
\newblock In {\em Proceedings of the 20th International Conference on
  Autonomous Agents and MultiAgent Systems}, pages 1091--1099, 2021.

\bibitem{bansal2018emergent}
Trapit Bansal, Jakub Pachocki, Szymon Sidor, Ilya Sutskever, and Igor Mordatch.
\newblock Emergent complexity via multi-agent competition.
\newblock In {\em International Conference on Learning Representations}, 2018.

\bibitem{panait2005cooperative}
Liviu Panait and Sean Luke.
\newblock Cooperative multi-agent learning: The state of the art.
\newblock {\em Autonomous agents and multi-agent systems}, 11(3):387--434,
  2005.

\bibitem{phan2020learning}
Thomy Phan, Thomas Gabor, Andreas Sedlmeier, Fabian Ritz, Bernhard Kempter,
  Cornel Klein, Horst Sauer, Reiner Schmid, Jan Wieghardt, Marc Zeller, et~al.
\newblock Learning and testing resilience in cooperative multi-agent systems.
\newblock In {\em Proceedings of the 19th International Conference on
  Autonomous Agents and MultiAgent Systems}, pages 1055--1063, 2020.

\bibitem{yuan2020cooperative}
Quan Yuan, Shuai Li, Chen Wang, and Guangming Xie.
\newblock Cooperative-competitive game based approach to the local path
  planning problem of distributed multi-agent systems.
\newblock In {\em 2020 European Control Conference (ECC)}, pages 680--685.
  IEEE, 2020.

\bibitem{pant2021strategic}
Vik Pant.
\newblock {\em Strategic Coopetition-A Conceptual Modeling Framework for
  Analysis and Design}.
\newblock PhD thesis, University of Toronto (Canada), 2021.

\bibitem{tan1993multi}
Ming Tan.
\newblock Multi-agent reinforcement learning: Independent vs. cooperative
  agents.
\newblock In {\em Proceedings of the tenth international conference on machine
  learning}, pages 330--337, 1993.

\bibitem{dinh2021last}
Le~Cong Dinh, Tri-Dung Nguyen, Alain~B Zemhoho, Long Tran-Thanh, et~al.
\newblock Last round convergence and no-dynamic regret in asymmetric repeated
  games.
\newblock In {\em Algorithmic Learning Theory}, pages 553--577. PMLR, 2021.

\bibitem{bishop2021guide}
Nicholas Bishop, Le~Cong Dinh, and Long Tran-Thanh.
\newblock How to guide a non-cooperative learner to cooperate: Exploiting
  no-regret algorithms in system design.
\newblock In {\em Proceedings of the 20th International Conference on
  Autonomous Agents and MultiAgent Systems}, 2021.

\bibitem{zinkevich2011lemonade}
Martin~A Zinkevich, Michael Bowling, and Michael Wunder.
\newblock The lemonade stand game competition: solving unsolvable games.
\newblock {\em ACM SIGecom Exchanges}, 10(1):35--38, 2011.

\bibitem{baglieri2012asymmetric}
Daniela Baglieri, David Carf{\`\i}, and Giovanni~Battista Dagnino.
\newblock Asymmetric r\&d alliances and coopetitive games.
\newblock In {\em International conference on information processing and
  management of uncertainty in knowledge-based systems}, pages 607--621.
  Springer, 2012.

\bibitem{carfi2012global}
David Carf{\`\i} and Daniele Schiliro.
\newblock Global green economy and environmental sustainability: a coopetitive
  model.
\newblock In {\em International Conference on Information Processing and
  Management of Uncertainty in Knowledge-Based Systems}, pages 593--606.
  Springer, 2012.

\bibitem{samvelyan2019starcraft}
Mikayel Samvelyan, Tabish Rashid, Christian~Schr{\"o}der de~Witt, Gregory
  Farquhar, Nantas Nardelli, Tim~GJ Rudner, Chia-Man Hung, Philip~HS Torr,
  Jakob~N Foerster, and Shimon Whiteson.
\newblock The starcraft multi-agent challenge.
\newblock In {\em AAMAS}, 2019.

\bibitem{Nicolo06}
Nicolo Cesa-Bianchi and G{\'a}bor Lugosi.
\newblock {\em Prediction, learning, and games}.
\newblock Cambridge university press, 2006.

\bibitem{janovskaja1968equilibrium}
Elena Janovskaja.
\newblock Equilibrium points in polymatrix games.
\newblock {\em Lithuanian Mathematical Journal}, 8(2):381--384, 1968.

\bibitem{howson1972equilibria}
Joseph T~Jr Howson.
\newblock Equilibria of polymatrix games.
\newblock {\em Management Science}, 18(5-part-1):312--318, 1972.

\bibitem{eaves1973polymatrix}
B~Curtis Eaves.
\newblock Polymatrix games with joint constraints.
\newblock {\em SIAM Journal on Applied Mathematics}, 24(3):418--423, 1973.

\bibitem{deligkas2017computing}
Argyrios Deligkas, John Fearnley, Rahul Savani, and Paul Spirakis.
\newblock Computing approximate nash equilibria in polymatrix games.
\newblock {\em Algorithmica}, 77(2):487--514, 2017.

\bibitem{cai2016zero}
Yang Cai, Ozan Candogan, Constantinos Daskalakis, and Christos Papadimitriou.
\newblock Zero-sum polymatrix games: A generalization of minmax.
\newblock {\em Mathematics of Operations Research}, 41(2):648--655, 2016.

\bibitem{ortiz2015graphical}
Luis~E Ortiz.
\newblock Graphical potential games.
\newblock {\em arXiv preprint arXiv:1505.01539}, 2015.

\bibitem{miller1991copositive}
Douglas~A Miller and Steven~W Zucker.
\newblock Copositive-plus lemke algorithm solves polymatrix games.
\newblock {\em Operations research letters}, 10(5):285--290, 1991.

\bibitem{erdem2012graph}
Aykut Erdem and Marcello Pelillo.
\newblock Graph transduction as a noncooperative game.
\newblock {\em Neural Computation}, 24(3):700--723, 2012.

\bibitem{ma2021polymatrix}
Jeffrey Ma, Alistair Letcher, Florian Sch{\"a}fer, Yuanyuan Shi, and Anima
  Anandkumar.
\newblock Polymatrix competitive gradient descent.
\newblock {\em arXiv preprint arXiv:2111.08565}, 2021.

\bibitem{skoulakis2021evolutionary}
Stratis Skoulakis, Tanner Fiez, Ryann Sim, Georgios Piliouras, and Lillian
  Ratliff.
\newblock Evolutionary game theory squared: Evolving agents in endogenously
  evolving zero-sum games.
\newblock In {\em AAAI Conference on Artificial Intelligence}, 2021.

\bibitem{liu2021approximation}
Zhengyang Liu, Jiawei Li, and Xiaotie Deng.
\newblock On the approximation of nash equilibria in sparse win-lose
  multi-player games.
\newblock In {\em Proceedings of the AAAI Conference on Artificial
  Intelligence}, volume~35, pages 5557--5565, 2021.

\bibitem{irfan2014influence}
Mohammad~T Irfan and Luis~E Ortiz.
\newblock On influence, stable behavior, and the most influential individuals
  in networks: A game-theoretic approach.
\newblock {\em Artificial Intelligence}, 215:79--119, 2014.

\bibitem{nalebuff1997co}
Barry~J Nalebuff and Adam~M Brandenburger.
\newblock Co-opetition: Competitive and cooperative business strategies for the
  digital economy.
\newblock {\em Strategy \& leadership}, 1997.

\bibitem{wang2013advantage}
Yulan Wang, Baozhuang Niu, and Pengfei Guo.
\newblock On the advantage of quantity leadership when outsourcing production
  to a competitive contract manufacturer.
\newblock {\em Production and Operations Management}, 22(1):104--119, 2013.

\bibitem{bigliardi2011successful}
Barbara Bigliardi, Alberto~Ivo Dormio, and Francesco Galati.
\newblock Successful co-opetition strategy: evidence from an italian
  consortium.
\newblock {\em International Journal of Business, Management and Social
  Sciences}, 2(4):1--8, 2011.

\bibitem{chen2018coopetition}
Huiru Chen, Yingchen Yan, Nana Ma, and Lu~Yang.
\newblock Coopetition strategy and pricing timing in an outsourcing supply
  chain with uncertain operation risks.
\newblock {\em IEEE Transactions on Fuzzy Systems}, 27(5):979--993, 2018.

\bibitem{gurnani2007impact}
Haresh Gurnani, Murat Erkoc, and Yadong Luo.
\newblock Impact of product pricing and timing of investment decisions on
  supply chain co-opetition.
\newblock {\em European Journal of Operational Research}, 180(1):228--248,
  2007.

\bibitem{trapp2020maritime}
Andrew~C Trapp, Irina Harris, Vasco~Sanchez Rodrigues, and Joseph Sarkis.
\newblock Maritime container shipping: Does coopetition improve cost and
  environmental efficiencies?
\newblock {\em Transportation Research Part D: Transport and Environment},
  87:102507, 2020.

\bibitem{eriksson2010partnering}
Per~Erik Eriksson.
\newblock Partnering: what is it, when should it be used, and how should it be
  implemented?
\newblock {\em Construction management and economics}, 28(9):905--917, 2010.

\bibitem{loch2006balancing}
Christoph~H Loch, D~Charles Galunic, and Susan Schneider.
\newblock Balancing cooperation and competition in human groups: The role of
  emotional algorithms and evolution.
\newblock {\em Managerial and Decision Economics}, 27(2-3):217--233, 2006.

\bibitem{bengtsson2016systematic}
Maria Bengtsson and Tatbeeq Raza-Ullah.
\newblock A systematic review of research on coopetition: Toward a multilevel
  understanding.
\newblock {\em Industrial Marketing Management}, 57:23--39, 2016.

\bibitem{bohnenblust1950}
LS~Shapley, S~Karlin, and HF~Bohnenblust.
\newblock Solutions of discrete, two-person games.
\newblock {\em Contributions to the Theory of Games}, 1:51--72, 1950.

\bibitem{rag}
T.~E.~S. Raghavan.
\newblock Completely mixed strategies in bimatrix games.
\newblock {\em Journal of the London Mathematical Society}, 2(4):709--712,
  1970.

\bibitem{Kreps1974}
V.~L. Kreps.
\newblock Bimatrix games with unique equilibrium points.
\newblock {\em International Journal of Game Theory}, 3(2):115--118, Jun 1974.

\bibitem{rosen65}
J.~B. Rosen.
\newblock Existence and uniqueness of equilibrium points for concave n-person
  games.
\newblock {\em Econometrica}, 33(3):520--534, 1965.

\bibitem{appa2002uniqueness}
Gautam Appa.
\newblock On the uniqueness of solutions to linear programs.
\newblock {\em Journal of the Operational Research Society}, 53(10):1127--1132,
  2002.

\bibitem{mangasarian1979uniqueness}
OL~Mangasarian.
\newblock Uniqueness of solution in linear programming.
\newblock {\em Linear Algebra and its Applications}, 25:151--162, 1979.

\bibitem{Kantor}
Isaiah~L. Kantor.
\newblock Description of the optimal solution set of the linear programming
  problem and the dimension formula.
\newblock {\em Linear Algebra and its Applications}, 179:19 -- 32, 1993.

\bibitem{Tavas}
Onur Tavasl{\i}o{\u{g}}lu, Taewoo Lee, Silviya Valeva, and Andrew~J. Schaefer.
\newblock On the structure of the inverse-feasible region of a linear program.
\newblock {\em Operations Research Letters}, 46(1):147 -- 152, 2018.

\bibitem{molloy}
Timothy~L. Molloy, Jason~J. Ford, and Tristan Perez.
\newblock Inverse noncooperative differential games.
\newblock In {\em 2017 IEEE 56th Annual Conference on Decision and Control
  (CDC)}, pages 5602--5608, 2017.

\bibitem{dynnamiccontrol}
Timothy Molloy, Jason Ford, and Tristan Perez.
\newblock Inverse noncooperative dynamic games.
\newblock In D~Dochain, D~Henrion, and D~Peaucelle, editors, {\em Proceedings
  of the 20th World Congress of the International Federation of Automatic
  Control, IFAC 2017 (IFAC-PapersOnLine, Volume 50, Issue 1)}, pages
  11788--11793. IFAC (International Federation of Automatic Control), France,
  2017.

\bibitem{zhang}
Zhenhua Zhang, Yao Li, and Chengpu Yu.
\newblock Online inverse identification of noncooperative dynamic games.
\newblock In {\em 2021 IEEE International Conference on Unmanned Systems
  (ICUS)}, pages 408--413, 2021.

\bibitem{succinct}
Volodymyr Kuleshov and Okke Schrijvers.
\newblock Inverse game theory: Learning utilities in succinct games.
\newblock In Evangelos Markakis and Guido Sch{\"a}fer, editors, {\em Web and
  Internet Economics}, pages 413--427, Berlin, Heidelberg, 2015. Springer
  Berlin Heidelberg.

\bibitem{rationalise}
Kevin Waugh, Brian~D. Ziebart, and J.~Andrew Bagnell.
\newblock Computational rationalization: The inverse equilibrium problem.
\newblock ICML'11, page 1169–1176, Madison, WI, USA, 2011. Omnipress.

\bibitem{ng}
Andrew~Y. Ng and Stuart~J. Russell.
\newblock Algorithms for inverse reinforcement learning.
\newblock In {\em Proceedings of the Seventeenth International Conference on
  Machine Learning}, ICML '00, page 663–670, San Francisco, CA, USA, 2000.
  Morgan Kaufmann Publishers Inc.

\bibitem{myerson91gametheorybook}
Roger~B. Myerson.
\newblock {\em Game Theory: Analysis of Conflict}.
\newblock Harvard University Press, 1991.

\bibitem{sykulski10}
Adam Sykulski, Archie Chapman, Enrique Munoz~de Cote, and Nick Jennings.
\newblock Ea^2: The winning strategy for the inaugural lemonade stand game
  tournament.
\newblock {\em Frontiers in Artificial Intelligence and Applications}, 215, 01
  2010.

\bibitem{bertsimas97}
Dimitris Bertsimas and John Tsitsiklis.
\newblock {\em Introduction to Linear Optimization}.
\newblock Athena Scientific, 1st edition, 1997.

\bibitem{jansen}
M.~J.~M. Jansen.
\newblock Maximal nash subsets for bimatrix games.
\newblock {\em Naval Research Logistics Quarterly}, 28(1):147--152, 1981.

\bibitem{karlin}
Samuel Karlin.
\newblock {\em Matrix Games, Programming, and Mathematical Economics},
  chapter~3, pages 63 -- 81.
\newblock Pergamon, 1959.

\bibitem{millham}
C.~B. Millham.
\newblock Constructing bimatrix games with special properties.
\newblock {\em Naval Research Logistics Quarterly}, 19(4):709--714, 1972.

\bibitem{Kreps1981}
V.~L. Kreps.
\newblock Finite n-person non-cooperative games with unique equilibrium points.
\newblock {\em International Journal of Game Theory}, 10(3):125--129, Sep 1981.

\bibitem{heuer75}
G.~A. Heuer.
\newblock {On Completely Mixed Strategies in Bimatrix Games}.
\newblock {\em Journal of the London Mathematical Society}, s2-11(1):17--20, 09
  1975.

\bibitem{jansenregularity}
M.~J.~M. Jansen.
\newblock Regularity and stability of equilibrium points of bimatrix games.
\newblock {\em Mathematics of Operations Research}, 6(4):530--550, 1981.

\end{thebibliography}

\newpage
\appendix

\section{Proofs}

We denote the expected utility of player $i$ by \begin{equation*}
    \widehat{U}_i(\textbf{x}_t,\textbf{y}_t,\textbf{z}_t)_{t=1}^{T} = \mathbb{E}[U_i(\textbf{x}_t,\textbf{y}_t,\textbf{z}_t)_{t=1}^{T}]
\end{equation*}

We further denote the maximum and minimum single-shot utility that can be attained by player $i$ as $U_i^{\text{max}} \text{ and }  U_i^{\text{min}}$.

\begin{lemma}
\label{lem1}
Assume the manipulator uses a winning dominance solvable policy. Within $T$ rounds of the game, let $T_2$ be the number of rounds player 2 plays her strictly dominant action and $T_3$ be the number of rounds player 3 plays her strictly dominant action. If $ \; \mathbb{P}\big(\lim_{T \rightarrow \infty}{\frac{T_2}{T} = 1}\big) = 1$ and $\mathbb{P}\big(\lim_{T \rightarrow \infty}{\frac{T_3}{T} = 1}\big) = 1$ then 
\begin{align*}
& \mathbb{P}\Big(U_1(\textbf{x}_t,\textbf{y}_t,\textbf{z}_t)_{t=1}^{\infty} \geq U_2(\textbf{x}_t,\textbf{y}_t,\textbf{z}_t)_{t=1}^{\infty} \text{ and } U_1(\textbf{x}_t,\textbf{y}_t,\textbf{z}_t)_{t=1}^{\infty} \geq U_3(\textbf{x}_t,\textbf{y}_t,\textbf{z}_t)_{t=1}^{\infty} \Big) = 1
\end{align*}
\end{lemma}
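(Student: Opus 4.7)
The plan is to reduce the infinite-horizon comparison to a single-shot comparison at the profile $(\mathbf{e}_{i^*}, \mathbf{e}_{j^*}, \mathbf{e}_{k^*})$ by showing that each player's time-averaged utility converges almost surely to its single-shot value at that profile, and then invoking the winning condition in system \eqref{eq:4}.

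First I would exploit the stationarity of a dominance solvable policy: by definition the manipulator plays the same complete strategy $(\mathbf{e}_{i^*}, A^{(2,1)}, A^{(3,1)})$ in every round, so in particular the constant manipulation cost enters each round's payoff identically and can be absorbed into $u_1$. Let $B_T \subseteq \{1, \ldots, T\}$ be the set of rounds in which either player 2 does not play $\mathbf{e}_{j^*}$ or player 3 does not play $\mathbf{e}_{k^*}$. A union bound gives $|B_T| \leq (T - T_2) + (T - T_3)$, and therefore under the two hypotheses $T_2/T \to 1$ and $T_3/T \to 1$ almost surely, we have $|B_T|/T \to 0$ on the intersection of these two probability-$1$ events, which itself has probability $1$.

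Next I would decompose, for player $1$,
\begin{equation*}
U_1(\mathbf{x}_t,\mathbf{y}_t,\mathbf{z}_t)_{t=1}^{T} = \frac{T-|B_T|}{T}\, u_1(\mathbf{e}_{i^*},\mathbf{e}_{j^*},\mathbf{e}_{k^*}) + \frac{1}{T}\sum_{t \in B_T} u_1(\mathbf{x}_t,\mathbf{y}_t,\mathbf{z}_t),
\end{equation*}
where the first summand is exact because in every "good" round the triple of pure actions is exactly $(\mathbf{e}_{i^*},\mathbf{e}_{j^*},\mathbf{e}_{k^*})$. Each summand in the bad-round term is bounded in absolute value by some constant $M_1 < \infty$ depending only on the fixed matrices and the (constant) manipulation cost, so the residual is at most $M_1 |B_T|/T$, which tends to $0$ almost surely. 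Hence $U_1(\mathbf{x}_t,\mathbf{y}_t,\mathbf{z}_t)_{t=1}^{\infty} = u_1(\mathbf{e}_{i^*},\mathbf{e}_{j^*},\mathbf{e}_{k^*})$ almost surely. The identical argument applied to players $2$ and $3$ (using bounds $M_2, M_3$ on their per-round payoffs) yields $U_i^{\infty} = u_i(\mathbf{e}_{i^*},\mathbf{e}_{j^*},\mathbf{e}_{k^*})$ a.s.\ for $i=2,3$.

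Finally I would combine these three almost-sure limits on their common intersection (still probability $1$) with the defining inequalities of a \emph{winning} dominance solvable policy, namely system \eqref{eq:4}, to conclude $U_1^\infty \geq U_2^\infty$ and $U_1^\infty \geq U_3^\infty$ on an event of probability $1$.

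The only real subtlety is bookkeeping the almost-sure events: the hypothesis gives two separate a.s.\ convergences, and the convergence of the bad-round residual depends on both, so one must explicitly intersect the two probability-$1$ events before taking limits. Beyond that, the proof is essentially a dominated-average argument using the uniform boundedness of single-shot payoffs in a finite game.
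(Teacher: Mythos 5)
Your proposal is correct and follows essentially the same route as the paper: both identify the rounds in which the target profile $(\mathbf{e}_{i^*},\mathbf{e}_{j^*},\mathbf{e}_{k^*})$ is played, show their fraction tends to $1$ almost surely via the intersection of the two hypothesis events, use uniform boundedness of single-shot payoffs to make the remaining rounds' contribution vanish (the paper phrases this as a sandwich between $U_p^{\max}$ and $U_p^{\min}$, you as a bounded bad-set residual), and then invoke the winning condition of system \eqref{eq:4}. The differences are purely presentational.
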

\begin{proof}
If $T_1$ was the number of rounds in which both player 2 and 3 play their respective strictly dominant actions,\\ then $\lim_{T \rightarrow \infty}{\frac{T_2}{T} = 1}$ and $\lim_{T \rightarrow \infty}{\frac{T_3}{T} = 1}$ implies $\lim_{T \rightarrow \infty}{\frac{T_1}{T} = 1}$
Hence,
\begin{align*}
& \mathbb{P}\big(\lim_{T \rightarrow \infty}{\frac{T_1}{T} = 1}\big) \geq \mathbb{P}\big(\lim_{T \rightarrow \infty}{\frac{T_2}{T} = 1} \text{ and } \lim_{T \rightarrow \infty}{\frac{T_3}{T} = 1}\big) = 1
\end{align*}
We now give upper and lower bounds for the utility for of every player $p \in \mathcal{N}$ within $T$ rounds. We then show that the long-run utility converges to $\widehat{U}_p(\textbf{e}_{i^*},\textbf{e}_{j^*},\textbf{e}_{k^*})$ using a sandwiching argument. For player $p \in \mathcal{N}$,

\begin{align*}
& \frac{T_1}{T}[\widehat{U}_p(\textbf{e}_{i^*},\textbf{e}_{j^*},\textbf{e}_{k^*}) - U_p^{\text{max}}] + U_p^{\text{max}} \\
& = \frac{T_1}{T}\widehat{U}_p(\textbf{e}_{i^*},\textbf{e}_{j^*},\textbf{e}_{k^*}) + \frac{T-T_1}{T}U_p^{\text{max}}  \\
& \geq \frac{ T_1\widehat{U}_p(\textbf{e}_{i^*},\textbf{e}_{j^*},\textbf{e}_{k^*}) + \sum_{1 \leq t \leq T : \textbf{y}_t \neq \textbf{e}_{j^*} \textup{ or } \textbf{z}_t \neq \textbf{e}_{k^*}}{  U_p(\textbf{e}_{i^*},\textbf{y}_t,\textbf{z}_t)}}{T}  \\
& = U_p(\textbf{x}_t,\textbf{y}_t,\textbf{z}_t)_{t=1}^{T}   \\
& = \frac{ T_1\widehat{U}_p(\textbf{e}_{i^*},\textbf{e}_{j^*},\textbf{e}_{k^*}) + \sum_{1 \leq t \leq T : \textbf{y}_t \neq \textbf{e}_{j^*} \textup{ or } \textbf{z}_t \neq \textbf{e}_{k^*}}{  U_p(\textbf{e}_{i^*},\textbf{y}_t,\textbf{z}_t)}}{T}  \\
& \geq \frac{T_1}{T}\widehat{U}_p(\textbf{e}_{i^*},\textbf{e}_{j^*},\textbf{e}_{k^*}) + \frac{T-T_1}{T}U_p^{\text{min}}  \\
& \geq \frac{T_1}{T}[\widehat{U}_p(\textbf{e}_{i^*},\textbf{e}_{j^*},\textbf{e}_{k^*}) - U_p^{\text{min}}] + U_p^{\text{min}} 
\end{align*}
Assume $\lim_{T \rightarrow \infty}{\frac{T_1}{T}} = 1$. Then by sandwich theorem we have,
\begin{align*}
& \lim_{T \rightarrow \infty}U_p(\textbf{x}_t,\textbf{y}_t,\textbf{z}_t)_{t=1}^{T} = \widehat{U}_p(\textbf{e}_{i^*},\textbf{e}_{j^*},\textbf{e}_{k^*})
\end{align*}
This implies that for player $p$,
\begin{align*}
& \mathbb{P}\left(U_p(\textbf{x}_t,\textbf{y}_t,\textbf{z}_t)_{t=1}^{\infty} = \widehat{U}_p(\textbf{e}_{i^*},\textbf{e}_{j^*},\textbf{e}_{k^*})\right) \geq \mathbb{P}\Big(\lim_{T \rightarrow \infty}{\frac{T_1}{T} = 1}\Big) = 1
\end{align*}
Since the strategy profile in question is a winning dominance solvable policy,
\begin{align*}
& \widehat{U}_1(\textbf{e}_{i^*},\textbf{e}_{j^*},\textbf{e}_{k^*}) \geq \widehat{U}_2(\textbf{e}_{i^*},\textbf{e}_{j^*},\textbf{e}_{k^*}) \text{ and } \widehat{U}_1(\textbf{e}_{i^*},\textbf{e}_{j^*},\textbf{e}_{k^*}) \geq \widehat{U}_3(\textbf{e}_{i^*},\textbf{e}_{j^*},\textbf{e}_{k^*})    
\end{align*}
which implies 
\begin{align*}
& \mathbb{P}\Big(U_1(\textbf{x}_t,\textbf{y}_t,\textbf{z}_t)_{t=1}^{\infty} \geq U_2(\textbf{x}_t,\textbf{y}_t,\textbf{z}_t)_{t=1}^{\infty} \text{ and } U_1(\textbf{x}_t,\textbf{y}_t,\textbf{z}_t)_{t=1}^{\infty} \geq U_3(\textbf{x}_t,\textbf{y}_t,\textbf{z}_t)_{t=1}^{\infty} \Big)  \\
& \geq \mathbb{P}\Big(U_1(\textbf{x}_t,\textbf{y}_t,\textbf{z}_t)_{t=1}^{\infty} = \widehat{U}_1(\textbf{e}_{i^*},\textbf{e}_{j^*},\textbf{e}_{k^*}), U_2(\textbf{x}_t,\textbf{y}_t,\textbf{z}_t)_{t=1}^{\infty} = \widehat{U}_2(\textbf{e}_{i^*},\textbf{e}_{j^*},\textbf{e}_{k^*}) \text{ and } \\
& U_3(\textbf{x}_t,\textbf{y}_t,\textbf{z}_t)_{t=1}^{\infty} = \widehat{U}_3(\textbf{e}_{i^*},\textbf{e}_{j^*},\textbf{e}_{k^*})\Big) = 1
\end{align*}
\end{proof}

\begin{proof}[Proof of Theorem \ref{thm1}]
\label{prf1}
Since player 1 plays a winning dominance solvable type-\RN{1} policy, player 1 plays $\textbf{e}_{i^*}$ every round. Further player 2 and player 3 each have strictly dominant strategies $\textbf{e}_{j^*}$ and $\textbf{e}_{k^*}$ against the fixed strategy $\textbf{e}_{i^*}$ of player 1. Let $T_2$ be the number of rounds within $T$ rounds that player 2 plays action $j^*$. Similarly let $T_3$ be the number of rounds within $T$ rounds that player 3 plays action $k^*$. Since both player 2 and 3 are consistent, this implies $\mathbb{P}\big(\lim_{T \rightarrow \infty}{\frac{T_2}{T} = 1}\big) = 1$ and $\mathbb{P}\big(\lim_{T \rightarrow \infty}{\frac{T_3}{T} = 1}\big) = 1$.\\   

By Lemma \ref{lem1}, 
\begin{align*}
& \mathbb{P}\Big(U_1(\textbf{x}_t,\textbf{y}_t,\textbf{z}_t)_{t=1}^{\infty} \geq U_2(\textbf{x}_t,\textbf{y}_t,\textbf{z}_t)_{t=1}^{\infty} \text{ and }  U_1(\textbf{x}_t,\textbf{y}_t,\textbf{z}_t)_{t=1}^{\infty} \geq U_3(\textbf{x}_t,\textbf{y}_t,\textbf{z}_t)_{t=1}^{\infty} \Big) = 1
\end{align*}
\end{proof}
For the following proof we assume the single-shot utilities of each player is bounded between $-1$ and $1$.
\begin{proof}[Proof of Theorem \ref{thm2}]
Since player 1 plays a winning dominance solvable type-\RN{2} policy, player 1 plays $\textbf{e}_{i^*}$ every round. Let $T_2$ be the number of rounds player 2 plays her strictly dominant action $j^*$ within $T$ rounds. Then $\mathbb{P}\Big(\lim_{T \rightarrow \infty}{\frac{T_2}{T}} = 1\Big) = 1$. That is, the event $\mathcal{E}$, "For any $\epsilon>0$, there exists a $T_0(\epsilon)$ such that for all $T > T_0(\epsilon)$, $\frac{T_2}{T} > 1 - \epsilon$" holds almost surely.\\

Set \[\epsilon_0 = \frac{\widehat{U}_3(\textbf{e}_{i^*},\textbf{e}_{j^*},\textbf{e}_{k^*})-\widehat{U}_3(\textbf{e}_{i^*},\textbf{e}_{j^*},\textbf{e}_{k^{**}})}{4}\] where $k^{**}$ is the second best response to the strategies $\textbf{e}_{i^*}$ and $\textbf{e}_{j^*}$ of player 1 and 2. Now assuming the event $\mathcal{E}$ holds and that $T>T_0(\epsilon_0)$, we bound the utility of player 3, and show that $k^*$ eventually becomes the best action in hindsight for player 3.  

\begin{align*}
& 2\epsilon + \widehat{U}_3(\textbf{e}_{i^*},\textbf{e}_{j^*},\textbf{e}_k)\\
& \geq \epsilon[U_3^{\text{max}} - \widehat{U}_3(\textbf{e}_{i^*},\textbf{e}_{j^*},\textbf{e}_k) ] + \widehat{U}_3(\textbf{e}_{i^*},\textbf{e}_{j^*},\textbf{e}_k)\\
& > \frac{T-T_2}{T}[U_3^{\text{max}} - \widehat{U}_3(\textbf{e}_{i^*},\textbf{e}_{j^*},\textbf{e}_k) ] + \widehat{U}_3(\textbf{e}_{i^*},\textbf{e}_{j^*},\textbf{e}_k) \\
& = \frac{T_2}{T}\widehat{U}_3(\textbf{e}_{i^*},\textbf{e}_{j^*},\textbf{e}_k) + \frac{T-T_2}{T}U_3^{\text{max}}  \\
& \geq \frac{ T_2\widehat{U}_3(\textbf{e}_{i^*},\textbf{e}_{j^*},\textbf{e}_k) + \sum_{1 \leq t \leq T : \textbf{y}_t \neq \textbf{e}_{j^*}}{  U_3(\textbf{e}_{i^*},\textbf{y}_t,\textbf{e}_k)}}{T}  \\
& = U_3(\textbf{x}_t,\textbf{y}_t,\textbf{e}_k)_{t=1}^{T}   \\
& = \frac{ T_2\widehat{U}_3(\textbf{e}_{i^*},\textbf{e}_{j^*},\textbf{e}_k) + \sum_{1 \leq t \leq T : \textbf{y}_t \neq \textbf{e}_{j^*}}{  U_3(\textbf{e}_{i^*},\textbf{y}_t,\textbf{e}_k)}}{T}  \\
& \geq \frac{T_2}{T}\widehat{U}_3(\textbf{e}_{i^*},\textbf{e}_{j^*},\textbf{e}_k) + \frac{T-T_2}{T}U_3^{\text{min}}  \\
& = \frac{T_2}{T}[\widehat{U}_3(\textbf{e}_{i^*},\textbf{e}_{j^*},\textbf{e}_k) - U_3^{\text{min}}] + U_3^{\text{min}} \\
& > (1-\epsilon)[\widehat{U}_3(\textbf{e}_{i^*},\textbf{e}_{j^*},\textbf{e}_k) - U_3^{\text{min}}] + U_3^{\text{min}} \\
& = \widehat{U}_3(\textbf{e}_{i^*},\textbf{e}_{j^*},\textbf{e}_k) -\epsilon[\widehat{U}_3(\textbf{e}_{i^*},\textbf{e}_{j^*},\textbf{e}_k) - U_3^{\text{min}}] \\
& \geq \widehat{U}_3(\textbf{e}_{i^*},\textbf{e}_{j^*},\textbf{e}_k) -2\epsilon
\end{align*}

It can be shown that for our choice of $\epsilon_0$, this makes $k^*$ the unique best response in hindsight from round $T_0(\epsilon_0) + 1$ onwards. This implies \[\mathbb{P}\Big(\textbf{e}_{k^*} = \argmax_{\textbf{z} \in \Delta_l}{U_3(\textbf{e}_{i^*},\textbf{y}_{t},\textbf{z})_{t=1}^{T}} \text{ eventually}\Big)=1\] Suppose $T_3$ is the number of rounds player 3 plays action $k^*$ within $T$ rounds. Now since player 3 is persistent, $\mathbb{P}\Big(\lim_{T\rightarrow\infty}{\frac{T_3}{T}} = 1\Big)=1$. In conclusion, by Lemma \ref{lem1}, 
\begin{align*}
& \mathbb{P}\Big(U_1(\textbf{x}_t,\textbf{y}_t,\textbf{z}_t)_{t=1}^{\infty} \geq U_2(\textbf{x}_t,\textbf{y}_t,\textbf{z}_t)_{t=1}^{\infty} \text{ and }  U_1(\textbf{x}_t,\textbf{y}_t,\textbf{z}_t)_{t=1}^{\infty} \geq U_3(\textbf{x}_t,\textbf{y}_t,\textbf{z}_t)_{t=1}^{\infty} \Big) = 1
\end{align*}
\end{proof}
\begin{lemma}
\label{lem5}
Winning dominance solvable type-\RN{1} policies can be found in polynomial time if they exist.
\end{lemma}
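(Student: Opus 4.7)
The plan is to show that finding a winning dominance solvable type-\RN{1} policy reduces to solving a polynomial-size family of linear feasibility problems, each of polynomial size, after which we invoke a standard polynomial-time LP solver. By Proposition \ref{prop1}, for any triple $(i^{*},j^{*},k^{*})$ the systems \eqref{eq:1} and \eqref{eq:2} are feasible in $A^{(2,1)},A^{(3,1)}$; what is not known in advance is whether there is a triple for which \eqref{eq:4} is \emph{also} satisfied. Hence the natural algorithm enumerates the $nml$ candidate triples and, for each, decides whether the conjunction of \eqref{eq:1}, \eqref{eq:2}, and \eqref{eq:4} admits a feasible solution in the unknowns $A^{(2,1)}\in\mathbb{R}^{m\times n}$, $A^{(3,1)}\in\mathbb{R}^{l\times n}$, $v_{2}\in\mathbb{R}^{l}$, $v_{3}\in\mathbb{R}^{m}$.

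First I would fix a triple $(i^{*},j^{*},k^{*})$ and rewrite the constraints as linear (in)equalities in these unknowns, noting that $\mathbf{x}=\mathbf{e}_{i^{*}}$ makes $\mathbf{x}^{T}A^{(2,1)}\mathbf{e}_{j}$ the single entry $A^{(2,1)}(j,i^{*})$, so the constraints are genuinely linear. The winning inequalities \eqref{eq:4} unfold into two further linear inequalities once one substitutes the fixed pure actions and the alteration cost $\sum_{(i,j)\in P}\|A^{(i,j)}-A_{0}^{(i,j)}\|_{\infty}$; the infinity norm can be linearised by introducing two scalar variables $d_{2},d_{3}$ together with the $2nm+2nl$ inequalities $-d_{2}\le A^{(2,1)}(j,i)-A_{0}^{(2,1)}(j,i)\le d_{2}$ and analogously for $A^{(3,1)}$. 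The total number of variables and constraints is $O(nm+nl+m+l)$, i.e.\ polynomial in the input.

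The only subtlety is that the dominance conditions contain \emph{strict} inequalities, which an LP solver cannot handle directly. I would handle this by the standard slack trick: replace every strict inequality $\alpha<\beta$ by $\alpha+\delta\le\beta$, introduce a single scalar slack $\delta\ge 0$, and maximise $\delta$ subject to the (now weak) constraints. The original strict system is feasible iff the optimum value is strictly positive; equivalently, one may check feasibility at some $\delta=\delta_{0}>0$ chosen small enough relative to the bit-complexity of the input, which is a polynomial-size modification. Using any polynomial-time LP algorithm (e.g.\ the ellipsoid or interior-point method), each of the $O(nml)$ feasibility LPs is solved in time polynomial in $n,m,l$ and in the bit-size of $A_{0}$, and we return the first feasible solution as the desired policy.

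The main obstacle I anticipate is not the enumeration or the LP call itself but the careful treatment of the strict inequalities together with the infinity-norm term in the winning condition: one must ensure that the auxiliary slack $\delta$ and the cost variables $d_{2},d_{3}$ interact correctly so that a strictly feasible solution of the original system is recovered, and that the enumeration space is exactly $O(nml)$ triples rather than something exponential (which is why it is essential that only pure actions $i^{*},j^{*},k^{*}$ need to be considered, a consequence of how dominance solvable type-\RN{1} policies are defined). Once these points are verified, polynomial-time solvability follows immediately.
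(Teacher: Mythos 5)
Your proposal is correct and follows essentially the same route as the paper's proof: enumerate the $nml$ candidate triples $(i^{*},j^{*},k^{*})$, and for each one solve a linear feasibility problem in the unknown matrix entries together with auxiliary variables $d_{2},d_{3}$ linearising the infinity-norm cost and value variables $v_{2},v_{3}$, with the winning condition encoded as $v_{2,k^{*}},v_{3,j^{*}}\le A^{(1,2)}_{i^{*},j^{*}}+A^{(1,3)}_{i^{*},k^{*}}-d_{2}-d_{3}$. The only cosmetic difference is your handling of the strict dominance inequalities (maximising a slack $\delta$ rather than fixing a margin $\epsilon$ in the constraints, as the paper does), which is if anything the more careful treatment.
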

\begin{proof}
For $(\textbf{e}_{i^*},\textbf{e}_{j^*},\textbf{e}_{k^*})$ with $i^* \in [n], j^* \in [m], k^* \in [l]$, we obtain the winning strategy as a feasible solution of one of the $n \times m \times l$ linear feasibility problems with the set of variables\\ $V = \{d_2,d_3,v_{2,k},v_{3,j}, A^{(2,1)}_{i,j}, A^{(3,1)}_{i,k} \; :\; i \in [n], \; j \in [m], \; k \in [l]\}$.\\
The linear constraints of each linear feasibility problem are:
\begin{align*}
& {|A^{(2,1)}(i,j) - A_0^{(2,1)}(i,j)|}{\leq d_2 \;\; \forall (i,j) \in [n]\times[m]}{}\\
& {|A^{(3,1)}(i,k) - A_0^{(3,1)}(i,k)|}{\leq d_3 \;\; \forall (i,k) \in [n]\times[l]}{}\\
& {A^{(2,1)}(i^*,j) + A^{(2,3)}_0(j,k)}{=  v_{2,k} \text{ for} \; \; j=j^* \; \; k \in [l]}{}\\
& {A^{(2,1)}(i^*,j) + A^{(2,3)}_0(j,k)}{\leq  v_{2,k} - \epsilon \text{ for} \; \; j\neq j^* \; \; k \in [l]}{}\\
& {A^{(3,1)}(i^*,k) + A^{(3,2)}_0(j,k)}{=  v_{3,j} \text{ for} \; \; k= k^* \; \; j \in [m]}{}\\
& {A^{(3,1)}(i^*,k) + A^{(3,2)}_0(j,k)}{\leq  v_{3,j} - \epsilon \text{ for} \; \; k\neq k^* \; \; j \in [m]}{}\\
& {v_{2,k^*}}{ \leq A^{(1,2)}_{i^*,j^*} + A^{(1,3)}_{i^*,k^*} - d_2 - d_3}{ }\\
& {v_{3,j^*}}{ \leq A^{(1,2)}_{i^*,j^*} + A^{(1,3)}_{i^*,k^*} - d_2 - d_3}{ }\\
\end{align*}
The first set of linear constraints are equivalent to $\|A^{(2,1)} - A_0^{(2,1)}\|_{\infty} \leq d_2$ and the second set is equivalent to $\|A^{(3,1)} - A_0^{(3,1)}\|_{\infty} \leq d_3$. If we let $\textbf{x}  = \textbf{e}_{i^*}$ then the third and fourth set of constraints correspond to system \eqref{eq:1}, and the fifth and sixth set of constraints correspond to system \eqref{eq:2}. Note that $v_{2,k^*} = \widehat{U}_2(\textbf{e}_{i^*},\textbf{e}_{j^*},\textbf{e}_{k^*})$ and $v_{3,j^*} = \widehat{U}_3(\textbf{e}_{i^*},\textbf{e}_{j^*},\textbf{e}_{k^*})$ are the respective payoffs of the strategy profile that players converge to. Therefore the second last and last constraint imply that 
\begin{align*}
& \widehat{U}_2(\textbf{e}_{i^*},\textbf{e}_{j^*},\textbf{e}_{k^*}) ,\; \widehat{U}_3(\textbf{e}_{i^*},\textbf{e}_{j^*},\textbf{e}_{k^*}) = v_{2,k^*} ,\; v_{3,j^*} \leq A^{(1,2)}_{i^*,j^*} + A^{(1,3)}_{i^*,k^*} - d_2 - d_3 \\
& \leq A^{(1,2)}_{i^*,j^*} + A^{(1,3)}_{i^*,k^*} - \|A^{(2,1)} - A_0^{(2,1)}\|_{\infty} - \|A^{(3,1)} - A_0^{(3,1)}\|_{\infty} = \widehat{U}_1(\textbf{e}_{i^*},\textbf{e}_{j^*},\textbf{e}_{k^*})
\end{align*}
That is, the final two set of constraints ensure that the strategy profile and matrices give rise to a winning policy for player 1. Therefore a strategy that satisfies the linear constraints is a winning dominance solvable policy.\\
The run-time for executing the sequence of linear feasibility problems is $\mathcal{O}(nml(n^3m^3+l^3n^3))$ and therefore the existence of an algorithm that has running time polynomial in the number of actions of the players is proven.
\end{proof}
\begin{lemma}
\label{lem2}
Winning dominance solvable Type-\RN{2} policies can be found in polynomial time if they exist.
\end{lemma}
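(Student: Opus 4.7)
The plan is to mirror the construction used in Lemma~\ref{lem5}, replacing the role of system~\eqref{eq:2} by system~\eqref{eq:3}. Recall that a Type-\RN{2} policy differs from a Type-\RN{1} policy only in how player~3's dominance is enforced: instead of requiring $\textbf{e}_{k^*}$ to be strictly dominant for player~3 against $\textbf{e}_{i^*}$ and \emph{every} action of player~2, we only require dominance against $\textbf{e}_{i^*}$ and the specific action $\textbf{e}_{j^*}$. Since the defining conditions differ only by this relaxation, essentially the same enumeration-and-linear-feasibility template should work.

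Concretely, I would enumerate over all $n \cdot m \cdot l$ triples $(i^*, j^*, k^*) \in [n] \times [m] \times [l]$ and, for each triple, set up a linear feasibility problem whose variables are the entries of $A^{(2,1)}$ and $A^{(3,1)}$, the modification-cost slacks $d_2, d_3$, the auxiliary vector $v_2 \in \mathbb{R}^l$ encoding the payoffs of player~2 under system~\eqref{eq:1}, and a single scalar $v_0$ encoding player~3's payoff under system~\eqref{eq:3}. The constraints are: (i) the linearizations of $\|A^{(2,1)} - A_0^{(2,1)}\|_\infty \leq d_2$ and $\|A^{(3,1)} - A_0^{(3,1)}\|_\infty \leq d_3$, exactly as in Lemma~\ref{lem5}; (ii) the equalities and strict inequalities of system~\eqref{eq:1} with $\textbf{x} = \textbf{e}_{i^*}$, again identical to Lemma~\ref{lem5}; (iii) the equalities and strict inequalities of system~\eqref{eq:3} with $\textbf{x} = \textbf{e}_{i^*}$ and $\textbf{y} = \textbf{e}_{j^*}$, which yield only $l$ constraints rather than $lm$ (this is the sole substantive difference from the Type-\RN{1} setup); and (iv) the winning inequalities $v_{2, k^*} \leq A^{(1,2)}_{i^*, j^*} + A^{(1,3)}_{i^*, k^*} - d_2 - d_3$ and $v_0 \leq A^{(1,2)}_{i^*, j^*} + A^{(1,3)}_{i^*, k^*} - d_2 - d_3$ that force $(\textbf{e}_{i^*}, \textbf{e}_{j^*}, \textbf{e}_{k^*})$ to be a winning profile for player~1 after accounting for manipulation cost. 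Strict inequalities are handled by the standard $\epsilon$-slack trick as in Lemma~\ref{lem5}.

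I would then argue correctness in two directions: any feasible solution yields a complete strategy that satisfies systems~\eqref{eq:1} and~\eqref{eq:3} and is winning (hence a winning Type-\RN{2} policy by definition), and conversely any winning Type-\RN{2} policy provides a feasible solution for the linear program indexed by its associated triple $(i^*, j^*, k^*)$. Hence a winning Type-\RN{2} policy exists iff at least one of the $nml$ linear feasibility problems is feasible.

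For the running time, each feasibility problem has $O(nm + nl)$ variables and $O(nm + nl)$ constraints, so each one is solved in time polynomial in $n, m, l$ (using the cubic-in-variables bound cited for Lemma~\ref{lem5}). Since the outer loop has $nml$ iterations, the overall running time is polynomial in the number of actions of the players. The only thing to watch for is that the relaxation from~\eqref{eq:2} to~\eqref{eq:3} does not break the winning guarantee invoked in the policy class definition, but this is immediate since Type-\RN{2} winning policies are defined precisely to satisfy systems~\eqref{eq:1} and~\eqref{eq:3} along with the winning inequalities, so no additional argument beyond translating the definition into linear constraints is needed. The main (very mild) obstacle is bookkeeping the different shape of the $v_0$ variable compared to $v_3$ in the Type-\RN{1} proof, but no new technical ideas are required.
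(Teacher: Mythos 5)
Your proposal is correct and follows essentially the same route as the paper: enumerate the $nml$ triples, reuse the Lemma~\ref{lem5} feasibility template, and swap the system~\eqref{eq:2} constraints for the $l$ constraints of system~\eqref{eq:3} with a scalar payoff variable for player~3 (the paper calls it $v_3$ where you call it $v_0$). No substantive difference.
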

\begin{proof}
For $(\textbf{e}_{i^*},\textbf{e}_{j^*},\textbf{e}_{k^*})$ with $i^* \in [n], j^* \in [m], k^* \in [l]$, we obtain the winning strategy as a feasible solution of one of the $n \times m \times l$ linear feasibility problems with the set of variables\\ $V = \{d_2,d_3,v_{2,k},v_3, A^{(2,1)}_{i,j}, A^{(3,1)}_{i,k} \; :\; i \in [n], \; j \in [m], \; k \in [l]\}$:\\
The linear constraints of each linear feasibility problem are:
\begin{align*}
& {|A^{(2,1)}(i,j) - A_0^{(2,1)}(i,j)|}{\leq d_2 \;\; \forall (i,j) \in [n]\times[m]}{}\\
& {|A^{(3,1)}(i,k) - A_0^{(3,1)}(i,k)|}{\leq d_3 \;\; \forall (i,k) \in [n]\times[l]}{}\\
& {A^{(2,1)}(i^*,j) + A^{(2,3)}_0(j,k)}{=  v_{2,k} \text{ for} \; \; j=j^* \; \; k \in [l]}{}\\
& {A^{(2,1)}(i^*,j) + A^{(2,3)}_0(j,k)}{\leq  v_{2,k} - \epsilon \text{ for} \; \; j\neq j^* \; \; k \in [l]}{}\\
& {A^{(3,1)}(i^*,k) + A^{(3,2)}_0(j^*,k)}{=  v_{3} \text{ for} \; \; k= k^*}{}\\
& {A^{(3,1)}(i^*,k) + A^{(3,2)}_0(j^*,k)}{\leq  v_{3} - \epsilon \text{ for} \; \; k\neq k^*}{}\\
& {v_{2,k^*}}{ \leq A^{(1,2)}_{i^*,j^*} + A^{(1,3)}_{i^*,k^*} - d_2 - d_3}{ }\\
& {v_{3,j^*}}{ \leq A^{(1,2)}_{i^*,j^*} + A^{(1,3)}_{i^*,k^*} - d_2 - d_3}{ }\\
\end{align*}
The only difference in analysis from the linear constraints from Lemma \ref{lem5} and this one is that the 5th and 6th set of constraints in this lemma correspond to system \eqref{eq:3} instead of system \eqref{eq:2}. The rest of the analysis is identical.
\end{proof}
\begin{proof}[Proof of Theorem \ref{thm3}]
The proof is an immediate consequence of Lemma \ref{lem5} and Lemma \ref{lem2}.
\end{proof}

For the following proof we assume the single-shot utilities of each player is bounded between $-1$ and $1$.
\begin{proof}[Proof of Theorem \ref{thm4}]
Let 
\begin{align*}
& a_1 = \widehat{U}_1(\textbf{e}_{i_2},\textbf{e}_{j_2},\textbf{e}_{k_2}) \;\;\; a_2 = \widehat{U}_1(\textbf{e}_{i_3},\textbf{e}_{j_3},\textbf{e}_{k_3}) \\
& b_1 = \widehat{U}_2(\textbf{e}_{i_2},\textbf{e}_{j_2},\textbf{e}_{k_2}) \;\;\; b_2 = \widehat{U}_2(\textbf{e}_{i_3},\textbf{e}_{j_3},\textbf{e}_{k_3}) \\
& c_1 = \widehat{U}_3(\textbf{e}_{i_2},\textbf{e}_{j_2},\textbf{e}_{k_2}) \;\;\; c_2 = \widehat{U}_3(\textbf{e}_{i_3},\textbf{e}_{j_3},\textbf{e}_{k_3}) 
\end{align*}

We focus our attention to the first half of the game (rounds $1,...,T$) and derive analogous results for the second half (rounds $T+1,...,2T$) using a symmetry argument. First, player 2 and 3 are no-regret, which implies both of them are consistent (by Proposition \ref{prop2}). Note that when player 1 plays a dominance solvable type-\RN{1} policy in the first half, the fraction of time player 2 and 3 play their respective strictly dominant strategies converges to 1 (almost surely) since they are consistent. Let $T_1$ be the number of rounds where player 2 plays action $j_2$ and player 3 plays action $k_2$ within $T$ rounds. Then $\mathbb{P}(\lim_{T \rightarrow \infty}\frac{T_1}{T} = 1) = 1$.\\ So $\frac{T_1}{T} \overset{a.s.}{\to} 1$. That is, the event $\mathcal{E}$, "For any $\epsilon>0$, there exists a $T_0(\epsilon)$ such that for all $T > T_0(\epsilon)$, $\frac{T_1}{T} > 1 - \epsilon$" holds almost surely.\\

Set 
\begin{align*}
& \epsilon_0 = \frac{1}{6}\Big(a_1+a_2 - \max\big\{b_1+b_2, c_1+c_2\big\}\Big)
\end{align*}

Now assuming the event $\mathcal{E}$ holds and that $T>T_0(\epsilon_0)$, we bound the utility of player 1.
\begin{align*}
& a_1 + 2\epsilon_0 \\
& \geq \epsilon_0[U_1^{\text{max}} + a_1 ] + a_1 \\ 
& > \frac{T - T_1}{T}[U_1^{\text{max}} +a_1 ] + a_1 \\ 
& = \frac{T_1}{T}a_1 + \frac{T-T_1}{T}U_1^{\text{max}}  \\
& \geq \frac{ T_1a_1 + \sum_{1 \leq t \leq T : \textbf{y}_t \neq \textbf{e}_{j^*} \textup{ or } \textbf{z}_t \neq \textbf{e}_{k^*}}{  U_1(\textbf{e}_{i^*},\textbf{y}_t,\textbf{z}_t)}}{T}  \\
& = U_1(\textbf{x}_t,\textbf{y}_t,\textbf{z}_t)_{t=1}^{T}   \\
& = \frac{ T_1a_1 + \sum_{1 \leq t \leq T : \textbf{y}_t \neq \textbf{e}_{j^*} \textup{ or } \textbf{z}_t \neq \textbf{e}_{k^*}}{  U_1(\textbf{e}_{i^*},\textbf{y}_t,\textbf{z}_t)}}{T}  \\
& \geq \frac{T_1}{T}a_1 + \frac{T-T_1}{T}U_1^{\text{min}}  \\
& \geq \frac{T_1}{T}[a_1 - U_1^{\text{min}}] + U_1^{\text{min}} \\
& > a_1 - \epsilon_0(a_1 - U_1^{\text{min}})   \\
& > a_1 - 2\epsilon_0  \\
\end{align*}
which renders 
\[2\epsilon_0 > \big|U_1(\textbf{x}_t,\textbf{y}_t,\textbf{z}_t)_{t=1}^{T} - a_1\big| \]
We can derive bounds on the utility of player 2 and 3 in an identical fashion. This would give
\[2\epsilon_0 > \big|U_2(\textbf{x}_t,\textbf{y}_t,\textbf{z}_t)_{t=1}^{T} - b_1\big| \]
\[2\epsilon_0 > \big|U_3(\textbf{x}_t,\textbf{y}_t,\textbf{z}_t)_{t=1}^{T} -c_1\big| \]
Using a symmetric argument, we obtain the following bounds for the second half of the game:
\[2\epsilon_0 > \big|U_1(\textbf{x}_t,\textbf{y}_t,\textbf{z}_t)_{t=T}^{2T} - a_2\big| \]
\[2\epsilon_0 > \big|U_2(\textbf{x}_t,\textbf{y}_t,\textbf{z}_t)_{t=T}^{2T} - b_2\big| \]
\[2\epsilon_0 > \big|U_3(\textbf{x}_t,\textbf{y}_t,\textbf{z}_t)_{t=T}^{2T} - c_2\big| \]
Combining the bounds for the first half and second half, we obtain
\[2\epsilon_0 > \big|U_1(\textbf{x}_t,\textbf{y}_t,\textbf{z}_t)_{t=1}^{2T} - \frac{a_1+a_2}{2}\big| \]
\[2\epsilon_0 > \big|U_2(\textbf{x}_t,\textbf{y}_t,\textbf{z}_t)_{t=1}^{2T} - \frac{b_1+b_2}{2}\big| \]
and \[2\epsilon_0 > \big|U_3(\textbf{x}_t,\textbf{y}_t,\textbf{z}_t)_{t=1}^{2T} - \frac{c_1+c_2}{2}\big| \]
It can be shown that our choice of $\epsilon_0$ gives us 
\begin{align*}
& U_1(\textbf{x}_t,\textbf{y}_t,\textbf{z}_t)_{t=1}^{2T} \geq U_2(\textbf{x}_t,\textbf{y}_t,\textbf{z}_t)_{t=1}^{2T} \text{ and } \\
& U_1(\textbf{x}_t,\textbf{y}_t,\textbf{z}_t)_{t=1}^{2T} \geq U_3(\textbf{x}_t,\textbf{y}_t,\textbf{z}_t)_{t=1}^{2T}
\end{align*}
since the manipulator uses a winning batch coordination policy. 
In conclusion, 
\begin{align*}
&\mathbb{P}\bigg(\Big\{\omega \in \Omega : \exists T_0 = T_0(\omega) \in \mathbb{N} \; \forall T>T_0,\\
& U_1(\textbf{x}_t,\textbf{y}_t,\textbf{z}_t)_{t=1}^{T}(\omega) \geq U_2(\textbf{x}_t,\textbf{y}_t,\textbf{z}_t)_{t=1}^{T}(\omega) \text{ and } \\
& U_1(\textbf{x}_t,\textbf{y}_t,\textbf{z}_t)_{t=1}^{T}(\omega) \geq U_3(\textbf{x}_t,\textbf{y}_t,\textbf{z}_t)_{t=1}^{T}(\omega)\Big \} \bigg) = 1
\end{align*}

where $\Omega$ is the set of all action sequences that could possibly be played out by the three players from time $t=1$ to $\infty$. 

\end{proof}

\begin{proof}[Proof of Theorem \ref{thm5}]

For $(\textbf{e}_{i_1^*},\textbf{e}_{j_1^*},\textbf{e}_{k_1^*})$ and  $(\textbf{e}_{i_2^*},\textbf{e}_{j_2^*},\textbf{e}_{k_2^*})$ with $i_1^*,i_2^* \in [n], j_1^*,j_2^* \in [m], k_1^*,k_2^* \in [l]$, we obtain the winning strategy as a feasible solution of one of the $n^2 \times m^2 \times l^2$ linear feasibility problems with the set of variables\\ $V = \{\hat{d}_2,\hat{d}_3,\tilde{d}_2,\tilde{d}_3,\hat{v}_{2,k},\hat{v}_{3,j},\tilde{v}_{2,k},\tilde{v}_{3,j} \hat{A}^{(2,1)}_{i,j}, \hat{A}^{(3,1)}_{i,k},\tilde{A}^{(2,1)}_{i,j}, \tilde{A}^{(3,1)}_{i,k} \; :\; i \in [n], \; j \in [m], \; k \in [l]\}$:\\
The linear constraints of each linear feasibility problem are:
\begin{align*}
& {|\hat{A}^{(2,1)}(i,j) - A_0^{(2,1)}(i,j)|}{\leq \hat{d}_2 \;\; \forall (i,j) \in [n]\times[m]}{}\\
& {|\hat{A}^{(3,1)}(i,k) - A_0^{(3,1)}(i,k)|}{\leq \hat{d}_3 \;\; \forall (i,k) \in [n]\times[l]}{}\\
& {\hat{A}^{(2,1)}(i_1^*,j) + A^{(2,3)}_0(j,k)}{=  \hat{v}_{2,k} \text{ for} \; \; j=j_1^*\; \; k \in [l]}{}\\
& {\hat{A}^{(2,1)}(i_1^*,j) + A^{(2,3)}_0(j,k)}{\leq  \hat{v}_{2,k} - \epsilon \text{ for} \; \; j\neq j_1^* \; \; k \in [l]}{}\\
& {\hat{A}^{(3,1)}(i_1^*,k) + A^{(3,2)}_0(j,k)}{=  \hat{v}_{3,j} \text{ for} \; \; k= k_1^* \; \; j \in [m]}{}\\
& {\hat{A}^{(3,1)}(i_1^*,k) + A^{(3,2)}_0(j,k)}{\leq  \hat{v}_{3,j} - \epsilon \text{ for} \; \; k\neq k_1^* \; \; j \in [m]}{}\\
\end{align*}
\begin{align*}
& {|\tilde{A}^{(2,1)}(i,j) - A_0^{(2,1)}(i,j)|}{\leq \tilde{d}_2 \;\; \forall (i,j) \in [n]\times[m]}{}\\
& {|\tilde{A}^{(3,1)}(i,k) - A_0^{(3,1)}(i,k)|}{\leq \tilde{d}_3 \;\; \forall (i,k) \in [n]\times[l]}{}\\
& {\tilde{A}^{(2,1)}(i_2^*,j) + A^{(2,3)}_0(j,k)}{=  \tilde{v}_{2,k} \text{ for} \; \; j=j_2^* \; \; k \in [l]}{}\\
& {\tilde{A}^{(2,1)}(i_2^*,j) + A^{(2,3)}_0(j,k)}{\leq  \tilde{v}_{2,k} - \epsilon \text{ for} \; \; j\neq j_2^* \; \; k \in [l]}{}\\
& {\tilde{A}^{(3,1)}(i_2^*,k) + A^{(3,2)}_0(j,k)}{=  \tilde{v}_{3,j} \text{ for} \; \; k= k_2^* \; \; j \in [m]}{}\\
& {\tilde{A}^{(3,1)}(i_2^*,k) + A^{(3,2)}_0(j,k)}{\leq  \tilde{v}_{3,j} - \epsilon \text{ for} \; \; k\neq k_2^* \; \; j \in [m]}{}\\    
\end{align*}
\begin{align*}
&\hat{v}_{2,k_1^*}+\tilde{v}_{2,k_2^*} +\epsilon \leq \hat{A}^{(1,2)}_{i_1^*,j_1^*} + \hat{A}^{(1,3)}_{i_1^*,k_1^*} - \hat{d}_2 - \hat{d}_3\\
& + \tilde{A}^{(1,2)}_{i_2^*,j_2^*} + \tilde{A}^{(1,3)}_{i_2^*,k_2^*} - \tilde{d}_2 - \tilde{d}_3\\
& \hat{v}_{3,j_1^*} + \tilde{v}_{3,j_2^*} +\epsilon\leq \hat{A}^{(1,2)}_{i_1^*,j_1^*} + \hat{A}^{(1,3)}_{i_1^*,k_1^*} - \hat{d}_2 - \hat{d}_3 \\
& + \tilde{A}^{(1,2)}_{i_2^*,j_2^*} + \tilde{A}^{(1,3)}_{i_2^*,k_2^*} - \tilde{d}_2 - \tilde{d}_3      
\end{align*}

The first and second six set of constraints represents the linear constraints required to find the game matrices for the first and second respective halves of the game. Note that in the last two constraints $\hat{A}^{(1,2)}_{i_1^*,j_1^*} + \hat{A}^{(1,3)}_{i_1^*,k_1^*} - \hat{d}_2 - \hat{d}_3$ is the single shot utility for player 1 in the first half, $\hat{v}_{2,k_1^*}$ is the single-shot utility for player 2 in the first half and  $\hat{v}_{2,k_1^*}$ is the single-shot utility for player 3 in the first half under the strategy profile $(\textbf{e}_{i_1},\textbf{e}_{j_1},\textbf{e}_{k_1})$. Similarly the single-shot utilities for the three players in the second half under the strategy profile $(\textbf{e}_{i_2},\textbf{e}_{j_2},\textbf{e}_{k_2})$ can be identified. 
Therefore, the last two constraints represents the constraint required for the manipulator to win, from the definition of batch coordination policy which is
\begin{align*}
&\widehat{U}_1(\textbf{e}_{i_2},\textbf{e}_{j_2},\textbf{e}_{k_2})+ \widehat{U}_1(\textbf{e}_{i_3},\textbf{e}_{j_3},\textbf{e}_{k_3})  >\widehat{U}_i(\textbf{e}_{i_3},\textbf{e}_{j_3},\textbf{e}_{k_3}) + \widehat{U}_i(\textbf{e}_{i_3},\textbf{e}_{j_3},\textbf{e}_{k_3})
\end{align*}
\end{proof}

\begin{lemma}
\label{lem3}
If a player is persistent, then she is consistent.
\end{lemma}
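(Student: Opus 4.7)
The plan is to show that the hypothesis required by the consistency property is strictly stronger than the one required by the persistence property, so that the persistence guarantee immediately delivers the consistency guarantee. Concretely, assume the agent is persistent, and suppose the antecedent of the consistency definition holds: there exists an action $a^*$ that is the unique per-round best response to the opposing players' realized strategies in every round $t$.

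The main technical step is to lift this per-round condition to the hindsight condition appearing in the persistence definition. Fixing the realized sequences $(\textbf{x}_t, \textbf{y}_t)$ of the opposing players, the per-round hypothesis provides, for every round $t$ and every action $a \neq a^*$, a strict inequality between the instantaneous utility obtained by $a^*$ and the instantaneous utility obtained by $a$. Summing finitely many strict inequalities over any prefix $1 \leq t \leq T$ preserves strictness, so $a^*$ is the unique maximizer of the hindsight cumulative utility $U(\textbf{x}_t, \textbf{y}_t, \textbf{e}_a)_{t=1}^{T}$ for every $T \geq 1$. In particular, the event appearing in the persistence definition is satisfied pathwise with the explicit witness $T_h = 1$, and therefore has probability $1$.

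Having verified the hypothesis of persistence with $k^* = a^*$, I would apply the persistence property to conclude that $\mathbb{P}\bigl(\lim_{T \to \infty} T^*/T = 1\bigr) = 1$, where $T^*$ counts the number of rounds in which the agent plays $a^*$ up to time $T$. This is exactly the conclusion of the consistency definition, completing the argument. The proof is short and presents no serious obstacle; the only point requiring care is recognizing that uniqueness of the per-round best response is what makes each per-round inequality strict, and that strictness is precisely what guarantees summation yields a unique hindsight maximizer rather than merely a tied one.
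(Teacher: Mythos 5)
Your proof is correct and follows essentially the same route as the paper's: both arguments observe that a unique per-round best response is automatically the unique best response in hindsight from round $1$ onwards (you make the summation of strict inequalities explicit, which the paper calls ``trivially'' satisfied), and then invoke the persistence guarantee to obtain the consistency conclusion.
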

\begin{proof}
Assume player 3 is persistent. Suppose that there exists an action $k^*$ for player 3 that is the unique best response for her for every round of the game. Suppose that within $T$ rounds of the game the number of rounds she plays action $k^*$ is $T_3$. Note that \[\mathbb{P}\Big(\textbf{e}_{k^*} = \argmax_{\textbf{z} \in \Delta_l}{U_3(\textbf{x}_{t},\textbf{y}_{t},\textbf{z})_{t=1}^{T}} \text{ eventually}\Big)=1\] is satisfied trivially since $k^*$ is the unique best response for every round, so it the unique best response in hindsight from round 1 onwards. Since player 3 is persistent, this implies \[\mathbb{P}\Big(\lim_{T\rightarrow\infty}{\frac{T^*}{T}} = 1\Big)=1\]
\end{proof}

\begin{lemma}
\label{lem4}
If a player is no-regret, then she is persistent.
\end{lemma}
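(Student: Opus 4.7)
The plan is to combine the sublinear regret guarantee with the assumption that some action $k^*$ eventually becomes the unique best response in hindsight, and show that any non-vanishing fraction of play on an action $k \neq k^*$ would contradict no-regret. Throughout, let $g_T(k) := \frac{1}{T}\sum_{t=1}^T u_t(k)$ denote the time-averaged payoff player $3$ would have obtained by always playing action $k$ (given the realised strategies of the other two players), and let $h_T := \frac{1}{T}\sum_{t=1}^T u_t(\textbf{z}_t)$ denote the realised average payoff. Write $\pi_T(k) := T_k/T$ for the empirical play frequency of action $k$, so that $T^{*}/T = \pi_T(k^{*})$.

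First, I would condition on a sample path in the probability-one event where there exist $k^{*} \in [l]$ and a finite $T_h$ such that, for every $T \geq T_h$, $k^{*}$ is the unique maximiser of $g_T(\cdot)$. By the no-regret property applied against each pure comparator $\textbf{e}_k$, we have $h_T \geq g_T(k) - \epsilon_T$ with $\epsilon_T \to 0$ almost surely; specialising to $k = k^{*}$ gives $h_T \geq g_T(k^{*}) - \epsilon_T$.

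Second, I would express $h_T$ in terms of the empirical play frequencies. An Azuma--Hoeffding argument on the martingale increments $u_t(\textbf{z}_t) - \sum_k z_t(k) u_t(k)$ (which are bounded under the standing assumption that utilities are bounded) shows that $h_T$ coincides, up to an $o(1)$ term almost surely, with $\frac{1}{T}\sum_t \sum_k z_t(k) u_t(k)$. Exchanging the order of summation and relating the mixed weights $z_t(k)$ to realised empirical frequencies $\pi_T(k)$, I would argue that $h_T = \sum_k \pi_T(k)\, g_T(k) + o(1)$ almost surely. Substituting into the no-regret bound yields
$$
\sum_{k \neq k^{*}} \pi_T(k) \bigl[\, g_T(k^{*}) - g_T(k) \,\bigr] \; \leq \; \epsilon_T + o(1),
$$
and since each bracket is strictly positive for $T \geq T_h$ under the persistence event, this forces $\pi_T(k) \to 0$ for every $k \neq k^{*}$. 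Hence $\pi_T(k^{*}) \to 1$ almost surely, which is precisely the definition of a persistent agent.

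The main obstacle lies in the second step: the identity $h_T = \sum_k \pi_T(k)\, g_T(k)$ is not exact in general, because the player's strategies $\textbf{z}_t$ may correlate over time with the per-action hypothetical payoffs $u_t(k)$. The persistence assumption is doing significant work here, since it guarantees that the time-averaged per-action payoffs $g_T(k)$ stabilise enough (with a well-defined eventual ranking) to make the decomposition valid up to a vanishing error term. A fully rigorous argument may require separating the case in which the gap $\min_{k \neq k^{*}}[g_T(k^{*}) - g_T(k)]$ is uniformly bounded below (where the contradiction is immediate) from the case where the gap tends to zero (where a more delicate averaging or subsequence argument is needed to convert the no-regret bound into pointwise concentration on $k^{*}$).
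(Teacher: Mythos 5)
Your overall strategy matches the paper's: use no-regret to force the realized long-run average payoff to coincide with that of the eventual unique best action $k^{*}$, then use uniqueness of $k^{*}$ to force its play frequency to one. However, the central step of your argument --- the decomposition $h_T = \sum_k \pi_T(k)\, g_T(k) + o(1)$ --- is false in general, and you acknowledge as much without repairing it. The rounds on which action $k$ is played can correlate arbitrarily with the rounds on which $u_t(k)$ is large (a player who plays action $1$ exactly on the rounds where $u_t(1)=1$ and action $2$ otherwise realizes an average payoff far above $\sum_k \pi_T(k)\,g_T(k)$), and nothing in the hypothesis that $k^{*}$ is eventually the unique best response \emph{in hindsight} makes the per-action averages $g_T(k)$ stabilize in the way you claim; saying ``the persistence assumption is doing significant work here'' is an assertion, not an argument. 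Your second acknowledged gap is also real: even granting the decomposition, the differences $g_T(k^{*})-g_T(k)$ are only guaranteed strictly positive for $T \geq T_h$, not bounded away from zero, so $\sum_{k\neq k^{*}}\pi_T(k)\bigl[g_T(k^{*})-g_T(k)\bigr] \to 0$ does not by itself yield $\pi_T(k)\to 0$.

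The paper's own proof avoids the frequency-weighted decomposition entirely. It first upgrades the eventual-best-in-hindsight hypothesis to the statement that $\textbf{e}_{k^{*}}$ maximizes the infinite-horizon average utility, then applies Markov's inequality to the nonnegative, zero-expectation limiting regret (together with a countable intersection over the events $\mathcal{B}_n$) to conclude that the realized long-run utility equals $U_3(\textbf{x}_t,\textbf{y}_t,\textbf{e}_{k^{*}})_{t=1}^{\infty}$ almost surely, and only then argues that a non-vanishing play frequency on actions other than $k^{*}$ would strictly depress the long-run utility. That final step is itself stated tersely in the paper and is where the real content lies, but the route through equality of long-run utilities at least sidesteps your first gap. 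As written, your proposal does not constitute a proof: both obstacles you flag at the end are genuine and neither is resolved.
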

\begin{proof}
Suppose player 3 is no-regret. Assume that \[\mathbb{P}\Big(\textbf{e}_{k^*} = \argmax_{\textbf{z} \in \Delta_l}{U_3(\textbf{x}_{t},\textbf{y}_{t},\textbf{z})_{t=1}^{T}} \text{ eventually}\Big)=1\] This implies \[\mathbb{P}\Big(\textbf{e}_{k^*} = \argmax_{\textbf{z} \in \Delta_l}{U_3(\textbf{x}_{t},\textbf{y}_{t},\textbf{z})_{t=1}^{\infty}} \Big)=1\]
By Markov's inequality for any $\epsilon > 0$.
\begin{align*}
& \mathbb{P}\left(U_3(\textbf{x}_t,\textbf{y}_{t},\textbf{e}_{k^*})_{t=1}^{\infty} - U_3(\textbf{x}_t,\textbf{y}_t,\textbf{z}_t)_{t=1}^{\infty} \leq \epsilon\right) \\
& = \mathbb{P}\left(\max_{z \in \Delta_l}[U_3(\textbf{x}_t,\textbf{y}_{t},\textbf{z})_{t=1}^{\infty}] - U_3(\textbf{x}_t,\textbf{y}_t,\textbf{z}_t)_{t=1}^{\infty} \leq \epsilon\right) \\
& > 1 - \frac{\mathbb{E}\left(\max_{z \in \Delta_l}[U_3(\textbf{x}_t,\textbf{y}_{t},\textbf{z})_{t=1}^{\infty}] - U_3(\textbf{x}_t,\textbf{y}_t,\textbf{z}_t)_{t=1}^{\infty}\right)}{\epsilon} \\
& = 1 - \frac{0}{\epsilon} = 1
\end{align*}
where the first equality follows from the definition of a No-Regret Algorithm.
Define \[\mathcal{B}_n := \left\{U_3(\textbf{x}_t,\textbf{y}_{t},\textbf{e}_{k^*})_{t=1}^{\infty} - U_3(\textbf{x}_t,\textbf{y}_t,\textbf{z}_t)_{t=1}^{\infty} \leq 1/n \right\}\] We know that $\mathbb{P}(\mathcal{B}_n) = 1$ for all $n \in \mathbb{N}$. Further, 
\begin{align*}
& \mathbb{P}\left(U_3(\textbf{x}_t,\textbf{y}_t,\textbf{z}_t)_{t=1}^{\infty} = U_3(\textbf{x}_t,\textbf{y}_{t},\textbf{e}_{k^*})_{t=1}^{\infty}\right) = \mathbb{P}\left(\lim_{n \rightarrow \infty}{\mathcal{B}_n}\right) \\
&  = \mathbb{P}\left(\bigcap_{n \in \mathbb{N}}\mathcal{B}_n\right) = 1 - \mathbb{P}\left(\bigcap_{n \in \mathbb{N}}{\mathcal{B}_n}^c\right) = 1 - \mathbb{P}\left(\bigcup_{n \in \mathbb{N}}{\mathcal{B}_n^c}\right)\\
&  \geq 1 - \sum_{n \in \mathbb{N}}{\mathbb{P}\left({\mathcal{B}_n^c}\right)}  = 1 - \sum_{n \in \mathbb{N}}{1 - \mathbb{P}\left({\mathcal{B}_n}\right)} = 1\\
\end{align*}

Assume $U_3(\textbf{x}_t,\textbf{y}_t,\textbf{z}_t)_{t=1}^{\infty} = U_3(\textbf{x}_t,\textbf{y}_{t},\textbf{e}_{k^*})_{t=1}^{\infty}$. Let $T_3$ be the number of rounds player 3 plays action $k^*$ within $T$ rounds. Then $\lim_{T \rightarrow \infty}{\frac{T_3}{T}} = 1$, otherwise we arrive at a contradiction that the long-run utility of P3 is less than $U_3(\textbf{x}_t,\textbf{y}_{t},\textbf{e}_{k^*})_{t=1}^{\infty}$. This is because $k^*$ is the unique best action for P3.

In conclusion, 
\begin{align*}
& \mathbb{P}\Big(\lim_{T \rightarrow \infty}{\frac{T_3}{T}} = 1 \Big) \\
& \geq \mathbb{P}\Big(\textbf{e}_{k^*} = \argmax_{\textbf{z} \in \Delta_l}{U_3(\textbf{x}_t,\textbf{y}_{t},\textbf{z})_{t=1}^{\infty}} \Big)=1
\end{align*}

\end{proof}

\begin{proof}[Proof of Proposition \ref{prop2}]
The proof is an immediate consequence of Lemma \ref{lem3} and Lemma \ref{lem4}
\end{proof}

\begin{proof}[Proof of Proposition \ref{prop3}]
It is a well known result that FTL is not a no-regret algorithm, this is proven in ~\cite{Nicolo06}. Now suppose player 3 uses FTL to play the game and that \[\mathbb{P}\Big(\textbf{e}_{k^*} = \argmax_{\textbf{z} \in \Delta_l}{U_3(\textbf{x}_{t},\textbf{y}_{t},\textbf{z})_{t=1}^{T}} \text{ eventually}\Big)=1\] That is, \[\mathbb{P}\Big(\exists T_0 \in \mathbb{N}: \forall T>T_0 \; \textbf{e}_{k^*} = \argmax_{\textbf{z} \in \Delta_l}{U_3(\textbf{x}_{t},\textbf{y}_{t},\textbf{z})_{t=1}^{T}}\Big)=1\] Therefore there exists a finite-time cutoff point after which FTL will only play $k^*$ for the rest of time (almost surely). This implies \[\mathbb{P}\Big(\lim_{T\rightarrow\infty}{\frac{T_3}{T}} = 1\Big)=1\] where $T_3$ is the number of rounds within $T$ rounds that player 3 played action $k^*$.
\end{proof}

\begin{proof}[Proof of Theorem \ref{thm6}]
Now we find the strategy profile and matrices associated with the largest margin dominance solvable policy.
For $(\textbf{e}_{i^*},\textbf{e}_{j^*},\textbf{e}_{k^*})$ with $i^* \in [n], j^* \in [m], k^* \in [l]$, we obtain them from the optimal solution of one of the $n \times m \times l$ linear programs with the set of variables\\ $V = \{d_2,d_3,v_0,v_{2,k},v_{3,j}, A^{(2,1)}_{i,j}, A^{(3,1)}_{i,k} \; :\; i \in [n], \; j \in [m], \; k \in [l]\}$:
\begin{maxi*}|s|<b>
{\scriptstyle V}{v_0}{}{}
\addConstraint {|A^{(2,1)}(i,j) - A_0^{(2,1)}(i,j)|}{\leq d_2 \;\; \forall (i,j) \in [n]\times[m]}{}
\addConstraint {|A^{(3,1)}(i,k) - A_0^{(3,1)}(i,k)|}{\leq d_3 \;\; \forall (i,k) \in [n]\times[l]}{}
\addConstraint {A^{(2,1)}(i^*,j) + A^{(2,3)}_0(j,k)}{=  v_{2,k} \text{ for} \; \; j=j^* \; \; k \in [l]}{}
\addConstraint {A^{(2,1)}(i^*,j) + A^{(2,3)}_0(j,k)}{\leq  v_{2,k} - \epsilon \text{ for} \; \; j\neq j^* \; \; k \in [l]}{}
\addConstraint {A^{(3,1)}(i^*,k) + A^{(3,2)}_0(j,k)}{=  v_{3,j} \text{ for} \; \; k= k^* \; \; j \in [m]}{}
\addConstraint {A^{(3,1)}(i^*,k) + A^{(3,2)}_0(j,k)}{\leq  v_{3,j} - \epsilon \text{ for} \; \; k\neq k^* \; \; j \in [m]}{}
\addConstraint {v_0}{ \leq A^{(1,2)}_{i^*,j^*} + A^{(1,3)}_{i^*,k^*} - d_2 - d_3 - v_{2,k^*}}{ }
\addConstraint {v_0}{ \leq A^{(1,2)}_{i^*,j^*} + A^{(1,3)}_{i^*,k^*} - d_2 - d_3 - v_{3,j^*}}{ }
\addConstraint {v_0}{ \geq 0}{ }
\end{maxi*}
The only difference in analysis from the linear constraints from Lemma \ref{lem5} and this one is that the last three set of constraints correspond maximizing the margin whilst winning instead of just winning. In the final set of constraints, $v_0$ represents a lower bound on the margin, and therefore the maximization objective is  $v_0$. We obtain the strategy and matrices from the optimal solution that has the greatest value for $v_0$ out of all the $n \times m \times l$ LPs.    
\end{proof}
If we remove the constraint $v_0 \geq 0$ from the above proof, then we obtain a winning dominance solvable policy with the largest margin if a winning dominance solvable policy exists. We obtain a large margin losing strategy otherwise. With the constraint $v_0 \geq 0$, the requirement of deriving a winning strategy is made explicit.

\begin{proof}[Proof of Theorem \ref{thm7}]
We can find the dominance solvable policy that wins with the lowest inefficiency ratio by running $n\times m\times l$ LPs, where the linear constraints are the ones used for winning dominance solvable type-\RN{1} policies, and the minimization objective is $f(V) = d_2 + d_3$. We then finding the optimal solution that minimizes \[\frac{f(V)}{A^{(1,2)}_{i^*,j^*} + A^{(1,3)}_{i^*,k^*}}\] and infer the associated strategy profile and matrices from the solution.
\end{proof}

\begin{proof}[Proof of Theorem \ref{thm8}]
Now we find the strategy profile and matrices associated with the dominance solvable policy that maximizes the egalitarian social welfare.
For $(\textbf{e}_{i^*},\textbf{e}_{j^*},\textbf{e}_{k^*})$ with $i^* \in [n], j^* \in [m], k^* \in [l]$, we obtain them from the optimal solution of one of the $n \times m \times l$ linear programs with the set of variables\\ $V = \{d_2,d_3,v_0,v_{2,k},v_{3,j}, A^{(2,1)}_{i,j}, A^{(3,1)}_{i,k} \; :\; i \in [n], \; j \in [m], \; k \in [l]\}$:
\begin{maxi*}|s|<b>
{\scriptstyle V}{v_0}{}{}
\addConstraint {|A^{(2,1)}(i,j) - A_0^{(2,1)}(i,j)|}{\leq d_2 \;\; \forall (i,j) \in [n]\times[m]}{}
\addConstraint {|A^{(3,1)}(i,k) - A_0^{(3,1)}(i,k)|}{\leq d_3 \;\; \forall (i,k) \in [n]\times[l]}{}
\addConstraint {A^{(2,1)}(i^*,j) + A^{(2,3)}_0(j,k)}{=  v_{2,k} \text{ for} \; \; j=j^* \; \; k \in [l]}{}
\addConstraint {A^{(2,1)}(i^*,j) + A^{(2,3)}_0(j,k)}{\leq  v_{2,k} - \epsilon \text{ for} \; \; j\neq j^* \; \; k \in [l]}{}
\addConstraint {A^{(3,1)}(i^*,k) + A^{(3,2)}_0(j,k)}{=  v_{3,j} \text{ for} \; \; k= k^* \; \; j \in [m]}{}
\addConstraint {A^{(3,1)}(i^*,k) + A^{(3,2)}_0(j,k)}{\leq  v_{3,j} - \epsilon \text{ for} \; \; k\neq k^* \; \; j \in [m]}{}
\addConstraint {v_0}{\leq A^{(1,2)}_{i^*,j^*} + A^{(1,3)}_{i^*,k^*} - d_2 - d_3}{ }
\addConstraint {v_0}{\leq v_{2,k^*}}
\addConstraint {v_0}{\leq v_{3,j^*}}{ }
\end{maxi*}

The analysis of the first six set of linear constraints are the same as that in Lemma \ref{lem5}. In the final three constraints we ensure that the utility of all three players are at least $v_0$. Hence $v_0$ is at most the egalitarian social welfare. We then maximize $v_0$ which maximizes the egalitarian social welfare since it is a lower bound on it.
\end{proof}

\section{Further related work}
\label{app:further related work}

As mentioned earlier, the problem of constructing zero-sum games with a pre-specified (strictly) dominant strategy is similar to designing games with unique minimax equilibrium. The latter was first considered by~\cite{bohnenblust1950}, where an algorithm for constructing a zero-sum game with a unique minimax equilibrium was provided. 

This problem of certifying the uniqueness of linear programming solutions has been studied within the optimisation community. Appa~\cite{appa2002uniqueness} provides a constructive method for verifying the uniqueness of an LP solution which requires solving an addition linear program. Mangasarian~\cite{mangasarian1979uniqueness} describes a number of conditions which guarantee the uniqueness of an LP solution. In fact, it is one of these conditions that we shall leverage to derive our methods for constructing zero-sum games with unique solutions. More generally, many other works deal with characterising the optimal solution sets of linear programs \cite{Kantor, Tavas}, but do not typically pay any special attention to uniqueness.

Note that, since the work of ~\cite{bohnenblust1950}, the uniqueness of Nash equilibrium points has also been studied extensively within the game theory literature~\cite{jansen, karlin, millham, Kreps1981, heuer75, jansenregularity}. For zero-sum games, a necessary and sufficient dimensionality relationship between the unique equilibrium strategy of each player was described in the seminal work of \cite{bohnenblust1950}. Subsequently, similar conditions have been derived for bimatrix games. \cite{rag} derived necessary and sufficient conditions for the existence of bimatrix games with a given unique completely mixed equilibrium. This work was extended by \cite{Kreps1974} to encompass all mixed strategies. More generally, for differentiable n-player concave games, \cite{rosen65} provided a sufficient condition for uniqueness known as diagonal strict concavity.



\end{document}